\newcommand{\xinyurevision}[1]{{#1}}
\newcommand{\tableattrs}{\textnormal{Attrs}}
\newcommand{\truepredicate}{\textnormal{True}}
\newcommand{\ruleleadsto}{\twoheadrightarrow}
\newcommand{\truevalue}{\texttt{T}}
\newcommand{\falsevalue}{\texttt{F}}
\newcommand{\unknownvalue}{\star}
\newcommand{\unknown}{\unknownvalue}
\newcommand{\XWnote}[1]{\marginnote{\scriptsize {\color{red}#1}}}
\newcommand{\evalfinding}[1]{
\vspace{1pt}
\begin{center}
\fcolorbox{black}{white}{\parbox{.96\linewidth}{
{#1}
}}
\vspace{1pt}
\end{center}
}
\newcommand{\tool}{\xspace\textsc{Polygon}\xspace}
\newcommand{\xdata}{\xspace\textsc{XData}\xspace}
\newcommand{\verieql}{\xspace\textsc{VeriEQL}\xspace}
\newcommand{\cubes}{\xspace\textsc{Cubes}\xspace}
\newcommand{\qex}{\xspace\textsc{Qex}\xspace}
\newcommand{\cosette}{\xspace\textsc{Cosette}\xspace}
\newcommand{\datafiller}{\xspace\textsc{DataFiller}\xspace}
\newcommand{\evosql}{\xspace\textsc{EvoSQL}\xspace}
\algnewcommand\Input{\hspace{-10pt}\textbf{input: }}
\algnewcommand\Output{\hspace{-10pt}\textbf{output: }}
\newcommand{\myprocedure}[1]{\hspace{-10pt}\textbf{procedure }{#1}}
\newcommand{\geninput}{\textsc{GenInput}}
\newcommand{\buildinputdatabase}{\textit{ExtractInputDB}}
\newcommand{\conflictdrivenUAsearch}{\textsc{ConflictDrivenUASearch}}
\newcommand{\fixconflict}{\textsc{ResolveConflict}}
\newcommand{\getmodel}{\textit{GetModel}}
\newcommand{\extractconflict}{\textit{ExtractConflict}}
\newcommand{\splitUAspace}{\textit{CoverUAs}}
\newcommand{\getTopUA}{\textit{GetTopUA}}
\newcommand{\getastnodes}{\textit{ASTNodes}}
\newcommand{\getop}{\textit{op}}
\newcommand{\getqueryop}{\getop}
\newcommand{\encode}{\encodesemantics}
\newcommand{\encodeuasemantics}{\encodesemantics}
\newcommand{\encodefullsemantics}{\textit{EncFullSemantics}}
\newcommand{\encodesemantics}{\textit{EncUASemantics}}
\newcommand{\encodechoice}{\textit{EncUAChoice}}
\newcommand{\encodeconflicts}{\textit{EncConflicts}}
\newcommand{\encodeelemchoice}{\textit{EncUAElemChoice}}
\newcommand{\encodeuavalue}{\textit{EncUAChoiceValue}}
\newcommand{\topUA}{\textit{TopUA}}
\newcommand{\encodingall}{\Phi}
\newcommand{\queryopencoding}{\Psi}
\newcommand{\queryoutput}{y}
\newcommand{\refines}{\sqsubseteq}
\newcommand{\subsumes}{\sqsupseteq}
\newcommand{\newpara}[1]{\vspace{5pt}\noindent\emph{\textbf{#1}}\hspace{4pt}}
\newcommand{\mydots}{\cdots}
\newcommand{\assign}{:=}
\newcommand{\doubleprime}{'\mkern-2mu'}
\newcommand{\worklist}{W}
\newcommand{\dom}{\textnormal{dom}}
\newcommand{\smtmodel}{\sigma}
\newcommand{\sqlquery}{\program}
\newcommand{\program}{P}
\newcommand{\query}{\sqlquery}
\newcommand{\queryop}{F}
\newcommand{\queryopinput}{x}
\newcommand{\queryopoutput}{y}
\newcommand{\astnodeinput}{x}
\newcommand{\astnodeoutput}{y}
\newcommand{\varUAvector}{z}
\newcommand{\labelappcond}{\textit{lc}}
\newcommand{\labelastnode}{\textit{l}\astnode}
\newcommand{\underapproxfamily}{\mathcal{U}}
\newcommand{\underapprox}{u}
\newcommand{\underapproxvar}{z}
\newcommand{\mapastnodetounderapprox}{M}
\newcommand{\astnode}{v}
\newcommand{\conflictASTnodes}{V}
\newcommand{\conflict}{\omega}
\newcommand{\conflicts}{\Omega}
\newcommand{\appcond}{C}
\newcommand{\inputdatabase}{I}
\newcommand{\mapprojection}{\downarrow}
\newcommand{\xinyu}[1]{{\color{red}{XW: #1}}}
\newcommand{\pinhan}[1]{{\color{blue}{#1}}}
\newcommand{\bigland}{\bigwedge}
\newcommand{\biglor}{\bigvee}
\newcommand{\irule}[2]%
   {\mkern-2mu\displaystyle\frac{#1}{\vphantom{,}#2}\mkern-2mu}
\newcommand{\expression}{E}
\newcommand{\attrlist}{L}
\newcommand{\proj}{\text{Project}}
\newcommand{\filter}{\text{Filter}}
\newcommand{\rename}{\text{Rename}}
\newcommand{\product}{\text{Product}}
\newcommand{\ijoin}{\text{IJoin}}
\newcommand{\ljoin}{\text{LJoin}}
\newcommand{\rjoin}{\text{RJoin}}
\newcommand{\fjoin}{\text{FJoin}}
\newcommand{\unionall}{\text{UnionAll}}
\newcommand{\distinct}{\text{Distinct}}
\newcommand{\groupby}{\text{GroupBy}}
\newcommand{\orderby}{\text{OrderBy}}
\newcommand{\with}{\text{With}}
\newcommand{\alljoin}{\otimes}
\newcommand{\allarith}{\diamond}
\newcommand{\alllogic}{\odot}
\newcommand{\nullv}{\text{Null}\xspace}
\newcommand{\mylist}[1]{\{#1\}}
\newcommand{\set}[1]{\{#1\}}
\newcommand{\formula}{\varphi}
\newcommand{\indicator}{\mathbb{I}}
\newcommand{\del}{\text{Del}\xspace}
\newcommand{\querybelongstocluster}{\text{b}\xspace}
\newcommand{\relation}{R}
\newcommand{\tuple}{t}
\newcommand{\predicate}{\phi}
\newcommand{\doublebrackets}[1]{\llbracket #1 \rrbracket}
\newcommand{\denot}[1]{\doublebrackets{#1}}
\newcommand{\cmdCopy}{\textsf{Copy}}
\newcommand{\cmdExist}{\textsf{Exist}}
\newcommand{\cmdCondDel}{\textsf{NotExist}}
\newcommand{\cmdNullTuples}{\textsf{NullTuples}}
\newcommand{\group}{\textsf{g}}
\newcommand{\vectorarray}[1]{%
[#1]
% \foreach \x in {#1} {[\x]}
    
    % \begin{tabular}{|*{3}{c|}}
    %     \hline
    %     \foreach \x in {#1} {\x } \\
    %     \hline
    % \end{tabular}%
}
\begin{document}

\title{$\tool$: Symbolic Reasoning for SQL using Conflict-Driven Under-Approximation Search}

\author{Pinhan Zhao}
\orcid{0009-0002-1149-0706}
\affiliation{%
  \institution{University of Michigan}
  \city{Ann Arbor}
  \country{USA}
}
\email{pinhan@umich.edu}

\author{Yuepeng Wang}
\orcid{0000-0003-3370-2431}
\affiliation{%
  \institution{Simon Fraser University}
  \city{Burnaby}
  \country{Canada}
}
\email{yuepeng@sfu.ca}

\author{Xinyu Wang}
\orcid{0000-0002-1836-0202}
\affiliation{%
  \institution{University of Michigan}
  \city{Ann Arbor}
  \country{USA}
}
\email{xwangsd@umich.edu}

\renewcommand{\shortauthors}{Pinhan Zhao, Yuepeng Wang, and Xinyu Wang}

\begin{abstract}

We present a novel symbolic reasoning engine for SQL which can efficiently generate an input $\inputdatabase$ for $n$ queries $\sqlquery_1, \mydots, \sqlquery_n$, such that their outputs on $\inputdatabase$ satisfy a given property (expressed in SMT). 
This is useful in different contexts, such as disproving equivalence of two SQL queries and disambiguating a set of queries. 
Our first idea is to reason about an \emph{under-approximation} of each $\sqlquery_i$---that is, a subset of $\sqlquery_i$'s input-output behaviors. 
While it makes our approach both semantics-aware and lightweight, this idea alone is incomplete (as a fixed under-approximation might miss some behaviors of interest). 
Therefore, our second idea is to perform \emph{search} over an \emph{expressive family} of under-approximations (which collectively cover all program behaviors of interest), thereby making our approach complete. 
We have implemented these ideas in a tool, $\tool$, and evaluated it on over 30,000 benchmarks across two tasks (namely, SQL equivalence refutation and query disambiguation). 
Our evaluation results show that $\tool$ significantly outperforms all prior techniques. 
%(including both testing-based approaches and those based on SMT ). 

\end{abstract}

\begin{CCSXML}
<ccs2012>
<concept>
<concept_id>10003752.10003790.10003794</concept_id>
<concept_desc>Theory of computation~Automated reasoning</concept_desc>
<concept_significance>500</concept_significance>
</concept>
<concept>
<concept_id>10003752.10010124.10010138</concept_id>
<concept_desc>Theory of computation~Program reasoning</concept_desc>
<concept_significance>500</concept_significance>
</concept>
<concept>
<concept_id>10011007.10010940.10010992.10010998</concept_id>
<concept_desc>Software and its engineering~Formal methods</concept_desc>
<concept_significance>500</concept_significance>
</concept>
</ccs2012>
\end{CCSXML}

\ccsdesc[500]{Theory of computation~Automated reasoning}
\ccsdesc[500]{Theory of computation~Program reasoning}
\ccsdesc[500]{Software and its engineering~Formal methods}

\keywords{Automated Reasoning, Testing, Databases.}

\maketitle

% \appxfalse
% \appxtrue

\section{Introduction}\label{sec:intro}

The general problem we study in this paper is the following.

\vspace{5pt}
\begin{center}
\parbox{.95\linewidth}{
Given $n$ programs $P_1, \mydots, P_n$, how to generate an input $I$ such that their outputs on $I$ satisfy a given property $\appcond$ (which is expressed as an SMT formula over variables $\queryoutput_1, \mydots, \queryoutput_n$). That is, $\appcond \big[ \queryoutput_1 \mapsto \sqlquery_1(I), \mydots, \queryoutput_n \mapsto \sqlquery_n(I) \big]$ is true. 
}
\end{center}
\vspace{5pt}

This problem can be viewed as a form of ``test input generation'' task, which can be instantiated to different applications with different \emph{application conditions} $\appcond$.
One example application is to disprove equivalence of two programs, with $\appcond$ being simply $\queryoutput_1 \neq \queryoutput_2$. 
Another is program disambiguation---e.g., find $I$ that can divide $n$ programs into disjoint groups, such that: 
(i) programs within the same group return the same output given $I$, while 
(ii) outputs of those from different groups are distinct. 
Finding such distinguishing inputs is crucial for example-based program synthesis~\cite{jha2010oracle,mayer2015user,ji2020question}
%Finding such distinguishing inputs is crucial for interactive program synthesis (especially from input-output examples)~\cite{jha2010oracle,mayer2015user,ji2020question}, to reduce the user specification burden. 

\newpara{Instantiation to SQL.}
We focus on one instantiation of this general problem, where $\sqlquery_i$'s are written in SQL (which is a widely used domain-specific language).
The aforementioned applications still hold---generating counterexamples to refute SQL equivalence is useful in many ways~\cite{chandra2015data,chandra2019automated,he2024verieql,chu2017cosette}, and query disambiguation is critical for example-based SQL query synthesis ~\cite{brancas2022cubes,manquinho2024towards,wang2017interactive,wang2017synthesizing}.

\newpara{State-of-the-art.}
To the best of our knowledge, no existing work can \emph{efficiently} generate such inputs for an \emph{expressive} subset of SQL. 
While conventional testing-based approaches---e.g., those based on fuzzing~\cite{datafiller-website} and evolutionary search~\cite{castelein2018search}---can generate many inputs quickly, they do not consider the (highly complex) semantics of SQL.
As a result, they fall short of capturing subtle differences across queries, which is crucial for generating distinguishing inputs. 
While some works (e.g., $\xdata$~\cite{chandra2015data,chandra2019automated}) consider basic semantic information (such as join and selection conditions), they support a very limited subset of SQL and are specialized in equivalence checking.
On the other end of the spectrum, techniques based on formal methods~\cite{qex,chu2017cosette,wang2018speeding,he2024verieql,zhaodemonstration} perform symbolic reasoning---typically by encoding \emph{complete} query semantics in SMT. 
While boiling the problem down to SMT solving, these methods often create large SMT formulas that are computationally expensive to solve, especially for problems that involve many large queries with complex semantics. For instance, for operators like group-by and aggregation, fully encoding all grouping possibilities would cause an exponential blow-up in the resulting formula's size.

\newpara{Key challenge.}
While clearly critical to consider \emph{some} semantic information, it would significantly slow down the reasoning process if we consider the \emph{full} semantics. 
The core challenge hence is how to take into account SQL semantics in a way that is \emph{lightweight without hindering completeness.}
Prior works fall into two extremes of the spectrum: 
either (1) fully encoding semantics for all inputs thus heavyweight and not scalable, or 
(2) fast by considering no or very little semantic information but at the cost of missing inputs of interest frequently (i.e., incomplete).

\newpara{Our key insight.}
Our key insight is to reason about 
an \emph{under-approximation (or UA)} of the program.
This, while much faster than analyzing the full program semantics, is incomplete. 
Therefore, we also perform search over an \emph{expressive family of UAs} (which collectively cover all program behaviors of interest), thereby making our entire approach complete.

Specifically, given $\sqlquery_1, \mydots, \sqlquery_n$ and application condition $\appcond$, we begin with a UA for each $\sqlquery_i$ which encodes a subset $\mathbb{O}_i$ of \emph{reachable} outputs for $\sqlquery_i$ (together with their corresponding inputs). 
Here, an output is reachable, if it  can indeed be returned by the program on an input~\cite{o2019incorrectness}.
We then check if there exist $O_1 \in \mathbb{O}_1, \mydots, O_n \in \mathbb{O}_n$ such that $\appcond[\queryoutput_1 \mapsto O_1, \mydots, \queryoutput_n \mapsto O_n]$ is true.
If so, we can easily solve our problem by deriving the corresponding input given the under-approximate semantics (or UA semantics) of $\sqlquery_1, \mydots, \sqlquery_n$.
Otherwise, we try again but use a different choice of UA. 
This process terminates, when a desired UA is found or no such UA can be found for the given family of UAs.

\vspace{6pt}
\noindent
Below we briefly summarize the key challenges and our solutions. 
Section~\ref{sec:overview} will further illustrate our approach using a concrete example.

\xinyurevision{

\newpara{Under-approximating SQL query semantics.}
The first challenge is how to under-approximate a query $\sqlquery$. 
Our idea is to encode $\sqlquery$'s UA semantics in an SMT formula $\Psi$ whose models correspond to \emph{genuine} input-output pairs of $\sqlquery$.  
In other words, $\Psi$ always encodes reachable outputs.
On the other hand, not all reachable outputs are necessarily encoded by $\Psi$; therefore $\Psi$ is an under-approximation. 
Our first novelty lies in a \emph{compositional} method to build $\Psi$ for $\sqlquery$, by conjoining $\Psi_i$ for each AST node $\astnode_i$ of $\sqlquery$. 
Here, each $\Psi_i$ under-approximates the semantics of the query operator $\queryop$ at $\astnode_i$. 
%Furthermore, $\Psi_i$ for $\queryop$ is also built compositionally. The intuition is to encode part of $\queryop$'s output given each row from $\queryop$'s input table, and conjoin encodings across all rows to form $\Psi_i$. 

\newpara{Defining a family of under-approximations.}
This compositional encoding method lends itself well to addressing our second challenge---how to define a family of UAs for a query $\sqlquery$? 
Our idea is to first define a family $\underapproxfamily_{\queryop}$ of UAs for each query operator $\queryop$ in the query language, which collectively covers all inputs of interest to $\queryop$. 
Then, we define a \emph{UA map} $\mapastnodetounderapprox$ for $\sqlquery$, which maps each AST node $\astnode_i$ in $\sqlquery$ to some UA in $\underapproxfamily_{\getqueryop(\astnode_i)}$. 
Each UA map $\mapastnodetounderapprox$ can be encoded into an SMT formula $\Psi \assign \textit{Encode}(\mapastnodetounderapprox)$, by taking the conjunction of the encodings of $\mapastnodetounderapprox$'s entries. 
We can define a family of such UA maps, where each $\Psi$ under-approximates $\sqlquery$ in a different way. 
This approach can be further generalized to under-approximate $n$ queries, by extending the UA map $\mapastnodetounderapprox$ to include \emph{all} AST nodes \emph{across $n$ queries}. 
Conceptually, if $\encodingall \assign \Psi \land \appcond$ is satisfiable, we can derive a satisfying input $\inputdatabase$ from a satisfying assignment of $\encodingall$ that solves our reasoning task for $n$ queries.

\newpara{Conflict-driven under-approximation search.}
Our third research question is: how to efficiently find a \emph{satisfying UA map} $\mapastnodetounderapprox$ (i.e., $\textit{Encode}(\mapastnodetounderapprox) \land \appcond$ is satisfiable)?
We propose a novel search algorithm that explores a sequence of $\mapastnodetounderapprox_i$'s, until reaching a satisfying one. 
Each iteration is fast, and the number of iterations is typically small. 
Our novelty lies in how we generate $\mapastnodetounderapprox_{i+1}$ from an unsatisfiable $\mapastnodetounderapprox_i$. 
In particular, we first obtain a subset of $\mapastnodetounderapprox_i$'s entries for a subset $\conflictASTnodes$ of AST nodes in $\mapastnodetounderapprox_i$---i.e., $\mapastnodetounderapprox_{i} \mapprojection \conflictASTnodes$---such that $\textit{Encode}(\mapastnodetounderapprox_{i} \mapprojection \conflictASTnodes) \land \appcond$ is unsat. 
In other words, $\mapastnodetounderapprox_i \mapprojection \conflictASTnodes$ is a \emph{conflict}. 
Then, we resolve this conflict by mapping some nodes in $\conflictASTnodes$ to new UAs, such that $\mapastnodetounderapprox_{i+1} \mapprojection \conflictASTnodes$ is satisfiable. 
During this process, we might also need to adjust mappings for nodes outside $\conflictASTnodes$, but we do not analyze their semantics. 
Our novelty lies in the development of a lattice structure of UAs and an algorithm that exploits this structure for efficient conflict resolution. 

}

\newpara{Evaluation.}
We have implemented these ideas in a tool called $\tool$\footnote{When you under-approximate circle (rhymes with SQL), you get $\tool$.}, and evaluated it on two applications. 
The first one is SQL query equivalence refutation. Our evaluation result on 24,455 benchmarks reveals that $\tool$ can disprove a large number of query pairs using a median of 0.1 seconds---significantly outperforming all prior techniques.
Our evaluation result on a total of 6,720 query disambiguation benchmarks also shows $\tool$ significantly beats all existing approaches both in terms of the number of benchmarks solved and the solving time.

\newpara{Contributions.}
This paper makes the following contributions. 
\begin{itemize}[leftmargin=*]
\item 
Develop a new symbolic reasoning engine for SQL based on under-approximate reasoning. 
\item 
Formulate the reasoning task as an under-approximation search problem.
\item 
Propose a compositional method to define the search space of under-approximations. 
\item 
Design an efficient conflict-driven algorithm for searching under-approximations. 
\item 
Evaluate an implementation, $\tool$, of these ideas on more than 30,000 benchmarks. 
\end{itemize}

\section{Overview} \label{sec:overview}

\lstdefinestyle{sqlstyle}{
    language=SQL,
    basicstyle={\scriptsize\ttfamily\linespread{1.1}\selectfont},  %scriptsize
    % keywordstyle=\color{blue}\bfseries,
    numbers=none,
    numberstyle=\scriptsize,
    commentstyle=\color{gray},
    numbersep=5pt,
    % frame=single,
    showstringspaces=false,
    belowskip=3mm,
    breakatwhitespace=true,
    breaklines=true,
    classoffset=0,
    columns=flexible,
    framexleftmargin=0em,
    keywordstyle=\textbf,  %\color{magenta}
    % numbers=none,  % `numbers=left` to show line number
    % numberstyle=\scriptsize\textbf,
    tabsize=3,
    xleftmargin=2em
}

\forestset{
  ast/.style={
    for tree={
      align=center,
      inner sep=2.2pt,
      s sep+=8pt,
      % l sep+=-2pt,
      before computing xy={l=20pt},  % this is for reducing vertical spacing
      edge={color=black},  % stealth-
      % rounded corners,
      line width=.5pt,
      draw=black,
      font=\sffamily\scriptsize,
      node options={align=center},
    }
  }
}

%This section presents a simple example to illustrate our high-level ideas. 
%Section~\ref{sec:alg} contains more examples to explain the lower-level aspects. 

\xinyurevision{
\newpara{Background on SQL equivalence checking.}
In this section, we will present a simple example to illustrate how our approach works for checking the equivalence of two SQL queries. 
This is an important problem with applications in various downstream tasks. 
One such task is automated grading  of SQL queries (consider LeetCode\footnote{\url{https://leetcode.com/} is an online platform that provides coding problems in different languages (including SQL).}): a user-submitted query needs to be checked against a predefined ``ground-truth'' query; in case of non-equivalence, provide a counterexample to users. 
Another task is to validate query rewriting, where a slow query $\sqlquery_1$ is transformed to a faster query $\sqlquery_2$ using rewrite rules. We can perform translation validation by checking $\sqlquery_1$ is equivalent to $\sqlquery_2$. A counterexample in this case can help developers fix the incorrect rewrite rule. 
We also refer readers to recent work on SQL equivalence checking~\cite{he2024verieql,chu2017cosette} for more details. 
}

\newpara{Equivalence refutation example.}
Consider a database with the following three relations. 
\[
\begin{array}{rcl}
\textit{Customers} & : & \text{[ customer\_id, customer\_name, email ]} \\ 
\textit{Contacts} & : & \text{[ user\_id, contact\_name, contact\_email ]} \\ 
\textit{Invoice} & : & \text{[ invoice\_id, price, user\_id ]} \\ 
\end{array}
\]
Here, \emph{Customers} table contains customer information, \emph{Contacts} stores contact information for each customer, and \emph{Invoice} tracks price and customer information for invoices.

Let us consider the following simplified task from a LeetCode problem\footnote{\url{https://leetcode.com/problems/number-of-trusted-contacts-of-a-customer/description/}}: for each invoice, find its corresponding customer's name and the number of contacts for this customer. 
Figures~\ref{fig:example:correct-query} and~\ref{fig:example:wrong-query} show two SQL queries (which are simplified from real-life queries submitted by LeetCode users), where $\sqlquery$ is a correct solution but $\sqlquery'$ is not. 
$\sqlquery$ first counts the number of contacts for each customer using a subquery (rooted at AST node $\astnode_3$; see Figure~\ref{fig:example:correct-query}), then joins it with the \emph{Invoices} table, and finally selects the desired columns. 
On the other hand,  $\sqlquery'$ first joins \emph{Invoices}, \emph{Customers}, and \emph{Contacts} to find the customer name and the count of contacts for each invoice (see $\astnode'_3$ in Figure~\ref{fig:example:wrong-query}), and then joins the \emph{Customers} table again. 
$\sqlquery$ and $\sqlquery'$ yield different outputs when multiple rows in \emph{Customers} share the same email address (which is possible): in this case, {a customer's contact for an invoice would be counted multiple times in the output of $\sqlquery'$}, leading to incorrect aggregation results.

Figure~\ref{fig:example:cex} shows a counterexample input, witnessing the non-equivalence of $\sqlquery$ and $\sqlquery'$. 
Note that Alice and Bob share the same email address in \emph{Customers}.
The goal of equivalence refutation is to find an input  $\inputdatabase$ (like the one in Figure~\ref{fig:example:cex}) such that $\sqlquery(\inputdatabase) \neq \sqlquery'(\inputdatabase)$.

\begin{figure}[!t]
\centering
\begin{minipage}{0.5\textwidth}
\begin{lstlisting}[mathescape,style=sqlstyle]
$v_1$:  SELECT invoice_id, T.customer_name, cnt    
    FROM Invoices I 
$v_2$:  LEFT JOIN (                 
$L_2$:       SELECT A.customer_id, customer_name,
                COUNT(contact_name) AS cnt
        FROM Customers A
$v_4$:             LEFT JOIN Contacts B           
$\phi_2$:             ON A.customer_id = B.user_id
$v_3$:       GROUP BY A.customer_id, customer_name) T
$\phi_1$:    ON I.user_id = T.customer_id
\end{lstlisting}
\end{minipage}
\hspace{5pt}
\begin{minipage}{0.4\textwidth}
\centering
\begin{forest}
ast
[$\astnode_1: \proj_{\attrlist_1}$
    [$\astnode_2: \ljoin_{\predicate_1}$
        [$\astnode_5$: \texttt{Invoices}]
        [$\astnode_3: \groupby_{\vec{\expression}, {\attrlist_2}}$
            [$\astnode_4: \ljoin_{\predicate_2}$
                [$\astnode_6$: \texttt{Customers}]
                [$\astnode_7$: \texttt{Contacts}]
            ]
        ]
    ]
]
\end{forest}
\end{minipage}
\vspace{-10pt}
\caption{SQL query $\sqlquery$ (left) and its corresponding AST (right). While $\sqlquery$ is written in standard SQL syntax, we express its AST following our grammar in Figure~\ref{fig:sql-syntax}. In particular, an AST node is labeled with a query operator, with non-query parameters (such as column lists and predicates) being part of the label. $\proj$ in the AST means \texttt{\textbf{SELECT}}, $\ljoin$ is \texttt{\textbf{LEFT\;JOIN}}, $\attrlist_1$ corresponds to ``\texttt{invoice\_id,\,T.customer\_name,\,cnt}'' on the left, etc. Each AST node is annotated with a unique id $\astnode_i$. We also annotate some parts of $\sqlquery$ to illustrate this mapping.}
\label{fig:example:correct-query}
\vspace{-6pt}
\end{figure}

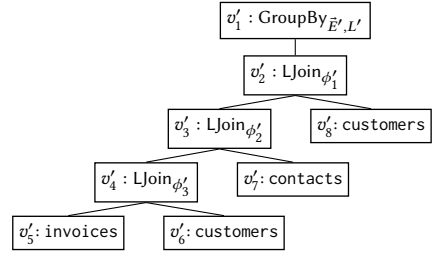
\begin{figure}[!t]
\centering
\begin{minipage}{0.44\textwidth}
\begin{lstlisting}[mathescape,style=sqlstyle]
$L'$:  SELECT invoice_id, c.customer_name,
            COUNT(t.contact_name) cnt
    FROM invoices i
$v'_4$: LEFT JOIN customers c                   
$\phi'_3$: ON i.user_id = c.customer_id
$v'_3$: LEFT JOIN contacts t             
$\phi'_2$: ON c.customer_id = t.user_id
$v'_2$: LEFT JOIN customers c1                
$\phi'_1$: ON t.contact_email = c1.email
$v'_1$: GROUP BY invoice_id, c.customer_name  
\end{lstlisting}
\end{minipage}
\hspace{6pt}
\begin{minipage}{0.4\textwidth}
\centering
\begin{forest}
ast
[$\astnode'_1: \groupby_{\vec{\expression}', \attrlist'}$
    [$\astnode'_2: \ljoin_{\predicate'_1}$
        [$\astnode'_3: \ljoin_{\predicate'_2}$
            [$\astnode'_4: \ljoin_{\predicate'_3}$
                [$\astnode'_5$: \texttt{invoices}]
                [$\astnode'_6$: \texttt{customers}]
            ]
            [$\astnode'_7$: \texttt{contacts}]
        ]
        [$\astnode'_8$: \texttt{customers}]
    ]
]
\end{forest}
\end{minipage}
\vspace{-10pt}
\caption{SQL query $\sqlquery'$ (left) and its corresponding AST (right), following the same protocol as in Figure~\ref{fig:example:correct-query}.}
\label{fig:example:wrong-query}
\end{figure}

\newpara{Prior work.}
{Despite the abundance of work on SQL equivalence checking, 
existing techniques---such as
$\evosql$~\cite{castelein2018search},
$\datafiller$~\cite{datafiller-website}, 
$\xdata$~\cite{chandra2015data,chandra2019automated}, 
$\cosette$~\cite{cosette-website,wang2018speeding},  
and 
$\qex$~\cite{qex,wang2018speeding}---all fail to refute the equivalence of $\sqlquery$ and $\sqlquery'$.}
$\verieql$~\cite{he2024verieql} is the only tool that succeeds, though taking {132 seconds} to solve our (original, unsimplified) example. 
The reason it takes this long is because $\verieql$ \emph{precisely} encodes the semantics of $\sqlquery$ and $\sqlquery'$ for \emph{all inputs}. 
In particular, it first considers all input tables with at most 1 tuple: under this bound, $\sqlquery$ and $\sqlquery'$ are equivalent. Then, it bumps up the bound  to 2 tuples per input table. This ends up creating a complex formula (due to various complex features, including nested joins and group-by, among others not shown in our simplified example). 
It takes a state-of-the-art SMT solver---in particular, z3~\cite{z3-tacas08}---{121 seconds} to solve.

\begin{figure}[!t]
    \centering
    \scriptsize
    % Customers
    \begin{minipage}[b][][b]{0.38\textwidth}
        \centering
        \begin{tabular}{|c|c|c|}
            \hline
            customer\_id & customer\_name & email \\ \hline
            1 & Alice   & a@g.com \\ \hline
            2 & Bob  & a@g.com \\ \hline
%            3 & \nullv & b@g.com \\ \hline
        \end{tabular}
\captionsetup{skip=3pt} 
        \caption*{\emph{Customers}}
    \end{minipage} 
    % Contacts
    \begin{minipage}[b][][b]{0.32\textwidth}
        \centering
        \begin{tabular}{|c|c|c|}
            \hline
            user\_id & contact\_name & contact\_email \\ \hline
            2 & A & a@g.com \\ \hline
%            3 & B & b@g.com \\ \hline
        \end{tabular}
\captionsetup{skip=3pt} 
        \caption*{\emph{Contacts}}
    \end{minipage} \hfill
    % Invoices
    \begin{minipage}[b][][b]{0.29\textwidth}
        \centering
        \begin{tabular}{|c|c|c|}
            \hline
            invoice\_id & user\_id & price\\ \hline
%            1 & 3  & 100 \\ \hline
            3 & 2  & 10 \\ \hline
        \end{tabular}
\captionsetup{skip=3pt} 
        \caption*{\emph{Invoices}}
    \end{minipage}
\vspace{-15pt}
\caption{A counterexample input database on which $\sqlquery$ and $\sqlquery'$ return different outputs.}
\label{fig:example:cex}
\vspace{-4pt}
\end{figure}

\xinyurevision{
\newpara{Insight 1: under-approximating queries.}
Our key idea is to reason about an under-approximation (UA), which is represented as a UA map $\mapastnodetounderapprox$. 
In particular, given $n$ queries represented as ASTs, $\mapastnodetounderapprox$ maps each AST node $\astnode_i$ to a UA that under-approximates the query operator at $\astnode_i$. 
For instance, consider the $\mapastnodetounderapprox$ in Figure~\ref{fig:ex:M}, which under-approximates $\sqlquery$ and $\sqlquery'$ from Figures~\ref{fig:example:correct-query} and~\ref{fig:example:wrong-query}.
}

\begin{figure}[!t]
\vspace{-5pt}
\centering
\[\small 
\mapastnodetounderapprox = 
\left\{
\begin{array}{llll}
\astnode_1 \mapsto [\truevalue, \falsevalue] & 
\astnode_2 \mapsto \big[ [ \truevalue, \falsevalue ], [ \falsevalue, \falsevalue ] \big] & 
\astnode_3 \mapsto [\truevalue_{\truepredicate}, \truevalue_{\truepredicate}] & 
\astnode_4 \mapsto \big[ [ \falsevalue, \truevalue ], [ \falsevalue, \falsevalue ] \big] \\[5pt]
\astnode'_1 \mapsto [ \truevalue_{\truepredicate}, \falsevalue ] & 
\astnode'_2 \mapsto \big[ [ \truevalue, \truevalue ], [ \falsevalue, \falsevalue ] \big] & 
\astnode'_3 \mapsto \big[ [\falsevalue, \truevalue ], [ \falsevalue, \falsevalue ] \big] & 
\astnode'_4 \mapsto \big[ [ \falsevalue, \truevalue ], [ \falsevalue, \falsevalue ] \big] 
\end{array}
\right\}
\]
\vspace{-10pt}
\caption{An example UA map $\mapastnodetounderapprox$, which can be used to generate the counterexample input in Figure~\ref{fig:example:cex} to refute the equivalence of $\sqlquery$ (in Figure~\ref{fig:example:correct-query}) and $\sqlquery'$ (in Figure~\ref{fig:example:wrong-query}).}
\label{fig:ex:M}
\vspace{-5pt}
\end{figure}

\xinyurevision{
Let us explain some of the entries in $\mapastnodetounderapprox$. 
At a high level, $\mapastnodetounderapprox$ maps each AST node to a so-called ``UA choice'' (or, simply UA) that is represented by an array (potentially more than one-dimensional). 
For example, $\astnode_4$'s UA $\underapprox_4$---denoted by a $2 \times 2$ matrix $\big[ [ \falsevalue, \truevalue ], [ \falsevalue, \falsevalue ] \big]$---considers \emph{Customers} tables with up to 2 tuples and \emph{Contacts} tables with up to 2 tuples (recall from Figure~\ref{fig:example:correct-query} that $\astnode_4$ is a $\ljoin$ operator). 
Importantly, the $\truevalue$ value at position (1,2) constrains that only the first tuple in \emph{Customers} and the second tuple in \emph{Contacts} meet the join predicate $\predicate_2$: this additional constraint allows us to focus on a small set of input-output behaviors.
The UAs for $\astnode_2, \astnode_4, \astnode'_2, \astnode'_3$ are defined in the same way (see Example~\ref{ex:ua-map} with a more detailed explanation), and the UAs for $\astnode_1$ and $\astnode'_1$ follow a similar rationale. 
$\mapastnodetounderapprox$ can be encoded into $\queryopencoding \assign \queryopencoding_1 \land \mydots \land \queryopencoding_4 \land \queryopencoding'_1 \land \mydots \land \queryopencoding'_4$, where each $\queryopencoding_i$ (resp. $\queryopencoding'_i$) encodes the input-output behaviors for $\astnode_i$ (resp. $\astnode'_i$). 
As a whole, $\queryopencoding$ encodes a subset of behaviors of $\sqlquery$ and $\sqlquery'$. 
While we do not show any actual SMT formulas in this example, Section~\ref{sec:alg:ua-semantics} will describe how to build such formulas in detail. 

Given this $\queryopencoding$ and application condition $\appcond$ (which encodes ``the outputs of $\sqlquery$ and $\sqlquery'$ are distinct''), we obtain $\encodingall \assign \queryopencoding \land \appcond$, which in this example is satisfiable. 
A satisfying assignment of $\encodingall$ corresponds to a counterexample input that witnesses the non-equivalence of $\sqlquery$ and $\sqlquery'$ (e.g., $\inputdatabase$ from Figure~\ref{fig:example:cex}). 
Notably, checking the satisfiability of such $\encodingall$ is typically cheap. 
For instance, z3 gives a model in {0.02 seconds}, for the corresponding $\encodingall$ in our original (unsimplified) example.
}

\xinyurevision{
\newpara{Insight 2: under-approximation search for completeness.}
As mentioned earlier, reasoning over a fixed UA may miss some behaviors of interest (i.e., $\encodingall$ may be unsat) and therefore is incomplete. 
Our second insight is to construct a family of UA maps and search over this space for one that is satisfiable. 
This search space is defined compositionally by defining a family of UAs for each query operator. 
Let us take $\astnode_4$ from  the above Figure~\ref{fig:ex:M} as an example: its family contains $2^{4} = 16$ UAs (i.e., each of the four elements can be either $\truevalue$ or $\falsevalue$), which collectively cover all inputs of interest (i.e., input tables with up to two tuples).
Each of these 16 UAs can be encoded in an SMT formula whose models correspond to genuine input-output behaviors for the $\ljoin$ operator at $\astnode_4$. 
We define UA families for the other AST nodes similarly. 
These operator-level UAs induce the search space of UA maps for queries.
The search problem then is how to find a \emph{satisfying UA map} $\mapastnodetounderapprox$ (mapping all AST nodes in $\sqlquery$ and $\sqlquery'$ to UAs), such that $\encodingall \assign \textit{Encode}(\mapastnodetounderapprox) \land \appcond$ is satisfiable.

Our search algorithm explores a sequence of UA maps $\mapastnodetounderapprox_i$'s until reaching a satisfying one. 
In general, $\mapastnodetounderapprox_i$ is partial (i.e., containing a subset of AST nodes). Given $\mapastnodetounderapprox_i$ at each step, we produce the next $\mapastnodetounderapprox_{i+1}$ by either adding more nodes or by adjusting UAs of existing nodes. 
Figure~\ref{fig:ex:M-sequence} shows one such sequence $\mapastnodetounderapprox_1 \rightarrow \mapastnodetounderapprox_2 \rightarrow \mapastnodetounderapprox_3 \rightarrow \mapastnodetounderapprox$, where $\mapastnodetounderapprox$ is the satisfying UA map from Figure~\ref{fig:ex:M}.
}

\begin{figure}[!t]
\centering
\[\small 
\begin{array}{ll}

\mapastnodetounderapprox_1 = 
\left\{
\begin{array}{llll}
\astnode_1 \mapsto [ \unknownvalue, \unknownvalue ] 
\\[3pt]
\astnode'_1 \mapsto [ \unknownvalue, \unknownvalue ] 
\end{array}
\right\}

\hspace{10pt}

\mapastnodetounderapprox_2 =
\left\{
\begin{array}{llll}
\astnode_1 \mapsto [\truevalue, \falsevalue] & 
\astnode_2 \mapsto \big[ [ \unknownvalue, \unknownvalue ], [ \unknownvalue, \unknownvalue ] \big] 

\\[3pt] 

\astnode'_1 \mapsto [ \falsevalue, \falsevalue] & 
\astnode'_2 \mapsto \big[ [ \unknownvalue, \unknownvalue ], [ \unknownvalue, \unknownvalue ] \big] 

\end{array}
\right\}

\\ \\ 

\mapastnodetounderapprox_3 = 
\left\{
\begin{array}{llll}
\astnode_1 \mapsto [\truevalue, \falsevalue] & 
\astnode_2 \mapsto \big[ [ \falsevalue, \falsevalue ], [ \falsevalue, \truevalue ] \big] & 
\astnode_3 \mapsto [ \unknownvalue, \unknownvalue ] & 
\astnode_4 \mapsto \big[ [ \unknownvalue, \unknownvalue ], [ \unknownvalue, \unknownvalue ] \big] \\[5pt]
\astnode'_1 \mapsto [ \falsevalue, \falsevalue ] & 
\astnode'_2 \mapsto \big[ [ \falsevalue, \falsevalue ], [ \falsevalue, \falsevalue ] \big] & 
\astnode'_3 \mapsto \big[ [\unknownvalue, \unknownvalue ], [ \unknownvalue, \unknownvalue ] \big] & 
\astnode'_4 \mapsto \big[ [ \unknownvalue, \unknownvalue ], [ \unknownvalue, \unknownvalue ] \big] 
\end{array}
\right\}

\end{array}
\]
\vspace{-5pt}
\caption{An example sequence of UA maps  $\mapastnodetounderapprox_1 \rightarrow \mapastnodetounderapprox_2 \rightarrow \mapastnodetounderapprox_3 \rightarrow \mapastnodetounderapprox$  that our algorithm may explore, to refute the equivalence of $\sqlquery$ and $\sqlquery'$.  Here, $\mapastnodetounderapprox$ is the satisfying UA map in Figure~\ref{fig:ex:M}.}
\label{fig:ex:M-sequence}
\vspace{-5pt}
\end{figure}

\xinyurevision{
While each $\mapastnodetounderapprox_i$ is structurally similar to $\mapastnodetounderapprox$, they include a special ``top'' value $\unknownvalue$, which intuitively means ``I don't know.'' 
For instance, $\astnode_4$'s UA in $\mapastnodetounderapprox_3$ considers all \emph{Customers} and \emph{Contacts} tables both with up to 2 tuples, with no additional constraints. Intuitively, this UA ``subsumes'' the UA for $\astnode_4$ in $\mapastnodetounderapprox$, because the latter considers a specific way of joining the two input tables. 
Allowing $\unknownvalue$ essentially creates a lattice of UAs, which we exploit to perform efficient under-approximation search. 
Let us explain how this works in detail, still using Figure~\ref{fig:ex:M-sequence} as an example.

The algorithm begins with $\mapastnodetounderapprox_1$, obtains its encoding $\queryopencoding_1$, and confirms $\encodingall_1 \assign \queryopencoding_1 \land \appcond$ is satisfiable. 
In this case, we map $\astnode_1$ and $\astnode'_1$ to UAs derived from a model of $\encodingall_1$, and add 2 new nodes $\astnode_2, \astnode'_2$ (both mapped to $\unknownvalue$)---this yields $\mapastnodetounderapprox_2$. 
$\encodingall_2$ for $\mapastnodetounderapprox_2$ is also sat; therefore, we update $\mapastnodetounderapprox_2$ in the same way (except adding 4 nodes this time) and obtain $\mapastnodetounderapprox_3$.

$\encodingall_3$ for $\mapastnodetounderapprox_3$, however, is unsat. 
In this case, we first obtain a subset $\conflictASTnodes$ (namely, $\{ \astnode'_1, \astnode'_2, \astnode'_3, \astnode'_4, \astnode_1, \astnode_2 \}$) of nodes, whose UAs in $\mapastnodetounderapprox_3$ are in \emph{conflict}; that is, $\textit{Encode}(\mapastnodetounderapprox_3 \mapprojection \conflictASTnodes) \land \appcond$ is unsat ($\mapastnodetounderapprox_3 \mapprojection \conflictASTnodes$ gives the ``sub-map'' of $\mapastnodetounderapprox_3$ for nodes in $\conflictASTnodes$).
Then, we aim to build a new UA map $\mapastnodetounderapprox_4$ which (i) maps nodes in $\conflictASTnodes$ to new UAs and (ii) maps nodes outside $\conflictASTnodes$ to UAs, such that (a)  $\textit{Encode}(\mapastnodetounderapprox_4 \mapprojection \conflictASTnodes) \land \appcond$ is sat and (b) $\mapastnodetounderapprox_4$ does not contain any previously discovered conflicts. 
In this example, we have $\mapastnodetounderapprox_4 = \mapastnodetounderapprox$, which is a satisfying UA map. 
Note that step (i) requires searching, among all combinations of UAs for nodes in $\conflictASTnodes$, for one that meets (a). 
To do this efficiently, we take advantage of the aforementioned lattice structure of UAs to partition this large space into subspaces shown below in Figure~\ref{fig:ex:partition}, which enables efficient search over this combinatorial space.

}

\begin{figure}[!h]
\vspace{-10pt}
\[\footnotesize
\arraycolsep=1pt\def\arraystretch{1.3}
\begin{array}{rl}
(1) & 
\astnode_1 \mapsto [\unknownvalue, \unknownvalue], 
\astnode'_1 \mapsto [\unknownvalue, \unknownvalue], 
\astnode_2 \mapsto \big[ [\unknownvalue, \unknownvalue], [\unknownvalue, \unknownvalue] \big], 
\astnode'_2 \mapsto \big[ [\unknownvalue, \unknownvalue], [\unknownvalue, \unknownvalue] \big], 
\astnode'_3 \mapsto \big[ [\truevalue, \truevalue], [\unknownvalue, \unknownvalue] \big], 
\astnode'_4 \mapsto \big[ [\truevalue, \truevalue], [\unknownvalue, \unknownvalue] \big]
\\ 
(2) & 
\astnode_1 \mapsto [\unknownvalue, \unknownvalue], 
\astnode'_1 \mapsto [\unknownvalue, \unknownvalue], 
\astnode_2 \mapsto \big[ [\unknownvalue, \unknownvalue], [\unknownvalue, \unknownvalue] \big], 
\astnode'_2 \mapsto \big[ [\unknownvalue, \unknownvalue], [\unknownvalue, \unknownvalue] \big], 
\astnode'_3 \mapsto \big[ [\falsevalue, \truevalue], [\unknownvalue, \unknownvalue] \big], 
\astnode'_4 \mapsto \big[ [\truevalue, \truevalue], [\unknownvalue, \unknownvalue] \big]
\\
(3) & 
\astnode_1 \mapsto [\unknownvalue, \unknownvalue], 
\astnode'_1 \mapsto [\unknownvalue, \unknownvalue], 
\astnode_2 \mapsto \big[ [\unknownvalue, \unknownvalue], [\unknownvalue, \unknownvalue] \big], 
\astnode'_2 \mapsto \big[ [\unknownvalue, \unknownvalue], [\unknownvalue, \unknownvalue] \big], 
\astnode'_3 \mapsto \big[ [\truevalue, \truevalue], [\unknownvalue, \unknownvalue] \big], 
\astnode'_4 \mapsto \big[ [\falsevalue, \truevalue], [\unknownvalue, \unknownvalue] \big]
\\ 
(4) & 
\astnode_1 \mapsto [\unknownvalue, \unknownvalue], 
\astnode'_1 \mapsto [\unknownvalue, \unknownvalue], 
\astnode_2 \mapsto \big[ [\unknownvalue, \unknownvalue], [\unknownvalue, \unknownvalue] \big], 
\astnode'_2 \mapsto \big[ [\unknownvalue, \unknownvalue], [\unknownvalue, \unknownvalue] \big], 
\astnode'_3 \mapsto \big[ [\falsevalue, \truevalue], [\unknownvalue, \unknownvalue] \big], 
\astnode'_4 \mapsto \big[ [\falsevalue, \truevalue], [\unknownvalue, \unknownvalue] \big]
\\ 
& \mydots 
\\ 
(16) & 
\astnode_1 \mapsto [\unknownvalue, \unknownvalue], 
\astnode'_1 \mapsto [\unknownvalue, \unknownvalue], 
\astnode_2 \mapsto \big[ [\unknownvalue, \unknownvalue], [\unknownvalue, \unknownvalue] \big], 
\astnode'_2 \mapsto \big[ [\unknownvalue, \unknownvalue], [\unknownvalue, \unknownvalue] \big], 
\astnode'_3 \mapsto \big[ [\falsevalue, \falsevalue], [\unknownvalue, \unknownvalue] \big], 
\astnode'_4 \mapsto \big[ [\falsevalue, \falsevalue], [\unknownvalue, \unknownvalue] \big]
\end{array}
\]
\vspace{-10pt}
\caption{An example partition of the UA space for AST nodes $\astnode'_1, \astnode'_2, \astnode'_3, \astnode'_4, \astnode_1, \astnode_2$ from $\sqlquery$ and $\sqlquery'$.}
\label{fig:ex:conflict}
\vspace{-5pt}
\label{fig:ex:partition}
\end{figure}

\xinyurevision{
Note that each subspace is a UA map over nodes in $\conflictASTnodes$. Our algorithm processes them one by one. 
It first encodes the UA map $\mapastnodetounderapprox^{(1)}_{\conflictASTnodes}$ in (1) from Figure~\ref{fig:ex:partition}---that is, $\textit{Encode}\big(\mapastnodetounderapprox^{(1)}_{\conflictASTnodes}\big) \land \appcond$---and finds it to be unsat. In other words, this $\mapastnodetounderapprox^{(1)}_{\conflictASTnodes}$ is still a conflict. 
{Similarly, $\mapastnodetounderapprox^{(2)}_{\conflictASTnodes}$ and $\mapastnodetounderapprox^{(3)}_{\conflictASTnodes}$ are also unsat.}
On the other hand, $\mapastnodetounderapprox^{(4)}_{\conflictASTnodes}$ is sat. We therefore obtain a model for $\mapastnodetounderapprox^{(4)}_{\conflictASTnodes}$, from which we can generate $\mapastnodetounderapprox_4$, which is identical to $\mapastnodetounderapprox$. 
This concludes the search process. 
{In total, it takes 0.3 seconds for $\tool$ to solve the original (unsimplified) example. }
}

\section{Conflict-Driven Under-Approximation Search for SQL}
\label{sec:alg}

We begin with the syntax of our query language (Section~\ref{sec:alg:language}). 
Then, we define under-approximations for this language and formalize its under-approximate semantics (Sections~\ref{sec:alg:unders}-\ref{sec:alg:ua-semantics}). 
Finally,  we give our under-approximation search algorithm (Sections~\ref{sec:alg:top-level}-\ref{sec:alg:discussion}).

\subsection{Query Language}
\label{sec:alg:language}

\newpara{Syntax.}
Figure~\ref{fig:sql-syntax} shows the syntax of our language, which is a highly expressive subset of SQL. 
A query $\query$ can be a relation $\relation$, a projection $\proj_\attrlist(\query)$ that selects columns $\attrlist$ from $\query$'s result, a filter $\filter_{\predicate}(\query)$ that retains from $\query$ the rows satisfying predicate $\predicate$, a renaming operator $\rename_\relation(\query)$ that renames the output of $\query$ to relation $\relation$.
It also supports bag union $\unionall$ and different types of joins (including Cartesian product $\product$, inner join $\ijoin_{\predicate}$, left join $\ljoin_{\predicate}$, right join $\rjoin_{\predicate}$, and full outer join $\fjoin_{\predicate}$).
$\text{Distinct}(\query)$ removes duplicate rows. 
$\text{GroupBy}_{\vec{\expression}, \attrlist, \predicate}(\query)$ first groups all rows from $\query$ based on $\vec{\expression}$, computes expressions $\attrlist$ for each group, and returns the result for groups satisfying $\predicate$.
$\text{OrderBy}_{\expression}(\query)$ sorts (in ascending order) $\query$'s output by expression list $\expression$. 
{$\text{With}(\vec{\query}, \vec{\relation}, \query)$ creates intermediate relations $\vec{\relation}$ with the result of queries $\vec{\query}$, and returns the result of $\query$ (which potentially uses $\vec{\relation}$).}
It is worth noting that we support if-then-else $\text{ITE}(\predicate, \expression, \expression)$, case-when $\text{Case}(\vec{\predicate}, \vec{\expression}, \expression)$, and predicates like $\vec{\expression} \in \query$ and $\text{Exists}(\query)$.

\begin{figure}[!t]
\small 
\centering
\[
\begin{array}{rccl}

\textit{Query} & 
\query & 
::= & 
\relation ~|~ \proj_\attrlist(\query) ~|~ \filter_\predicate(\query) ~|~ \rename_\relation(\query) \ | \ \unionall(\query_1, \query_2) ~| \alljoin(\query_1, \query_2)   \\
& & 
| & 
\distinct(\query) 
~|~ 
\groupby_{\vec{\expression}, \attrlist, \predicate}(\query) 
~|~ 
\orderby_\expression(\query) 
~|~ 
\with(\vec{\query}, \vec{\relation}, \query) 
\\

\textit{Join Op} & 
\alljoin & 
::= & 
\product ~|~ \ijoin_\predicate ~|~ \ljoin_\predicate ~|~ \rjoin_\predicate ~|~ \fjoin_\predicate 
\\

\textit{Attr List} & 
\attrlist & 
::= 
& [\expression, \mydots, \expression] \\

\textit{Pred} & 
\predicate & 
::= & 
b ~|~ \nullv ~|~ \expression \alllogic \expression ~|~ \vec{\expression} \in \vec{v} ~|~ \vec{\expression} \in Q ~|~ \text{IsNull}(\expression) ~|~ \text{Exists}(\query)
                 ~|~  \predicate \land \predicate ~|~ \predicate \lor \predicate ~|~ \neg \predicate 
\\

\textit{Expr} & 
\expression & 
::= & 
a ~|~ v ~|~ \expression \allarith \expression ~|~ \mathcal{G}(b, \expression) ~|~ \text{ITE}(\predicate, \expression, \expression) ~|~ \text{Case}(\vec{\predicate}, \vec{\expression}, \expression) ~|~ \rename_a(\expression)
\\ 

\textit{Arith Op} & 
\allarith & 
::= & 
+ ~|~ - ~|~ \times ~|~ / ~|~ \% 
\\

\textit{Logic Op} & 
\alllogic & 
::= & 
\leq ~|~ < ~|~ = ~|~ \neq ~|~ > ~|~ \geq 
\\

\end{array}
\]
\[
\begin{array}{c}
\relation \in \textbf{Relation Names} \quad a \in \textbf{Attribute Names} \quad v \in \textbf{Values} \quad b \in \textbf{Bools} \\[3pt]
\mathcal{G} \in \{\text{Count}, \text{Min}, \text{Max}, \text{Sum}, \text{Avg}\}
\end{array}
\]
\vspace{-5pt}
\caption{Syntax of our query language.}
\label{fig:sql-syntax}
\vspace{-5pt}
\end{figure}

\newpara{Semantics.}
Our semantics is based on prior work~\cite{he2024verieql}, which will be formalized in Section~\ref{sec:alg:full-semantics}.

\begin{comment}
    
Our query language uses the list semantics for SQL based on prior work~\cite{he2024verieql}, which allows us to handle sorting operations as well as formalize the standard bag semantics of other operations in a straightforward fashion.

\end{comment}

\subsection{Representing Under-Approximations}
\label{sec:alg:unders}

This section presents a method to define a family of under-approximations for  each \emph{Query} operator $\queryop$ from Figure~\ref{fig:sql-syntax}. 
In a nutshell, our under-approximation (UA) describes some subset of $\queryop$'s \emph{reachable} outputs (i.e., outputs that can indeed be produced by $\queryop$ for some input).
This notion is consistent with the under-approximation idea from O'Hearn's seminal paper on incorrectness logic~\cite{o2019incorrectness}. 
Our work, however, realizes UAs using a two-step approach:
(i) first describe a subset of inputs for $\queryop$, 
(ii) then define their corresponding input-output behaviors (per $\queryop$'s semantics). 
Here, (i) is simply a ``pointer'' that refers to a subset of inputs, whereas (ii) is the actual encoding of the UA. 
This section focuses on step (i), while Sections~\ref{sec:alg:full-semantics}-\ref{sec:alg:ua-semantics} will explain step (ii).

In this work, we call a subset of inputs in step (i) ``a UA choice''---or, with a slight abuse of notation, simply ``a UA''---which is denoted by   $\underapprox$. 
More specifically, $\underapprox$ for each $\queryop$ is always represented by an array (potentially more than one-dimensional) of values. 
The interpretation of $\underapprox$ is specific to $\queryop$, which we will explain below. As mentioned earlier, $\underapprox$ is just used to refer to a subset of inputs.

\newpara{Filter.}
The UA $\underapprox$ for $\filter_{\predicate}$ is always of the form $\mylist{\truevalue, \falsevalue, \unknown}^n$. That is, $\underapprox$ is always a vector of $n$ values, where each value is $\truevalue$, $\falsevalue$ or $\unknown$.
Here, $n$ is the maximum input table size; that is, we consider input tables $\relation$ with at most $n$ tuples. 
If the $i$th value $\underapprox_i$ is $\truevalue$, it means the $i$th tuple $\tuple_i$ in $\relation$ satisfies predicate $\predicate$. 
On the other hand, $\underapprox_i = \falsevalue$ means either $\tuple_i$ is not present\footnote{The formal encoding of a table will be described in Section~\ref{sec:alg:full-semantics}. In brief, while a table always has exactly $n$ tuples, we allow some of the tuples to be deleted (i.e., not present). This allows us to encode all tables with \emph{at most} (not just exactly) $n$ tuples.} in $\relation$, or $\tuple_i$ does not satisfy $\predicate$. 
Finally, $\unknownvalue$ is the ``top'' value (i.e., ``I don't know''), meaning $\tuple_i$ can be either $\truevalue$ or $\falsevalue$. 
As we can see, any table (of size up to $n$) can be covered by some $\underapprox$, no matter how its tuples satisfy the filtering condition $\predicate$.

\begin{example}
Let us consider $\filter_{\texttt{id} > 3}$, and a UA choice $\underapprox = [\truevalue, \falsevalue]$ for it. 
Here, $\underapprox$ refers to those (input) relations $\relation$ whose \emph{first} tuple has \texttt{id} greater than 3---this is what the first value $\underapprox_1 = \truevalue$ in $\underapprox$ means. 
The second value $\underapprox_2 = \falsevalue$ means that the second tuple in $\relation$ is either not present (meaning $\relation$ has exactly one row), or its \texttt{id} is \emph{not} greater than 3 (in this case, $\relation$ has exactly two rows). Note that $\relation$ has at most 2 tuples, since $\underapprox$ has length 2. 
As another example, let us consider $\underapprox' = [\truevalue, \unknownvalue]$ for the same filter operator. 
Different from $\underapprox_2$, $\underapprox'_2$ puts no restrictions on the second tuple. So $\underapprox'$ refers to all tables with up to two tuples, whose first tuple must satisfy predicate $\texttt{id} > 3$. In other words, $\underapprox'$ ``subsumes'' $\underapprox$, or $\underapprox$ ``refines'' $\underapprox'$.
\label{ex:filter-ua}
\end{example}

\begin{comment}
    
\begin{example}
\pinhan{Consider a Filter opertion $\filter_{\texttt{id} > 30}(\relation)$.  A vector [\truevalue, \falsevalue] encodes an under-approximate semantics for it that considers all possible input tables $\relation$ with up to two rows, where its first row must have an \texttt{id} value greater than 30, and the second row either doesn't exist or has an \texttt{id} less than or equal to 30.  On the other hand, a vector [\truevalue, $\unknownvalue$] considers all two-row $\relation$, where it is only specified that \texttt{id} of the first row is greater than 30, while the second row may have arbitrary values.}
\end{example}

\end{comment}

\newpara{Projection and UnionAll.}
The UAs for projection are defined in the same way as filter, but their interpretation is slightly different. 
In particular, $\underapprox_i$ being $\truevalue$ or $\falsevalue$ indicates whether or not $\tuple_i$ from the input table is present. 
On the other hand, $\unknownvalue$ is still the top value that means either $\truevalue$ or $\falsevalue$. 
The UA for $\unionall$ is essentially the same as for projection, except that it is a vector of $n_1 + n_2$ values, where the first $n_1$ (resp. the last $n_2$) values correspond to tuples from the first (resp. second) table.

\newpara{Joins.}
Now, let us consider the join operators. Take the inner join $\ijoin_{\predicate}$ as an example. Its UA $\underapprox$ is an $n_1 \times n_2$  matrix, where $\underapprox_{i, j}$ is $\truevalue, \falsevalue$ or $\unknownvalue$. 
Here, $n_1$ (resp. $n_2$) is the maximum size of the first (resp. second) table.
For each $\underapprox_{i, j}$, $\truevalue$  means the $i$th tuple $\tuple_i$ from the first table and the $j$th tuple $\tuple_j$ from the second are both present and satisfy the join condition $\predicate$; whereas $\falsevalue$ means at least one of $\tuple_i$, $\tuple_j$ is deleted, or they are both present but cannot be joined. 
The UA definitions for the other joins are quite similar; 
we refer readers to Appendix~\ref{sec:ua-others} for more details.

\newpara{GroupBy and Distinct.}
The UA definition for $\groupby_{\vec{\expression}, \attrlist, \predicate}$ is different from all operators above. 
{Recall that $\groupby$ first groups all input tuples based on $\vec{\expression}$, then evaluates $\attrlist$ for each group, and finally returns an output table with groups  satisfying predicate $\predicate$.}
A UA choice $\underapprox$ for $\groupby$ is always of the form $\mylist{\truevalue, \truevalue_{\predicate}, \falsevalue, \unknown}^n$. 
Similar to before, $\underapprox_i$ corresponds to the $i$th tuple $\tuple_i$ in the input. 
The interpretation, however, is different. 
{If $\underapprox_i = \truevalue$ or $\truevalue_{\predicate}$, it means $\tuple_i$ is present and distinct from all tuples $\tuple_j (j < i)$ before it, with respect to columns in $\vec{\expression}$.}
In other words, $\tuple_i$ will form a new group. 
The difference between $\truevalue$ and $\truevalue_{\predicate}$ is that $\truevalue_{\predicate}$ means this new group further satisfies $\predicate$, whereas $\truevalue$ means it does not. 
On the other hand, $\falsevalue$ means $\tuple_i$ will not lead to a new group (either it is deleted, or it belongs to an existing group). 
Finally, $\unknownvalue$ is still our top value, which means $\truevalue, \truevalue_{\predicate}$, or $\falsevalue$.
The UA $\underapprox$ for $\distinct$ is a vector of $n$ values chosen from $\{ \truevalue, \falsevalue, \unknownvalue \}$, where $\underapprox_i = \truevalue$ means $\tuple_i$ is distinct from all prior tuples $\tuple_j (j < i)$, and $\falsevalue$ means there exists some $j < i$ such that $\tuple_i = \tuple_j$ (i.e., $\tuple_i$ is a duplicate).

\begin{example}
Consider $\groupby_{\vec{\expression}, \attrlist_2}$ at node $\astnode_3$ from Figure~\ref{fig:example:correct-query}(b), where $\attrlist_2$ is shown in Figure~\ref{fig:example:correct-query}(a) and $\vec{\expression} = [{ \texttt{A.customer\_id}, \texttt{customer\_name}}]$. Note that the having predicate $\predicate$ here is $\truepredicate$ (so our notation omitted it).
Consider UA $\underapprox = [\truevalue_{\predicate}, \falsevalue]$, which refers to tables with at most two tuples $\tuple_1, \tuple_2$. It further states: 
(i) $\tuple_1$ exists and forms a new group, and this group satisfies $\predicate$ (which is always true as we have $\predicate = \truepredicate$ in this example); (ii) $\tuple_2$ either does not exist, or $\vec{\expression}$ evaluates on $\tuple_2$ to the same result as a previous tuple (in this example, $\tuple_1$). 
\label{ex:groupby-ua}
\end{example}

\begin{comment}
    
\begin{example}
\pinhan{The AST node $\astnode_2$ of the query in Figure ~\ref{fig:example:correct-query} is a GroupBy operation $\groupby_{\vec{\expression}, \attrlist_1, \predicate_2}$, where $\vec{\expression} = [\texttt{invoice\_id}, \texttt{c.customer\_name}]$.  A UA $[\truevalue_\predicate, \falsevalue]$ means the input is a table with up to two rows, where the first tuple exists and forms a group, and the group also satisfies the predicate $\predicate_2$ (i.e., $\top$); the second row either does not exist, or the values of its group by expressions, \texttt{invoice\_id}, \texttt{c.customer\_name}, are a duplicate of the first tuple's.}
% \todo{show an example for GroupBy. can connect to the motivating example.}
\end{example}

\end{comment}

\begin{comment}
    
\newpara{OrderBy.}
$\orderby_{\expression}$ is different again. 
While its $\underapprox$ is still a vector of $n$ values and $n$ is the maximum size of the input table, each $\underapprox_i$ now is chosen from $\{ \falsevalue, \truevalue_1, \mydots, \truevalue_n, \unknownvalue \}$. 
If $\underapprox_i = \falsevalue$, it means the $i$th tuple $\tuple_i$ is not present in the input table. 
On the other hand, $\underapprox_i = \truevalue_k$ means tuple $\tuple_i$ has rank $k$ among all tuples according to expression $\expression$. 
\todo{TODO.}

\end{comment}

\newpara{Lattice of UAs.}
As we can see, UAs for each operator form a (semi-)lattice, due to the top value $\unknownvalue$.  
In particular, given two UAs $\underapprox$ and $\underapprox'$ for the same operator, we define 
$\underapprox \subsumes \underapprox'$ if we have $\underapprox_i \subsumes \underapprox'_i$ for all $i$. 
That is, $\underapprox$ subsumes $\underapprox'$ (or, $\underapprox'$ refines $\underapprox)$ if the $i$th value in $\underapprox$ subsumes the $i$th value in $\underapprox'$. 
The value subsumption relation is straightforward: the top value $\unknownvalue$ subsumes any non-top value, and non-top values do not subsume each other. 
A UA whose values are all $\unknownvalue$ is called a ``top UA'', while we call a UA with only non-top values a ``minimal UA''.
We use $\underapproxfamily_{\queryop}$ to denote the set of UAs for $\queryop$. 
%and $\underapproxfamily$ to denote all UAs across all query operators. 

\newpara{Other query operators.}
We refer readers to Appendix~\ref{sec:ua-others} for $\orderby$'s UA definition.
For operators $\relation$, $\rename$, and $\with$, we have one UA that always considers \emph{all} inputs up to a given size (i.e., effectively no under-approximation).

\subsection{Encoding Full Semantics}
\label{sec:alg:full-semantics}

This section formalizes the \emph{full} semantics of query operators from Figure~\ref{fig:sql-syntax}, which Section~\ref{sec:alg:ua-semantics} will use to build \emph{under-approximate} semantics. 
Our formalization is based on prior work~\cite{he2024verieql}: 
a table is encoded symbolically using $n$ symbolic tuples, a tuple can be deleted (this allows us to consider all tables \emph{up to} size $n$), and operator semantics is encoded in SMT following SQL standard. 
The \emph{key distinction} from prior work is that we embed the UA choices into the semantics encoding---we will further elaborate on this as we go over the encoding rules below. 
We note that, our ``full semantics'' is still bounded, in that it considers tables up to a finite size bound. We use ``full'' just to differentiate it from under-approximate semantics (which will be presented in Section~\ref{sec:alg:ua-semantics}).

Figure~\ref{fig:full-semantics} gives the semantics encoding rules for a representative subset of query operators. 
Please find the full set of rules in Appendix~\ref{sec:all-encoding}.
%\pinhanrevision{The full set of rules can be found in \appx{Appendix~\ref{sec:all-encoding}}{the Appendix of the extended version~\cite{extended}}.}
Here, $\encodefullsemantics$ generates an SMT formula $\queryopencoding$ that encodes a given operator $\queryop$'s semantics for all input tables up to a certain bound.\footnote{Note that the actual values of $\queryop$'s non-table arguments (such as predicates and expressions) are all given.}
$\queryopencoding$ is always over free variables $\queryopinput$ (denoting an input table) and $\queryopoutput$ (denoting $\queryop$'s output table). Any satisfying assignment of $\queryopencoding$ corresponds to a genuine input-output behavior of $\queryop$; that is, all outputs encoded by $\queryopencoding$ are reachable. 
$\queryopencoding$ has another free variable $\underapproxvar$, called ``UA variable'', which denotes the UA choice---as we will see below, this UA variable $\underapproxvar$ plays a central role in our encoding.

\newpara{Filter.}
Rule (1) encodes $\filter_{\predicate}$'s reachable outputs for all input tables with up to $n$ tuples. 
Each $\tuple_i$ is a symbolic tuple from the input table $\queryopinput$, and $\tuple'_i$ is the corresponding symbolic tuple in the output table $\queryopoutput$. 
We use an uninterpreted predicate $\del$ to denote if a tuple is deleted. Some $\tuple_i$'s may be deleted, so $\queryopinput$ may have fewer than $n$ tuples. For a non-deleted $\tuple_i$ that does not meet the filtering condition $\predicate$, we use $\del(\tuple'_i)$ to mean the corresponding $\tuple'_i$ is deleted in $\queryopoutput$. 

Our key novelty is to embed UA variable $\underapproxvar$ into the semantics encoding. 
For $\filter_{\predicate}$, $\underapproxvar$ is a vector of variables, where each $\underapproxvar_i$ denotes the $i$th value  $\underapprox_i$ in the UA. 
Recall from Section~\ref{sec:alg:unders} that $\underapprox_i = \truevalue$ means $\tuple_i$ is present and satisfies  $\predicate$.
For such $\tuple_i$, $\queryopencoding_{i, \truevalue}$ encodes the corresponding $\tuple'_i$. 
$\queryopencoding_{i, \falsevalue}$ handles the $\falsevalue$ case. 
The final formula $\queryopencoding$ is built \emph{compositionally}, by conjoining $\queryopencoding_{i, \truevalue} \land \queryopencoding_{i, \falsevalue}$ across all $i$'s from $1$ to $n$.
As we will show shortly, all of our semantics encoding rules are compositional. 
To simplify our presentation, Rule (1) uses some auxiliary functions. 
$\cmdExist(\tuple_i, \tuple'_i, \predicate) 
\assign 
\big(
\neg \del(\tuple_i) 
\land \denot{\predicate}_{\tuple_i}
\big)
\land
\neg \del(\tuple'_i)
$
encodes the case where input tuple $\tuple_i$ is present and satisfies $\predicate$, and (therefore) the output tuple $\tuple'_i$ exists. 
In this case, we also have
$
\cmdCopy(\tuple_i, \tuple'_i, \attrlist) 
\assign
\bigland\limits_{a \in \attrlist} 
    \denot{\tuple'_i.a} = \denot{\tuple_i.a}
$,
which creates the content of $\tuple'_i$ by copying values of attributes in $\attrlist$\footnote{We assume attributes are given, to simplify our presentation. In general, they can be easily inferred from the input schema.} from $\tuple_i$ to $\tuple'_i$. 
$\queryopencoding_{i, \falsevalue}$ encodes the situation where $\tuple'_i$ is deleted using 
$
\cmdCondDel(\tuple_i, \tuple'_i, \predicate) 
\assign
\Big(
\del(\tuple_i) 
\lor 
\big(
\neg \del(\tuple_i) \land \neg \denot{\predicate}_{\tuple_i} 
\big) 
\Big)
\land 
\del(\tuple'_i)
$.
Here, the corresponding $\tuple_i$ is either not present or does not meet $\predicate$. 
Finally, we note that $\underapproxvar_i$ only considers non-top values---i.e., $\truevalue$ and $\falsevalue$ for filter---because considering $\unknownvalue$ is not necessary when encoding the full semantics.

\newpara{Projection.}
Rule (2) is almost the same as Rule (1), except that the attribute list $\attrlist$ is given and the ``filtering condition'' is always $\truepredicate$. 
\xinyurevision{Note that aggregate functions are always precisely encoded (i.e., no UAs) as also shown in the rule, although under-approximation happens when encoding the semantics of projection. This is also the case for other operators that use aggregate functions.}

\newpara{Inner join.}
Rule (3) follows the same principle, but is slightly more involved. 
First, each tuple $\tuple\doubleprime_{i, j}$ in $\queryopoutput$ corresponds to the join result of $\tuple_i$ from $\queryopinput_1$ and $\tuple'_j$ from $\queryopinput_2$. 
Variable $\underapproxvar_{i, j}$ denotes UA value $\underapprox_{i, j}$. 
$\queryopencoding_{i, j, \truevalue}$ encodes the input-output behavior when $\underapproxvar_{i, j} = \truevalue$. 
Here, we first assert $\tuple\doubleprime_{i, j}$ exists in the output, and then use $\cmdCopy$ to construct the content in $\tuple\doubleprime_{i, j}$. 
Note that $\cmdExist$ is extended here to accept a pair of input tuples: 
$\cmdExist \big( (\tuple_i, \tuple'_j), \tuple\doubleprime_{i, j}, \predicate \big) 
\assign
\big(
\neg \del(\tuple_i) 
\land 
\neg \del(\tuple'_j)
\land 
\denot{\predicate}_{\tuple_i, \tuple_j}
\big)
\land 
\neg \del(\tuple\doubleprime_{i, j})
$. 
$\queryopencoding_{i, j, \falsevalue}$ encodes the $\falsevalue$ case, with an extended $\cmdCondDel$ function. 
\[
\cmdCondDel \big( ( \tuple_i, \tuple'_j), \tuple\doubleprime_{i, j}, \predicate \big) 
\assign
\Big(
\del(\tuple_i) 
\lor 
\del(\tuple'_j) 
\lor 
\big(
\neg \del(\tuple_i) 
\land 
\neg \del(\tuple'_j) 
\land 
\neg \denot{\predicate}_{\tuple\doubleprime_{i, j}} 
\big) 
\Big)
\land 
\del(\tuple\doubleprime_{i, j})
\]

\begin{figure}[!t]
\scriptsize
%\footnotesize
\centering

\[
\arraycolsep=1pt\def\arraystretch{1}
\begin{array}{ll}

(1) & 
\irule
{
\begin{array}{l}

\queryopinput = [\tuple_1, \mydots, \tuple_n] 

\quad

\queryopoutput = [\tuple'_1, \mydots, \tuple'_n]

\quad

\underapproxvar = [ \underapproxvar_1, \mydots, \underapproxvar_n ]

\quad

\attrlist = \tableattrs(\queryopinput)

\\

\queryopencoding_{i, \truevalue} = 
( \underapproxvar_i = \truevalue ) 
\to 
\big(
\cmdExist(\tuple_i, \tuple'_i, \denot{\predicate}_{\tuple'_i}) \land \cmdCopy(\tuple_i, \tuple'_i, \attrlist) 
\big)

\quad

\queryopencoding_{i, \falsevalue} = 
( 
\underapproxvar_i = \falsevalue ) 
\to 
\cmdCondDel(\tuple_i, \tuple'_i, \denot{\predicate}_{\tuple'_i}
)

\end{array}
}
{
\encodefullsemantics 
\big( 
\filter_{\predicate}
\big) 
\ruleleadsto 
\bigland_{i = 1, \mydots, n}
\queryopencoding_{i, \truevalue} 
\land 
\queryopencoding_{i, \falsevalue}
}

\\ \\ 

(2) & 
\irule
{
\begin{array}{l}

%\neg \text{hasAggregate(\attrlist)}  \quad
\queryopinput = [\tuple_1, \mydots, \tuple_n] 
\quad
\queryopoutput = [\tuple'_{1}, \mydots, \tuple'_n] 
\quad
\underapproxvar = [ \underapproxvar_1, \mydots, \underapproxvar_n ] 

\\

\queryopencoding_{i, \truevalue} = 
( \underapproxvar_i = \truevalue ) 
\to 
\big( 
\cmdExist(\tuple_i, \tuple'_i, \truepredicate) 
\land
\cmdCopy(\tuple_i, \tuple'_i, \attrlist) 
\big)

\quad 

\queryopencoding_{i, \falsevalue} = 
( \underapproxvar_i = \falsevalue ) 
\to 
\cmdCondDel(\tuple_i, \tuple'_i, \truepredicate)

\end{array}
}
{
\encodefullsemantics \big( \proj_\attrlist \big)  
\ruleleadsto 
\bigland_{i = 1, \mydots, n}
\queryopencoding_{i, \truevalue} 
\land 
\queryopencoding_{i, \falsevalue}
}

\\ \\

(3) & 
\irule
{
\begin{array}{l}

\queryopinput_1 = [ \tuple_1, \mydots, \tuple_{n_1} ] 

\ \ 

\queryopinput_2 = [ \tuple'_1, \mydots, \tuple'_{n_2} ] 

\ \ 

\queryopoutput = [ \tuple\doubleprime_{1,1}, \mydots, \tuple\doubleprime_{n_1, n_2} ] 

\ \ 

\underapproxvar = \big[ [ \underapproxvar_{1, 1}, \mydots \underapproxvar_{1, n_2} ], \mydots, [ \underapproxvar_{n_1, 1 }, \mydots, \underapproxvar_{n_1, n_2}] \big]

\ \

\attrlist_i =  \tableattrs(\queryopinput_i)

\\[2pt]

\queryopencoding_{i, j, \truevalue} = 
( \underapproxvar_{i,j} = \truevalue ) 
\to 
\cmdExist 
\Big( 
( \tuple_i, \tuple'_j ), \tuple\doubleprime_{i, j}, \denot{\predicate}_{\tuple_i, \tuple'_j} 
\Big) 
\land 
\cmdCopy( \tuple_i, \tuple\doubleprime_{i,j}, \attrlist_1 ) 
\land 
\cmdCopy( \tuple'_j, \tuple\doubleprime_{i,j}, \attrlist_2 )

\\ 

\queryopencoding_{i, j, \falsevalue} = 
( \underapproxvar_{i,j} = \falsevalue )  
\to 
\cmdCondDel 
\Big( 
( \tuple_i, \tuple'_j ),
\tuple\doubleprime_{i,j}, 
\denot{\predicate}_{\tuple_i, \tuple'_j} 
\Big)

\end{array}
}
{
\encodefullsemantics \big( \ijoin_\predicate \big) 
\ruleleadsto 
\bigland_{i = 1, \mydots, n_1} 
\bigland_{ j = 1, \mydots, n_2}
\queryopencoding_{i, j, \truevalue}
\land 
\queryopencoding_{i, j, \falsevalue}
}

\\ \\

(4) & 
\irule
{
\begin{array}{l}

\queryopinput = [\tuple_1, \mydots, \tuple_n] 

\quad

\queryopoutput = [\tuple'_{1}, \mydots, \tuple'_n] 

\quad

\underapproxvar = [ \underapproxvar_1, \mydots, \underapproxvar_n ]

\\[2pt]

\queryopencoding_{i, \falsevalue} =

( \underapproxvar_i = \falsevalue ) 

\to

\Big(

\big( 
\del( \tuple_i ) 
\lor 

\Big( 
\neg \del( \tuple_i )

\land

\biglor_{j=1, \mydots, i-1} 
\big( 
\neg \del(\tuple_j) 
\land 
\bigland_{a \in \vec{\expression}} \denot{\tuple_i.a} = \denot{\tuple_j.a} 
\land 
\group(\tuple_i) = j
\big) 

\Big) 
\big)

\land 

\del( \tuple'_i )

\Big)

\\

\queryopencoding_{i, \neg \falsevalue} = 
\neg \del(\tuple_i) 
\land 
\neg \biglor_{j=1, \mydots, i-1} 
\Big( 
\neg \del(\tuple_j) \land \bigland_{a \in \vec{\expression}}\denot{\tuple_i.a} = \denot{\tuple_j.a}
\Big) 
\land 
\group(\tuple_i) = i

\\

\queryopencoding_{i, \truevalue} = 
( \underapproxvar_i = \truevalue ) 
\to 
\Big( 
\queryopencoding_{i, \neg \falsevalue}
\land 
\neg \denot{\predicate}_{\group^{-1}(i)} 
\land 
\del(\tuple'_i)
\Big)

\\ 

\queryopencoding_{i, \truevalue_{\predicate}} = 
( \underapproxvar_i = \truevalue_{\predicate} ) 
\to 
\Big( 
\queryopencoding_{i, \neg \falsevalue}
\land 
\denot{\phi}_{\group^{-1}(i)} \land \neg \del(\tuple'_i) \land \cmdCopy(\group^{-1}(i), \tuple'_i, \attrlist)
\Big)

\end{array}
}
{
\encodefullsemantics \big( \groupby_{\vec{\expression}, \attrlist, \predicate} \big) 
\ruleleadsto 
\bigland_{i = 1, \mydots, n}
\queryopencoding_{i, \truevalue} 
\land 
\queryopencoding_{i, \truevalue_{\predicate}} 
\land 
\queryopencoding_{i, \falsevalue} 
}

\begin{comment}
    
\\ \\

(5) & 
\irule
{
\begin{array}{l}

\queryopinput = [\tuple_1, \mydots, \tuple_n] 

\quad

\queryopoutput = [\tuple'_{1}, \mydots, \tuple'_n] 

\quad

\underapproxvar = [ \underapproxvar_1, \mydots, \underapproxvar_n ]

\quad

\attrlist = \tableattrs(\queryopinput)

\\

\queryopencoding_{i, \falsevalue} = 
( \underapproxvar_i = \falsevalue  ) 
\to 
\del(\tuple_i)

\quad

\queryopencoding_{i, \truevalue_k} 
= 
( \underapproxvar_i = \truevalue_k ) 
\to 

\big(

\sum_{j=1}^{n}
\indicator 
\Big( 
j \neq i \land \denot{\expression}_{\tuple_j} < \denot{\expression}_{\tuple_i} 
\Big) 
= 
k 

\land 

\cmdCopy 
\Big( 
\tuple_i, \tuple'_{ k + \sum_{j=1}^{i-1}\indicator(\tuple_j = \tuple_i)}, \attrlist 
\Big)

\big)

\end{array}
}
{
\encodefullsemantics \big( \orderby_\expression \big) 
\ruleleadsto 
\bigland_{i = 1, \mydots, n}
\bigland_{k = 1, \mydots, n}
\queryopencoding_{i, \falsevalue}
\land 
\queryopencoding_{i, \truevalue_k} 
}

\end{comment}

\end{array}
\]

\vspace{-10pt}
\caption{Full semantics for a representative subset of query operators from Figure~\ref{fig:sql-syntax}.}
\label{fig:full-semantics}
\vspace{-5pt}
\end{figure}

\newpara{GroupBy.}
Rule (4) also constructs the final encoding compositionally, by conjoining the formulas for $\truevalue, \truevalue_{\predicate}$ and $\falsevalue$ across all $i$'s from $1$ to $n$. 
Recall from Section~\ref{sec:alg:unders} that $\underapprox_i = \falsevalue$ means $\tuple_i$ does not form a new group. 
Therefore, $\queryopencoding_{i, \falsevalue}$ encodes the constraints for $\tuple_i$ and asserts  $\tuple'_i$ is not present.
There are two possibilities: $\tuple_i$ is deleted, or there exists some $\tuple_j (j < i)$ in the same group as $\tuple_i$. 
Here, $\group$ is an uninterpreted function to record a tuple's belonging group, which will be useful later. 
$\queryopencoding_{i, \truevalue}$ and $\queryopencoding_{i, \truevalue_{\predicate}}$ both consider the case where $\tuple_i$ forms a new group; so they share $\queryopencoding_{i, \neg \falsevalue}$ as a common piece, which simply says $\tuple_i$ is present and there is no prior tuple $\tuple_j$ that belongs to the same group. 
$\queryopencoding_{i, \truevalue}$ further states that those tuples in $\tuple_i$'s group do not satisfy $\predicate$ and therefore we have $\tuple'_i$ deleted. 
On the other hand, $\queryopencoding_{i, \truevalue_{\predicate}}$ encodes the case where $\predicate$ is met and output tuple $\tuple'_i$ is present. 
Here, $\cmdCopy$ creates the content in $\tuple'_i$, based on a set $\vec{\tuple}$ of input tuples belonging to a group (given by $\group^{-1}$). 
\[
\cmdCopy \big( \vec{\tuple} , \tuple'_i, \attrlist \big) 
\assign
\bigwedge\limits_{a \in \attrlist}  \denot{\tuple'_i.a} = \denot{a}_{\vec{\tuple}}
\]

\subsection{Encoding Under-Approximate Semantics}
\label{sec:alg:ua-semantics}

To under-approximate a query, we first under-approximate its operators.

\newpara{Encoding UA semantics for query operators.}
Given a UA choice $\underapprox \in \underapproxfamily_{\queryop}$ for a query operator $\queryop$, we encode the UA semantics of $\queryop$ against $\underapprox$ as follows. 
\[
\encodeuasemantics( \queryop, \underapprox )
\assign
\encodechoice(\underapprox) \land \encodefullsemantics(\queryop)
\]
$\encodefullsemantics$ is described in Figure~\ref{fig:full-semantics}, and $\encodechoice$ is defined below. 
\[
\begin{array}{rll}
\encodechoice(\underapprox) & 
\assign 
& 
\bigland_{\underapprox_i \in \underapprox} \encodeuavalue( \underapprox_i ) 
\\[5pt]
\encodeuavalue(\underapprox_i) & 
\assign 
& 
\begin{cases}
\ \ \underapproxvar_i = \underapprox_i 
& 
\text{if} \ \underapprox_i \neq \unknown 
\\
\ \ \bigvee_{c \in \dom(\underapprox_i) \setminus \set{\unknown}} \underapproxvar_i = c 
& 
\text{if} \  \underapprox_i = \unknown 
\\
\end{cases}
\end{array}
\]
Here, $\dom(\underapprox_i)$ gives all possible values for $\underapprox_i$---including top $\unknownvalue$---but $\encodeuavalue$ removes $\unknownvalue$; hence the final encoding considers only minimal UAs. 
While $\encodefullsemantics$ constructs a fixed formula for $\queryop$ that encodes reachable outputs for \emph{all} inputs (up to a bound), 
$\encodeuasemantics$ only encodes reachable outputs for the \emph{subset} of inputs specified by $\underapprox$. 
This (further) under-approximates $\queryop$, and enables efficient symbolic reasoning.

\newpara{Encoding UA semantics for queries.}
We can further encode the UA semantics for query $\sqlquery$, given a \emph{UA map} $\mapastnodetounderapprox$ that maps each of $\sqlquery$'s AST node $\astnode$ to a UA choice $\underapprox \in \underapproxfamily_{\astnode}$, as follows. 
\[
\begin{array}{ll}
\encode( \mapastnodetounderapprox ) 
\assign 
\bigland\limits_{
\substack{
( \astnode \; \mapsto \; \underapprox ) \in \mapastnodetounderapprox \\ 
\textit{children}( \astnode ) = [ \astnode_1, \mydots, \astnode_l ]
}
}
\Big(
\encode ( \astnode, \underapprox )
\land 
(
\astnodeinput^{\astnode}_1 = \astnodeoutput^{\astnode_1} 
\land 
\mydots 
\land 
\astnodeinput^{\astnode}_l = \astnodeoutput^{\astnode_l} 
)
\Big)_{\labelastnode}
\end{array}
\]
Here, $\astnode$ is an AST node in $\sqlquery$ with $l$ children, and $\underapproxfamily_{\astnode} = \underapproxfamily_{\getqueryop(\astnode)}$.
{$\encode( \astnode, \underapprox )$} is defined as:
\[\footnotesize
\begin{array}{ll}
%\hspace{-30pt}
\encode( \astnode, \underapprox ) 
\\[5pt] 
%\hspace{-20pt}
\assign
\begin{cases}
\Big(
\encodefullsemantics \big( \getqueryop ( \astnode) \big) 
\Big)
\big[ 
\queryopinput_1 \mapsto \astnodeinput^{\relation}_{1},  
\mydots, 
\queryopinput_l \mapsto \astnodeinput^{\relation}_l, 
\queryopoutput \mapsto \astnodeoutput^{\astnode}
\big] 
& 
\hspace{-16pt}
% \hfill
\text{if } \getqueryop(\astnode) = \relation

\\[5pt]

\Big(
\encodefullsemantics \big( \getqueryop ( \astnode) \big) 
\Big)
\big[ 
\queryopinput_1 \mapsto \astnodeinput^{\astnode}_{1}, 
\mydots, 
\queryopinput_l \mapsto \astnodeinput^{\astnode}_l, 
\queryopoutput \mapsto \astnodeoutput^{\astnode}
\big]
& 
% \hspace{-30pt}
\hspace{-16pt}
% \hfill
\text{if }  \getqueryop(\astnode) = \with \text{ or } \rename

\\[5pt]

\Big(
\encodechoice( \underapprox ) 
\land 
\encodefullsemantics \big( \getqueryop ( \astnode) \big) 
\Big)
\big[ 
\queryopinput_1 \mapsto \astnodeinput^{\astnode}_{1}, \mydots, \queryopinput_l \mapsto \astnodeinput^{\astnode}_l, \queryopoutput \mapsto \astnodeoutput^{\astnode}, 
\varUAvector \mapsto \varUAvector^{\astnode}
\big]
& 
% \hfill
\text{otherwise}

\end{cases}
\end{array}
\]
which encodes the UA semantics for query operator $\getqueryop(\astnode)$ at $\astnode$ against $\underapprox$. 
We highlight the ``otherwise'' case, where we rename each variable $\queryopinput_i$ to $\queryopinput_{i}^{\astnode}$ (that denotes the $i$th input to $\astnode$), and similarly $\queryopoutput$ to $\queryopoutput_i^{\astnode}$ (which denotes $\astnode$'s output), as well as $\varUAvector$ to $\underapproxvar^{\astnode}$ (which denotes the UA choice at $\astnode$). 
Operators in the other cases always have one UA, so we do not need to encode it.
For operator $\relation$, we use $\queryopinput^{\relation}$ (instead of $\queryopinput^{\astnode}$), since multiple AST nodes may be labeled with the same relation. 
$\encodeuasemantics(\mapastnodetounderapprox)$ conjoins encodings across all $\astnode \mapsto \underapprox$ entries in $\mapastnodetounderapprox$. Each clause is labeled with $\labelastnode$, which (as we will see in later sections) is used for conflict extraction. 
As a final minor note, $\encodeuasemantics$ assumes access to $\sqlquery$'s schema, which allows $\encodefullsemantics$ to obtain attributes of $\sqlquery$'s intermediate tables.

$\encode(\mapastnodetounderapprox)$ can be generalized to encode UA semantics for multiple queries $\sqlquery_1, \mydots, \sqlquery_n$, by extending $\mapastnodetounderapprox$ to include all $\astnode \mapsto \underapprox$ entries across all $\sqlquery_i$'s.

\begin{example}
Consider the UA map $\mapastnodetounderapprox$ from Figure~\ref{fig:ex:M}.
Let us briefly explain some of its entries. 
Consider $\astnode_2 \mapsto \big[ [ \truevalue, \falsevalue ], [ \falsevalue, \falsevalue ] \big]$, where $\astnode_2$ is a left join operator with two input tables $\queryopinput^{\astnode_2}_1$ (i.e., \emph{Invoices}) and $\queryopinput^{\astnode_2}_2$ (i.e., output of $\astnode_3$). 
$\mapastnodetounderapprox(\astnode_2)$ states: 
$\queryopinput^{\astnode_2}_1$ and $\queryopinput^{\astnode_2}_2$ have up to 2 tuples; 
the first tuple in $\queryopinput^{\astnode_2}_1$ and the first tuple in $\queryopinput^{\astnode_2}_2$ meets the join condition; the other 3 tuple pairs do not join. 
Consider $\astnode_1 \mapsto [\truevalue, \falsevalue]$, where $\astnode_1$ is a projection.
$\mapastnodetounderapprox(\astnode_1)$ states that only the first tuple in $\queryopinput^{\astnode_1}$ exists. 
In other words, although $\queryopoutput^{\astnode_2}$ has up to 2 tuples (according to $\mapastnodetounderapprox(\astnode_2)$), $\mapastnodetounderapprox(\astnode_1)$ considers only those tables of size one\footnote{$\tool$ sets a node's UA size heuristically, ranging from 2 to 16 and 2x2 to 16x16 for unary and binary operators.}.  
Consider $\astnode_3 \mapsto [\truevalue_{\truepredicate}, \truevalue_{\truepredicate}]$, where $\astnode_3$ is $\groupby$. 
$\mapastnodetounderapprox(\astnode_3)$ describes tables $\queryopinput^{\astnode_3}$ with 2 tuples, where each tuple leads to a new group. 
$\astnode'_1 \mapsto [ \truevalue_{\truepredicate}, \falsevalue ]$ also concerns $\groupby$ but uses a different UA. It considers $\queryopinput^{\astnode'_1}$ with up to 2 tuples, where the first tuple forms a new group but the second does not.

\label{ex:ua-map}
\end{example}

\subsection{Top-Level Algorithm and Problem Statement}
\label{sec:alg:top-level}

\begin{figure}[!t]
\vspace{-5pt}
\small 
\begin{algorithm}[H]
\caption{Top-level algorithm.}
\label{alg:top-level}
\begin{algorithmic}[1]

\Statex\myprocedure{\geninput($\sqlquery_1, \mydots, \sqlquery_n, \appcond$)}
\vspace{1pt}

\Statex\Input{Each $\sqlquery_i$ is a query. $\appcond$ is an application condition (expressed as an SMT formula).}

\Statex\Output{A database $\inputdatabase$ that satisfies $\appcond$, or $null$ indicating no such $\inputdatabase$ is found.}
\vspace{2pt}

\State
$\mapastnodetounderapprox \assign \conflictdrivenUAsearch( \sqlquery_1, \mydots, \sqlquery_n, \appcond )$; 
\label{alg:top-level:call-search}

\vspace{1pt}

\If{$\mapastnodetounderapprox = null$}
\label{alg:top-level:null}
\Return $null$;
\EndIf

\State
\Return 
$\buildinputdatabase(\mapastnodetounderapprox, \appcond)$;
\label{alg:top-level:return-input}
\end{algorithmic}
\end{algorithm}
\vspace{-15pt}
\end{figure}

Let us switch gears and present our under-approximation search algorithm. 
Algorithm~\ref{alg:top-level} shows our top-level algorithm. 
Given $\sqlquery_1, \mydots, \sqlquery_n$ and an SMT formula $\appcond$ (encoding the application condition) over variables $\queryoutput_1, \mydots, \queryoutput_n$ (each $\queryoutput_i$ denoting $\sqlquery_i$'s output), $\geninput$ returns a \emph{satisfying input} $\inputdatabase$ such that $\appcond[ \queryoutput_1 \mapsto \sqlquery_1(\inputdatabase), \mydots, \queryoutput_n \mapsto \sqlquery_n(\inputdatabase)]$ is true, or returns $null$ if no such $\inputdatabase$ is found.
%---we will see when this would happen in later sections as we dive into more technical details.

Our key novelty lies in $\conflictdrivenUAsearch$ (line~\ref{alg:top-level:call-search}), which searches for a \emph{satisfying} UA map $\mapastnodetounderapprox$. 
Given such an $\mapastnodetounderapprox$, we invoke $\buildinputdatabase$ (line~\ref{alg:top-level:return-input}) to derive a satisfying input. 
Below, we first formulate the under-approximation search problem, and then explain $\buildinputdatabase$. 
%The next sections will present algorithms to solve this under-approximation search problem. 

\begin{definition}
\textbf{(Under-Approximation Search Problem).}
Given $n$ queries $\sqlquery_1, \mydots, \sqlquery_n$ from the language in Figure~\ref{fig:sql-syntax}, 
given $\underapproxfamily_{\queryop}$ that defines a family of UAs for each query operator $\queryop$, 
and given an SMT formula $\appcond$ over variables $\queryoutput_1, \mydots, \queryoutput_n$, 
find a UA map $\mapastnodetounderapprox$ which maps \emph{each} AST node $\astnode$ in $\sqlquery_i$ (for all $i \in [1,n]$) to a \emph{minimal} UA $\underapprox \in \underapproxfamily_{\getqueryop(\astnode)}$ such that 
\begin{equation}
\begin{array}{c}
\encode(\mapastnodetounderapprox) 
\land 
\Big(
\appcond[ \queryoutput_1 \mapsto \astnodeoutput^{\astnode_1}, \mydots, \queryoutput_n \mapsto \astnodeoutput^{\astnode_n}] 
\Big)_{\labelappcond}
\label{eq:UAmap-condition}
\end{array}
\end{equation}
is satisfiable. 
Here, $\astnode_i$ is $\sqlquery_i$'s root AST node (and recall that we use variable $\astnodeoutput^{\astnode}$ to denote  $\astnode$'s output), and $\labelappcond$ is simply a label for the application condition (which will later be used for conflict extraction). 
We call such $\mapastnodetounderapprox$ a \emph{satisfying under-approximation map}, or \emph{satisfying UA map} for short.  
\label{def:UA-search}
\end{definition}

In fact, given any $\mapastnodetounderapprox$ that maps AST nodes to UAs (which can be non-minimal), we can check the satisfiability of the above formula~(\ref{eq:UAmap-condition}): if it is satisfiable, we can obtain a satisfying UA map from $\mapastnodetounderapprox$, by refining all non-minimal UAs in $\mapastnodetounderapprox$ to minimal ones with the help of a satisfying assignment of~(\ref{eq:UAmap-condition}). The next section will explain how this works in detail.

\vspace{5pt}

{Given a satisfying UA map $\mapastnodetounderapprox$, $\buildinputdatabase$ first obtains a model $\smtmodel$ for the formula in~(\ref{eq:UAmap-condition}). Then, $\smtmodel(\queryopinput^{\relation})$ gives the content for each input relation $\relation$.}

\begin{comment}
    
\todo{
Given a satisfying UA map $\mapastnodetounderapprox$ where the UAs are minimal, $\buildinputdatabase$ returns a satisfying input as follows.
We first build the formula (\ref{eq:UAmap-condition}) for $\mapastnodetounderapprox$, which encodes all database instances satisfying the SMT formula $\appcond$ based on the under-approximation of each AST node in  $\mapastnodetounderapprox$. Then we ask for a model of the formula from SMT solvers, which maps symbolic variables to concrete values for each cell in the table, and construct a database instance according to the schema of queries.
}

\end{comment}

\begin{example}
Figure~\ref{fig:ex:M} shows a satisfying UA map $\mapastnodetounderapprox$ for the equivalence refutation example from Section~\ref{sec:overview}. 
Given this $\mapastnodetounderapprox$, $\buildinputdatabase$ can generate the database in Figure~\ref{fig:example:cex}. 
\label{ex:sat-ua-map}
\end{example}

\subsection{Conflict-Driven Under-Approximation Search}
\label{sec:alg:search}

\begin{algorithm}[!t]
\small 
\caption{Algorithm for $\conflictdrivenUAsearch$.}
\label{alg:conflict-driven-UA-search}
\begin{algorithmic}[1]

\Statex\myprocedure{$\conflictdrivenUAsearch(\sqlquery_1, \mydots, \sqlquery_n, \appcond)$}
\vspace{1pt}

\Statex\Input{Each $\sqlquery_i$ is a query. $\appcond$ is an application condition.}

\Statex\Output{A satisfying UA map $\mapastnodetounderapprox$, or $null$ indicating no such $\mapastnodetounderapprox$ is found.}
\vspace{2pt}

\State
$\mapastnodetounderapprox \assign \emptyset$; \ \ \ \ 
$\conflicts \assign \emptyset$; \ \ \ \ 
$\worklist \assign \bigcup_{i \in [1,n]} \getastnodes( \sqlquery_i )$; 
\label{alg:conflict-driven-UA-search:init}

\While{true}
\label{alg:conflict-driven-UA-search:while-begin}

\State 
$\encodingall \assign \encode(\mapastnodetounderapprox) \land \appcond$; 
\label{alg:conflict-driven-UA-search:condition}

\If{$\encodingall$ is satisfiable}
\label{alg:conflict-driven-UA-search:sat-case}

\State
$\smtmodel \assign \getmodel(\encodingall)$;
\ \ \ \ 
$\mapastnodetounderapprox \assign \mapastnodetounderapprox \big[ \astnode \mapsto \smtmodel( \varUAvector^{\astnode} ) \ | \ \astnode \in \dom(\mapastnodetounderapprox) \big]$;
\label{alg:conflict-driven-UA-search:obtain-model-and-update}

\vspace{1pt}

\If{$\worklist = \emptyset$}
\Return $\mapastnodetounderapprox$;
\EndIf
\label{alg:conflict-driven-UA-search:return-M}

\State
$( \astnode_1, \mydots, \astnode_k ) \assign \worklist.\textit{remove}()$; \ \ \ \ 
$\mapastnodetounderapprox \assign \mapastnodetounderapprox[ \astnode_1 \mapsto \getTopUA(\underapproxfamily_{\astnode_1}), \mydots, \astnode_k \mapsto \getTopUA(\underapproxfamily_{\astnode_k}) ]$;
\label{alg:conflict-driven-UA-search:add-more-nodes}

\Else 
\label{alg:conflict-driven-UA-search:unsat-case}

\State
$\conflictASTnodes \assign \extractconflict( \encodingall )$; 
\ \ \ \ 
$( \mapastnodetounderapprox, \conflicts ) \assign \fixconflict( \mapastnodetounderapprox, \conflictASTnodes, \appcond, \conflicts)$;
\label{alg:conflict-driven-UA-search:resolve-conflict}

\If{$\mapastnodetounderapprox = null$} 
\Return $null$;
\label{alg:conflict-driven-UA-search:return-null}

\EndIf

\EndIf

\EndWhile
\label{alg:conflict-driven-UA-search:while-end}

\end{algorithmic}
\end{algorithm}

Now, let us unpack the $\conflictdrivenUAsearch$ procedure---see Algorithm~\ref{alg:conflict-driven-UA-search}.
At a high level, it iteratively generates a sequence of \emph{candidate} UA maps: 
$\mapastnodetounderapprox$ is initially empty (line~\ref{alg:conflict-driven-UA-search:init}), and is iteratively updated in two ways depending on if it satisfies $\encodingall$ at line~\ref{alg:conflict-driven-UA-search:condition}. 
This $\encodingall$ is essentially the same condition as in~(\ref{eq:UAmap-condition}), but we omit the variable renaming part and the label for $\appcond$, to simplify the presentation. 
Note that before termination, $\mapastnodetounderapprox$ is \emph{partial}; that is, $\mapastnodetounderapprox$ does not contain all AST nodes from all $\sqlquery_i$'s. 
To update $\mapastnodetounderapprox$, we either 
(i) add more entries with new AST nodes (line~\ref{alg:conflict-driven-UA-search:add-more-nodes}), or 
(ii) modify existing entries by remapping some of the existing nodes to new UAs (done by $\fixconflict$ at line~\ref{alg:conflict-driven-UA-search:resolve-conflict}). 
In what follows, we explain in more detail how (i) and (ii) work, beginning with (i).

\newpara{Lines~\ref{alg:conflict-driven-UA-search:sat-case}-\ref{alg:conflict-driven-UA-search:add-more-nodes}.}
If $\encodingall$ is satisfiable (line~\ref{alg:conflict-driven-UA-search:sat-case}), we first update $\mapastnodetounderapprox$ using a satisfying assignment $\smtmodel$ of $\encodingall$ (line~\ref{alg:conflict-driven-UA-search:obtain-model-and-update}). 
In particular, $\smtmodel$ maps $\varUAvector^{\astnode}$ to a \emph{minimal} UA, for every AST node  in the domain of $\mapastnodetounderapprox$. 
This updated $\mapastnodetounderapprox$ is guaranteed to satisfy $\encodingall$ and contains only minimal UAs. But it may still be partial. 
Therefore, line~\ref{alg:conflict-driven-UA-search:return-M} checks if $\worklist$ (which initially includes all AST nodes) contains any additional nodes. 
If not (meaning $\mapastnodetounderapprox$ contains all nodes), $\mapastnodetounderapprox$ must be a satisfying UA map and hence the algorithm terminates (line~\ref{alg:conflict-driven-UA-search:return-M}). 
Otherwise (i.e., $\mapastnodetounderapprox$ is partial), line~\ref{alg:conflict-driven-UA-search:add-more-nodes} adds $k$ entries to $\mapastnodetounderapprox$, each mapping a new node $\astnode_i$ (removed from $\worklist$) to a Top UA (from $\underapproxfamily_{\astnode_i}$). 
An implication of line~\ref{alg:conflict-driven-UA-search:add-more-nodes} is that $\mapastnodetounderapprox$ at line~\ref{alg:conflict-driven-UA-search:condition} may map some nodes to Top UAs---this is exactly why we need line~\ref{alg:conflict-driven-UA-search:obtain-model-and-update}. 
This way, we also maintain an invariant that $\mapastnodetounderapprox$ at line~\ref{alg:conflict-driven-UA-search:condition} has at most $k$ entries with non-minimal UAs ($k$ is a hyperparameter that can be tuned heuristically). 
Maintaining such a \emph{lightweight} $\mapastnodetounderapprox$ helps make the satisfiability check (at line~\ref{alg:conflict-driven-UA-search:sat-case}) fast. 
We will next explain how the else branch (lines~\ref{alg:conflict-driven-UA-search:unsat-case}-\ref{alg:conflict-driven-UA-search:return-null}) maintains this invariant. 
%On the other hand, one might wonder why not line~\ref{alg:conflict-driven-UA-search:add-more-nodes} directly maps each $\astnode_i$ to a minimal UA.  \todo{write later.}

\begin{example}
Consider $\sqlquery$ and $\sqlquery'$ from Section~\ref{sec:overview}. 
Suppose $\mapastnodetounderapprox_1$ in Figure~\ref{fig:ex:M-sequence} is the (partial) UA map at line~\ref{alg:conflict-driven-UA-search:condition}. 
$\mapastnodetounderapprox_1$'s corresponding $\encodingall_1$ is satisfiable, so line~\ref{alg:conflict-driven-UA-search:obtain-model-and-update} will obtain a satisfying assignment $\smtmodel$.
Suppose $\smtmodel(\varUAvector^{\astnode_1}) = [\truevalue, \falsevalue]$ and $\smtmodel(\varUAvector^{\astnode'_1}) = [ \falsevalue, \falsevalue]$. 
This leads to $\mapastnodetounderapprox_1 = \big\{ \astnode_1 \mapsto [\truevalue, \falsevalue], \astnode'_1 \mapsto [ \falsevalue, \falsevalue] \big\}$ after line~\ref{alg:conflict-driven-UA-search:obtain-model-and-update}.
$\worklist$ is not empty at line~\ref{alg:conflict-driven-UA-search:return-M}, so line~\ref{alg:conflict-driven-UA-search:add-more-nodes} loads in more nodes---say $\astnode_2, \astnode'_2$---from $\worklist$, and maps them to Top UAs.
At this point, we obtain the UA map $\mapastnodetounderapprox_2$ shown in Figure~\ref{fig:ex:M-sequence}.
\label{ex:alg-add-nodes}
\end{example}

\begin{comment}
    
\begin{example}
\pinhan{
Suppose $\mapastnodetounderapprox$ maintains a map $\set{\astnode_1 \mapsto [\truevalue, \falsevalue], \astnode'_1 \mapsto [\truevalue_\predicate, \falsevalue]}$ and we want to two additional nodes, $\astnode_2$ and $\astnode'_2$.  The map $\mapastnodetounderapprox$ will first be updated with the new nodes mapped to their corresponding Top UA: $\mapastnodetounderapprox = \set{\astnode_1 \mapsto [\truevalue, \falsevalue], \astnode'_1 \mapsto [\truevalue_\predicate, \falsevalue], \astnode_2 \mapsto [\unknownvalue, \mydots, \unknownvalue], \astnode'_2 \mapsto [\unknownvalue, \mydots, \unknownvalue]}$.  Then, the updated $\mapastnodetounderapprox$ will be encoded and solved in the next iteration, where a model can be obtained to refine $\mapastnodetounderapprox$ so that the nodes $\astnode_2$ and $\astnode'_2$ will be mapped to a minimal UA.  
}
% \todo{from motivating example. show how line 7 and then line 5 works. then we do not need to show the same in section 2. this also makes section 2 shorter.}    
\end{example}
\end{comment}

\newpara{Lines~\ref{alg:conflict-driven-UA-search:unsat-case}-\ref{alg:conflict-driven-UA-search:return-null}.}
Let us now examine the else branch where $\encodingall$ is unsatisfiable. Intuitively, this means some entries $\mapastnodetounderapprox$ must be adjusted, in order for $\mapastnodetounderapprox$ to be compatible with $\appcond$. 
Recall that $\encodingall$ is a conjunction of  $|\mapastnodetounderapprox|+1$ clauses, one for $\appcond$ and each of the entries in $\mapastnodetounderapprox$. 
So if $\encodingall$ is unsat, we know a subset of clauses must be in conflict (i.e., their conjunction is unsat). 
In this case, we first use $\extractconflict$ to obtain the set $\conflictASTnodes$ of nodes corresponding to these conflicting clauses (line~\ref{alg:conflict-driven-UA-search:resolve-conflict}). 
More formally:
\[
\conflictASTnodes = \{  \astnode_i \ | \  \labelastnode_i \in \textsc{UnsatCore}(\encodingall)  \}
\]
Here, $\textsc{UnsatCore}$ extracts an unsatisfiability core of $\encodingall$. Then, $\conflictASTnodes$ includes an AST node $\astnode_i$ only if this unsat core has a clause labeled $\labelastnode_i$ (recall that a different label $\labelappcond$ is used for $\appcond$). 
Given $\conflictASTnodes$, we define the following ``projection'' operation which gives us the subset of entries from $\mapastnodetounderapprox$ at $\conflictASTnodes$.
\[
\mapastnodetounderapprox \mapprojection \conflictASTnodes
= \{  \astnode \mapsto  \mapastnodetounderapprox(\astnode) \ | \  \astnode \in \conflictASTnodes \}
\]
We call this set of entries a \emph{conflict}. 
$\mapastnodetounderapprox$ must be updated to map some nodes in $\conflictASTnodes$ to different UAs, because otherwise $\mapastnodetounderapprox \mapprojection \conflictASTnodes$ will always be a conflict, no matter how the other entries are modified or what new entries are added. 
This update is done by $\fixconflict$ (line~\ref{alg:conflict-driven-UA-search:resolve-conflict}), which Section~\ref{sec:alg:resolve-conflict} will describe in detail. 
At a high level, $\fixconflict$ returns a new $\mapastnodetounderapprox$ with no conflicts at $\conflictASTnodes$, and also maintains the aforementioned invariant that $\mapastnodetounderapprox$ is lightweight. 
It also takes as input $\conflicts$---which is a set of currently discovered conflicts---and returns a new one.
For instance, the previous conflict $\mapastnodetounderapprox \mapprojection \conflictASTnodes$ will be added to $\conflicts$. 
$\fixconflict$ guarantees the returned $\mapastnodetounderapprox$ does not manifest any of these known conflicts; this is critical for termination. 
If $\mapastnodetounderapprox$ is $null$ (line~\ref{alg:conflict-driven-UA-search:return-null}), it means no UA map exists---given the family of UAs---that can avoid the current conflict at $\conflictASTnodes$.
%We note that $\fixconflict$ is not restricted to only modifying UAs for nodes in $\conflictASTnodes$ (as we will see in Section~\ref{sec:alg:resolve-conflict}). 
We note that, while the update of  $\mapastnodetounderapprox$ is driven by a conflict of $\encodingall$, the updated $\mapastnodetounderapprox$ may not satisfy $\encodingall$; however, future iterations will keep fixing new conflicts until satisfaction.

\begin{example}
Consider our example in Section~\ref{sec:overview}, and suppose $\mapastnodetounderapprox_3$ in Figure~\ref{fig:ex:M-sequence} is the UA map at line~\ref{alg:conflict-driven-UA-search:condition}.
Its corresponding $\encodingall_3$ is unsat. 
$\extractconflict$ at line~\ref{alg:conflict-driven-UA-search:resolve-conflict} returns $\conflictASTnodes = \{ \astnode'_1, \astnode'_2, \astnode'_3, \astnode'_4, \astnode_1, \astnode_2 \}$. 
That is, $\mapastnodetounderapprox_3 \mapprojection \conflictASTnodes$ is a conflict---to understand why, focus on \emph{Invoices} which is shared by $\sqlquery$ and $\sqlquery'$. 
Let us assume $\encodingall_3$ is sat. 
Given $\mapastnodetounderapprox_3(\astnode'_1), \mapastnodetounderapprox_3(\astnode'_2), \mapastnodetounderapprox_3(\astnode'_3), \mapastnodetounderapprox_3(\astnode'_4)$, we know \emph{Invoices} must be empty, because $\mapastnodetounderapprox_3(\astnode'_1)$ constrains its input to be empty and $\astnode'_2, \astnode'_3, \astnode'_4$ are all left join operators. 
But, $\mapastnodetounderapprox_3(\astnode_1), \mapastnodetounderapprox_3(\astnode_2)$ suggest otherwise (i.e., \emph{Invoices} is non-empty). Contradiction. 
Note that $\astnode_3$ and $\astnode_4$ are not part of the conflict (although they were added by the previous iteration). 
$\fixconflict$ takes this conflict as input, and yields the $\mapastnodetounderapprox$ in Figure~\ref{fig:ex:M}. In particular, UAs at $\conflictASTnodes$ for $\mapastnodetounderapprox$ are not  in conflict anymore.

\label{ex:else-branch}
\end{example}

\subsection{Conflict Resolution and Accumulation}
\label{sec:alg:resolve-conflict}

Now let us proceed to the $\fixconflict$ procedure, which is presented in Algorithm~\ref{alg:resolveconflict}. 
It takes four inputs. $\appcond$ is the application condition, $\mapastnodetounderapprox$ is a UA map, $\conflictASTnodes \subseteq \dom(\mapastnodetounderapprox)$ is a subset of AST nodes from $\mapastnodetounderapprox$ where $\mapastnodetounderapprox \mapprojection \conflictASTnodes$ is a conflict, and $\conflicts$ is a set of conflicts. 
It returns a pair $(\mapastnodetounderapprox', \conflicts')$. 
In particular, $\mapastnodetounderapprox'$ is guaranteed to satisfy the following three properties, if it is not $null$. 
\begin{enumerate}[leftmargin=*]
\item 
$\mapastnodetounderapprox'$ is conflict-free at $\conflictASTnodes$. That is, $\encode(\mapastnodetounderapprox' \mapprojection \conflictASTnodes) \land \appcond$ is satisfiable. 
\item 
$\mapastnodetounderapprox'$ doesn't exhibit any of the conflicts in $\conflicts' \supseteq \conflicts$. That is, no subset of entries from $\mapastnodetounderapprox'$ is in $\conflicts'$. 
\item 
$\mapastnodetounderapprox'$ is lightweight. 
In fact, all entries in $\mapastnodetounderapprox$ use minimal UAs. 
%In particular, $\mapastnodetounderapprox'$ has at most $k$ entries with non-minimal UAs. (Recall that $k$ is a hyperparameter introduced in Algorithm~\ref{alg:conflict-driven-UA-search}.)
\end{enumerate}
If no such $\mapastnodetounderapprox'$ exists, for the given  family of UAs, $null$ will be returned. 

Let us now dive into the internals of $\fixconflict$. 
Line~\ref{alg:resolveconflict:add-conflict} first creates $\conflicts'$ that additionally includes conflict $\mapastnodetounderapprox \mapprojection \conflictASTnodes$. 
Then, lines~\ref{alg:resolveconflict:loop-begin}-\ref{alg:resolveconflict:loop-end} aim to generate $\mapastnodetounderapprox'$ that satisfies the above three properties. 
The basic idea is simple. 
We first try to fix the conflict at $\conflictASTnodes$, by considering ``each''\footnote{We actually only consider those $\underapprox_i$'s from a covering set. We will expand on this later.} UA $\underapprox_i$ from $\underapproxfamily_{\astnode_i}$ for each node $\astnode_i$ in $\conflictASTnodes$ (line~\ref{alg:resolveconflict:loop-begin}). 
Given $( \underapprox_1, \mydots, \underapprox_m )$, we encode the UA semantics of $\mapastnodetounderapprox_{\conflictASTnodes}$ (line~\ref{alg:resolveconflict:encode-semantics-at-V}). 
If the conflict persists (line~\ref{alg:resolveconflict:check-conflict}), we add a new conflict $\mapastnodetounderapprox_{\conflictASTnodes}$ to $\conflicts'$ and continue. 
On the other hand, if $\encodingall_{\conflictASTnodes}$ is satisfiable---meaning property (1) holds at this point---we further make sure property (2) also holds. 
This is done by line~\ref{alg:resolveconflict:encode-conflicts} that further encodes $\conflicts'$ and all UAs for $\mapastnodetounderapprox$'s nodes outside $\conflictASTnodes$.
Specifically, 
\[
\encodeconflicts(\conflicts') 
\assign
\biglor_{\{ \astnode_1 \mapsto \underapprox_1, \mydots, \astnode_r \mapsto \underapprox_r \}\in \conflicts'} 
\ 
\bigland_{i = 1, \mydots, r}
\encodechoice( \underapprox_i ) [ \underapproxvar \mapsto \underapproxvar^{\astnode_i} ]
\]
If this $\encodingall'$ is satisfiable (line~\ref{alg:resolveconflict:check-conflicts}), we can easily construct an $\mapastnodetounderapprox'$ from a satisfying assignment $\smtmodel$ of $\encodingall'$ (line~\ref{alg:resolveconflict:create-M'}) that is guaranteed to satisfy both properties (1) and (2). 
Here, $\smtmodel$ maps every $\varUAvector^{\astnode}$ to a minimal UA; therefore, $\mapastnodetounderapprox'$ also meets property (3). 
We finally return at line~\ref{alg:resolveconflict:return-M'}.
If $\encodingall'$ is unsat for all iterations (meaning all combinations of UAs at $\conflictASTnodes$ are exhausted), we return $null$ (line~\ref{alg:resolveconflict:return-null}).

\newpara{Remarks.}
A few things are worth noting. 
First, line~\ref{alg:resolveconflict:loop-begin} uses $\splitUAspace$ which does not return all UAs in $\underapproxfamily_{\astnode_i}$, but returns a \emph{covering} set $S \subseteq \underapproxfamily_{\astnode_i}$ of UAs.
That is, for every minimal UA $\underapprox \in \underapproxfamily_{\astnode_i}$, there exists a UA $\underapprox' \in S$ such that $\underapprox' \subsumes \underapprox$.  
This allows us to search the entire space of UAs \emph{symbolically}, without explicitly enumerating all (minimal) UAs, thereby speeding up the search. 
The implementation of $\splitUAspace$ can be tuned heuristically. 
$\tool$ chooses to set a fixed number of values in the UA for each $\astnode_i \in \conflictASTnodes$ to top, and then enumerate all permutations of non-top values for the rest. 
Second, line~\ref{alg:resolveconflict:check-conflict} accumulates additional conflicts, with the same goal of accelerating the search. 
Finally, $\encodingall'$ (at line~\ref{alg:resolveconflict:encode-conflicts}) encodes all  UAs choices for AST nodes outside $\conflictASTnodes$. This is necessary for completeness. 
In other words, our algorithm would become incomplete, if it were to only modify UAs for $\conflictASTnodes$.

\begin{figure}
\vspace{-5pt}
\small 
\begin{algorithm}[H]
\caption{Algorithm for $\fixconflict$.}
\label{alg:resolveconflict}
\begin{algorithmic}[1]

\Statex\myprocedure{$\fixconflict( \mapastnodetounderapprox, \conflictASTnodes, \appcond, \conflicts )$}
\vspace{1pt}

\Statex\Input{$\conflictASTnodes$ is a subset of AST nodes from $\mapastnodetounderapprox$. $\appcond$ is the application condition. $\conflicts$ is a set of known conflicts. In particular, the entries from $\mapastnodetounderapprox$ at $\conflictASTnodes$ (i.e., $\mapastnodetounderapprox \mapprojection \conflictASTnodes$) are known to cause a conflict.}

\Statex\Output{A new UA map $\mapastnodetounderapprox'$ with no conflict at $\conflictASTnodes$. $\conflicts' \supseteq \conflicts$ is a new set of conflicts.}

\vspace{2pt}

\State
$\conflicts' \assign \conflicts \cup \{  \mapastnodetounderapprox \mapprojection \conflictASTnodes  \}$;
\label{alg:resolveconflict:add-conflict}

\vspace{1pt}

\For{$\underapprox_1 \in \splitUAspace(\underapproxfamily_{\astnode_1}), \mydots, \underapprox_m \in \splitUAspace( \underapproxfamily_{\astnode_m} )$ where $\conflictASTnodes = \{ \astnode_1, \mydots, \astnode_m \}$}
\label{alg:resolveconflict:loop-begin}

\vspace{1pt}

\State
$\mapastnodetounderapprox_{\conflictASTnodes} \assign \{ \astnode_1 \mapsto \underapprox_1, \mydots, \astnode_m \mapsto \underapprox_m \}$;
\ \ \ \ 
$\encodingall_{\conflictASTnodes} \assign  \encode(\mapastnodetounderapprox_{\conflictASTnodes}) \land  \appcond$;
\label{alg:resolveconflict:encode-semantics-at-V}

\vspace{2pt}

\If{$\encodingall_{\conflictASTnodes}$ is not satisfiable}
$\conflicts' \assign \conflicts' \cup \{ \mapastnodetounderapprox_{\conflictASTnodes} \}$; 
\textbf{continue};
\EndIf
\label{alg:resolveconflict:check-conflict}

\State
$\encodingall' \assign \encodingall_{\conflictASTnodes} \land \encodechoice \Big(  \big\{  \astnode \mapsto \topUA(\underapproxfamily_{\astnode}) \ | \  \astnode \in \dom(\mapastnodetounderapprox ) \setminus \conflictASTnodes \big\}  \Big)  \land \neg \encodeconflicts(\conflicts')$;
\label{alg:resolveconflict:encode-conflicts}

\If{$\encodingall'$ is satisfiable}
\label{alg:resolveconflict:check-conflicts}

\vspace{1pt}

\State 
$\smtmodel \assign \getmodel(\encodingall')$;
\ \ \ \ 
$\mapastnodetounderapprox' \assign \mapastnodetounderapprox \big[ \astnode \mapsto \smtmodel( \varUAvector^{\astnode} ) \ | \  \astnode \in \dom(\mapastnodetounderapprox) \big]$;
\label{alg:resolveconflict:create-M'}

\vspace{1pt}

\State
\Return $( \mapastnodetounderapprox', \conflicts' )$;
\label{alg:resolveconflict:return-M'}

\EndIf
\label{alg:resolveconflict:loop-end}

\EndFor

\State\Return $( null, \conflicts' )$;
\label{alg:resolveconflict:return-null}

\end{algorithmic}
\end{algorithm}
\vspace{-10pt}
\end{figure}

\begin{example}
Consider $\mapastnodetounderapprox_3$ from Figure~\ref{fig:ex:M-sequence}, which has a conflict at $\conflictASTnodes = \{ \astnode'_1, \astnode'_2, \astnode'_3, \astnode'_4, \astnode_1, \astnode_2 \}$. 
Suppose Figure~\ref{fig:ex:partition} corresponds to  line~\ref{alg:resolveconflict:loop-begin}. That is, 
\[\small 
\begin{array}{ll}
\splitUAspace(\astnode_1) = \{ [ \unknownvalue, \unknownvalue ] \} 
& 
\splitUAspace(\astnode'_1) = \{ [ \unknownvalue, \unknownvalue ] \}  

\\[5pt]

\splitUAspace(\astnode_2) = 
\{ \big[ [ \unknownvalue, \unknownvalue ], [ \unknownvalue, \unknownvalue ] \big] \} 
& 
\splitUAspace(\astnode'_2) = \{ \big[ [ \unknownvalue, \unknownvalue ], [ \unknownvalue, \unknownvalue ] \big] \} 

\\[5pt]

\splitUAspace(\astnode'_3) = 
\left\{
\begin{array}{l}
\big[ [ \truevalue, \truevalue ], [ \unknownvalue, \unknownvalue ] \big],
\big[ [ \truevalue, \falsevalue ], [ \unknownvalue, \unknownvalue ] \big] 
\\[3pt] 
\big[ [ \falsevalue, \truevalue ], [ \unknownvalue, \unknownvalue ] \big], 
\big[ [ \falsevalue, \falsevalue ], [ \unknownvalue, \unknownvalue ] \big] 
\end{array}
\right\}

& 

\splitUAspace(\astnode'_4) = 
\left\{
\begin{array}{l}
\big[ [ \truevalue, \truevalue ], [ \unknownvalue, \unknownvalue ] \big],
\big[ [ \truevalue, \falsevalue ], [ \unknownvalue, \unknownvalue ] \big] 
\\[3pt] 
\big[ [ \falsevalue, \truevalue ], [ \unknownvalue, \unknownvalue ] \big], 
\big[ [ \falsevalue, \falsevalue ], [ \unknownvalue, \unknownvalue ] \big] 
\end{array}
\right\}
\end{array}
\]
In particular, during the first iteration of the loop (lines~\ref{alg:resolveconflict:loop-begin}-\ref{alg:resolveconflict:loop-end}), $\mapastnodetounderapprox_{\conflictASTnodes}$ at line~\ref{alg:resolveconflict:encode-semantics-at-V} is the first UA map $\mapastnodetounderapprox^{(1)}_{\astnode}$ from Figure~\ref{fig:ex:partition}. $\encodingall_{\conflictASTnodes}$ is unsat; therefore, $\mapastnodetounderapprox_{\conflictASTnodes}$ is added to $\conflicts'$ (line~\ref{alg:resolveconflict:check-conflict}).
The next two iterations add $\mapastnodetounderapprox^{(2)}_{\conflictASTnodes}$ and $\mapastnodetounderapprox^{(3)}_{\conflictASTnodes}$ to $\conflicts'$ (line~\ref{alg:resolveconflict:check-conflict}), since they are both unsat.
At this point, $\conflicts'$ contains a total of four conflicts, including the initial $\mapastnodetounderapprox_3 \mapprojection \conflictASTnodes$.
Now, consider the fourth iteration. $\mapastnodetounderapprox^{(4)}_{\conflictASTnodes}$ from Figure~\ref{fig:ex:partition} corresponds to a satisfiable $\encodingall_{\conflictASTnodes}$ at line~\ref{alg:resolveconflict:encode-semantics-at-V}. 
In this case, line~\ref{alg:resolveconflict:encode-conflicts} encodes the UA choices for nodes outside $\conflictASTnodes$---namely, $\astnode_3$ and $\astnode_4$---and also encodes the current conflicts $\conflicts'$, which are conjoined with $\encodingall_{\conflictASTnodes}$ to form $\encodingall'$. 
Note that $\encodingall'$ only considers UA semantics for those AST nodes in $\conflictASTnodes$. In other words, if $\conflictASTnodes$ is small, the satisfiability check at line~\ref{alg:resolveconflict:check-conflicts} should be pretty fast. 
It turns out $\encodingall'$ is indeed sat. 
Therefore, we obtain a model $\smtmodel$ at line~\ref{alg:resolveconflict:create-M'}, from which we construct $\mapastnodetounderapprox'$---this is the $\mapastnodetounderapprox$ from Figure~\ref{fig:ex:M}. 
$\fixconflict$ terminates at this point, returning $\mapastnodetounderapprox$ and $\conflicts$ (with four conflicts). 

\label{ex:resolve-conflict}
\end{example}

\subsection{Theorems}
\label{sec:alg:discussion}

This section presents key theorems, whose proofs can be found in Appendix~\ref{sec:proof}.

\begin{theorem}[Correctness of UA semantics] \label{thm:semantics-correctness}
Suppose $\encodeuasemantics(\queryop, \underapprox)$ yields an SMT formula $\formula$, for query operator $\queryop$ and UA $\underapprox \in \underapproxfamily_{\queryop}$.
For any model $\smtmodel$ of $\formula$, the corresponding inputs $\smtmodel(\vec{x})$ and output $\smtmodel(y)$ are consistent with the precise semantics of $\queryop$; that is, $\denot{\queryop}_{\smtmodel(\vec{x})} = \smtmodel(y)$.
\xinyurevision{
Intuitively, this theorem states that any under-approximation $\underapprox$ of $\queryop$ is always encoded into an SMT formula whose satisfying assignments correspond to genuine input-output behaviors of $\queryop$. 
Therefore, our approach is consistent with incorrectness logic~\cite{o2019incorrectness} in that both consider reachable outputs (no false positives). 
}
\end{theorem}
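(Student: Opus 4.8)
The plan is to reduce the claim to a soundness lemma about the \emph{full}-semantics encoding and then verify that lemma by a case analysis over the operator rules. The first step is to observe that, by definition, $\formula = \encodechoice(\underapprox) \land \encodefullsemantics(\queryop)$, so every model $\smtmodel$ of $\formula$ satisfies both conjuncts. The role of $\encodechoice(\underapprox)$ is to pin each UA variable $\underapproxvar_i$ to a concrete non-top value of its domain (or, when $\underapprox_i = \unknown$, to a disjunction over non-top values); it therefore never enlarges the model set of $\encodefullsemantics(\queryop)$ but only restricts it, while guaranteeing that in every model of $\formula$ each $\underapproxvar_i$ is domain-valued. It thus suffices to prove the key lemma: whenever $\smtmodel$ satisfies $\encodefullsemantics(\queryop)$ and assigns each $\underapproxvar_i$ a non-top domain value, we have $\denot{\queryop}_{\smtmodel(\vec{x})} = \smtmodel(y)$.

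I would prove this lemma by case analysis on $\queryop$, one case per encoding rule in Figure~\ref{fig:full-semantics} (and the remaining rules in Appendix~\ref{sec:all-encoding}). Because each rule fixes the input tables $\vec{x}$ as free symbolic tuples and builds $y$ directly from them, there is no recursion to unwind within a single operator, so the argument is a finite check rather than an induction. For the unary cases (Filter, Projection) the reasoning is local to each index $i$: in any model $\smtmodel$ assigns $\underapproxvar_i$ a concrete value in $\dom(\underapprox_i) \setminus \set{\unknown}$, and the matching guarded conjunct ($\queryopencoding_{i,\truevalue}$ or $\queryopencoding_{i,\falsevalue}$) fires and forces exactly the SQL condition for output position $i$---namely that $\tuple'_i$ exists with the copied content of $\tuple_i$ precisely when $\tuple_i$ is present and satisfies the predicate, via $\cmdExist$ and $\cmdCopy$, and $\del(\tuple'_i)$ otherwise, via $\cmdCondDel$. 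These two guards exhaust the SQL behavior for position $i$, so $\smtmodel(y)$ agrees with $\denot{\queryop}_{\smtmodel(\vec{x})}$ tuple-by-tuple. The join cases (inner join, and the outer joins in the appendix) follow the same template, now indexed by pairs $(i,j)$, with the extended $\cmdExist$/$\cmdCondDel$ checking the join predicate on the concatenated tuple.

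The main obstacle will be the GroupBy rule, since it is the only case whose correctness rests on the uninterpreted functions $\group$ and $\group^{-1}$ rather than on purely per-position reasoning. Here I would first show that the conjuncts $\queryopencoding_{i,\neg \falsevalue}$ and $\queryopencoding_{i,\falsevalue}$ force $\group$, in any model, to be a faithful grouping map: a present tuple $\tuple_i$ is its own representative ($\group(\tuple_i) = i$) exactly when no earlier present tuple agrees with it on all of $\vec{\expression}$, and otherwise $\group(\tuple_i)$ points to such an earlier tuple. This pins down the partition into groups to coincide with the $\vec{\expression}$-equivalence classes keyed by their least-indexed present member, matching the denotational grouping. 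I would then argue that, for a representative $i$, the value of $\underapproxvar_i$ selects the correct disposition of $\tuple'_i$: $\truevalue_{\predicate}$ forces both $\denot{\predicate}_{\group^{-1}(i)}$ and $\cmdCopy(\group^{-1}(i), \tuple'_i, \attrlist)$, so $\tuple'_i$ carries exactly the aggregate values computed over the group $\group^{-1}(i)$; $\truevalue$ forces $\neg \denot{\predicate}$ and deletion; and $\falsevalue$ handles non-representatives. The delicate point is verifying that $\denot{a}_{\group^{-1}(i)}$ for an aggregate $a$ agrees with the precise aggregate semantics over exactly the tuples of group $i$, which follows once $\group^{-1}(i)$ is shown to be precisely that group. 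Assembling these observations yields $\denot{\groupby}_{\smtmodel(\vec{x})} = \smtmodel(y)$, and combining all cases with the reduction step above establishes the theorem.
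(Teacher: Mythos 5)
Your proposal is correct and follows essentially the same route as the paper's proof: split $\formula$ into $\encodechoice(\underapprox) \land \encodefullsemantics(\queryop)$, note that any model pins each $\underapproxvar_i$ to a non-top domain value, and then do a case analysis per operator and per value of $\smtmodel(\underapproxvar_i)$, reading off the SQL behavior from the definitions of $\cmdExist$, $\cmdCopy$, $\cmdCondDel$ (and the grouping constraints for $\groupby$). Your treatment of $\groupby$---first establishing that the conjuncts force $\group$ to be a faithful grouping map before reasoning about aggregates over $\group^{-1}(i)$---is in fact slightly more explicit than the paper's own write-up of that case, but it is the same argument.
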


%\pinhanrevision{
%Intuitively, all satisfying assignments of the UA encoding (in an SMT formula) for a query operator $\queryop$ correspond to genuine input-output behaviors of $\queryop$; that is, every tracked output is indeed reachable by some execution paths of the program.
%}

\begin{theorem}[Refinement of UA Semantics] \label{thm:semantics-refinement}
Given query operator $\queryop$, 
and two UAs $\underapprox \in \underapproxfamily_{\queryop}$ and $\underapprox' \in \underapproxfamily_{\queryop}$ where  $\underapprox'$ refines $\underapprox$ (i.e., $\underapprox' \refines \underapprox$), we have  $\encodeuasemantics(\queryop, \underapprox) \Rightarrow \encodeuasemantics(\queryop, \underapprox')$.
\xinyurevision{Intuitively, this means the set of $\queryop$'s reachable outputs for $\underapprox'$ should be a subset of that for $\underapprox$; therefore, the UA semantics encoding for $\underapprox'$ is entailed by that for $\underapprox$.}
\end{theorem}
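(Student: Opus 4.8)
The plan is to establish the entailment directly at the level of the SMT encodings, exploiting that the two formulas share a common conjunct. By definition, $\encodeuasemantics(\queryop, \underapprox) = \encodechoice(\underapprox) \land \encodefullsemantics(\queryop)$ and $\encodeuasemantics(\queryop, \underapprox') = \encodechoice(\underapprox') \land \encodefullsemantics(\queryop)$, and the conjunct $\encodefullsemantics(\queryop)$ is literally identical on both sides (since $\underapprox$ and $\underapprox'$ lie in the same family $\underapproxfamily_{\queryop}$, they share the same arity and index set, so the full-semantics encoding does not depend on which of the two we pick). Hence it suffices to prove the purely syntactic entailment $\encodechoice(\underapprox) \Rightarrow \encodechoice(\underapprox')$: conjoining the shared $\encodefullsemantics(\queryop)$ to both sides then upgrades this to the claimed implication $\encodeuasemantics(\queryop, \underapprox) \Rightarrow \encodeuasemantics(\queryop, \underapprox')$. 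First I would record this reduction explicitly, so that all subsequent reasoning is confined to the choice-encodings over the UA variables $\underapproxvar_i$.

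Next I would decompose position-wise. Since $\encodechoice(\underapprox) = \bigland_i \encodeuavalue(\underapprox_i)$ and likewise for $\underapprox'$, with both conjunctions ranging over the same indices, the goal follows once I prove the per-position implication $\encodeuavalue(\underapprox_i) \Rightarrow \encodeuavalue(\underapprox'_i)$ for each $i$ and then take the conjunction over $i$. The hypothesis $\underapprox' \refines \underapprox$ unfolds, by the definition of the lattice order, into the family of value-level facts $\underapprox_i \subsumes \underapprox'_i$, one per position; this lets me treat every position independently, using only the value subsumption relation from Section~\ref{sec:alg:unders}.

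The heart of the argument, and the step I expect to be the main obstacle, is this per-position lemma, which I would prove by case analysis on $\underapprox_i \subsumes \underapprox'_i$. The lattice admits exactly two cases: (i) $\underapprox_i = \underapprox'_i$, where $\encodeuavalue(\underapprox_i)$ and $\encodeuavalue(\underapprox'_i)$ are syntactically the same formula and the implication is immediate; and (ii) $\underapprox_i = \unknownvalue$ with $\underapprox'_i$ a non-top value $c_0$, where I must relate the permissive disjunction $\biglor_{c \in \dom(\underapprox_i) \setminus \set{\unknown}} \underapproxvar_i = c$ produced for the top value to the single equality $\underapproxvar_i = c_0$ produced for $c_0$. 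Case (ii) is the crux: it is precisely where the \emph{direction} of the entailment must be pinned down against the definition of $\encodeuavalue$, and where the boundary condition $c_0 \in \dom(\underapprox_i) \setminus \set{\unknown}$ must be invoked. I would therefore devote the bulk of the effort to verifying this interface meticulously, since the correctness of the entire theorem rests on how the top value's disjunctive encoding lines up with a fixed concrete UA value.
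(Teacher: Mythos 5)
Your skeleton matches the paper's own proof almost verbatim: the paper likewise factors out the shared conjunct $\encodefullsemantics(\queryop)$, reduces the theorem to $\encodechoice(\underapprox) \Rightarrow \encodechoice(\underapprox')$, and argues position-wise from the value subsumption relation (its additional case split on whether $\queryop$ takes vector- or matrix-shaped UAs adds nothing that your shared-index-set remark does not already cover). The genuine gap is that you never discharge the case you yourself identify as the crux---and it cannot be discharged, because in that case the entailment fails in the displayed direction. With $\underapprox_i = \unknownvalue$ and $\underapprox'_i = c_0 \neq \unknownvalue$, you would need $\big( \biglor_{c \in \dom(\underapprox_i) \setminus \set{\unknownvalue}} \underapproxvar_i = c \big) \Rightarrow (\underapproxvar_i = c_0)$, which is false as soon as the domain contains a second non-top value $c_1 \neq c_0$. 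Concretely, for $\filter_{\predicate}$ with $n=1$, $\underapprox = [\unknownvalue]$ and $\underapprox' = [\truevalue]$: the assignment $\underapproxvar_1 = \falsevalue$, extended with a present input tuple violating $\predicate$ and a deleted output tuple, is a model of $\encodeuasemantics(\filter_{\predicate}, \underapprox)$ but not of $\encodeuasemantics(\filter_{\predicate}, \underapprox')$; conjoining the shared $\encodefullsemantics$ does not rescue the step, since it leaves $\underapproxvar$ unconstrained. The only true per-position fact is the reverse one, $\encodeuavalue(\underapprox'_i) \Rightarrow \encodeuavalue(\underapprox_i)$---the single equality $\underapproxvar_i = c_0$ is one disjunct of the top encoding, which is exactly where your boundary condition $c_0 \in \dom(\underapprox_i) \setminus \set{\unknownvalue}$ does real work---yielding $\encodeuasemantics(\queryop, \underapprox') \Rightarrow \encodeuasemantics(\queryop, \underapprox)$. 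That flipped direction is also the one consistent with the first half of the theorem's own intuition sentence: models of the refinement's encoding form a \emph{subset} of the models of the coarser encoding.

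You should know that the paper's proof stalls at precisely the interface you flagged: it asserts that the clause encoded by $\encodeuavalue(\underapprox'_i)$ ``is a subset of and is weaker than'' that encoded by $\encodeuavalue(\underapprox_i)$ and then concludes the displayed direction---but a subset of disjuncts yields a \emph{stronger} formula, so the sentence is internally inconsistent and, read correctly, establishes the reverse entailment. Your instinct to pin down the direction meticulously against the definition of $\encodeuavalue$ was therefore exactly right; carried out honestly, that verification shows the theorem must be restated with the implication flipped (equivalently, with $\underapprox$ and $\underapprox'$ swapped). Nothing downstream is thereby endangered: where the result is used (Lemma~\ref{lem:soundness-ua-search}, transferring satisfiability from a map containing Top UAs to the minimal map read off a model), the conclusion is justified directly by that model---the minimal UAs are defined as $\smtmodel(\underapproxvar^{\astnode})$, so $\smtmodel$ itself witnesses satisfiability of the refined encoding---rather than by a logical entailment in either direction.
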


\begin{theorem}[Soundness] \label{thm:soundness}
Given queries $\sqlquery_1, \mydots, \sqlquery_n$ and application condition $\appcond$, if $\geninput$ returns an input database $\inputdatabase$, then we have $\appcond[ \queryoutput_1 \mapsto \sqlquery_1(\inputdatabase), \mydots, \queryoutput_n \mapsto \sqlquery_n(\inputdatabase) ]$ is true. 
\xinyurevision{Intuitively, this is because our UAs always encode genuine input-output behaviors of $\sqlquery_i$.
}

\end{theorem}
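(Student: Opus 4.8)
The plan is to trace what $\geninput$ must have done to return a non-null database, and then verify that database against the query semantics. By Algorithm~\ref{alg:top-level}, $\geninput$ returns a database only through line~\ref{alg:top-level:return-input}, i.e. as $\buildinputdatabase(\mapastnodetounderapprox, \appcond)$ where $\mapastnodetounderapprox \neq null$ came from $\conflictdrivenUAsearch$. Inspecting Algorithm~\ref{alg:conflict-driven-UA-search}, a non-null map is returned only at line~\ref{alg:conflict-driven-UA-search:return-M}, which fires exactly when $\worklist = \emptyset$ (so $\mapastnodetounderapprox$ covers every AST node of every $\sqlquery_i$) and the satisfiability check at line~\ref{alg:conflict-driven-UA-search:sat-case} has just succeeded. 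Hence $\mapastnodetounderapprox$ is a satisfying UA map in the sense of Definition~\ref{def:UA-search}: the formula $\encodingall \assign \encode(\mapastnodetounderapprox) \land \big( \appcond[\queryoutput_1 \mapsto \astnodeoutput^{\astnode_1}, \mydots, \queryoutput_n \mapsto \astnodeoutput^{\astnode_n}] \big)$ is satisfiable, and $\buildinputdatabase$ picks a model $\smtmodel$ of it, defining $\inputdatabase$ by setting the content of each input relation $\relation$ to $\smtmodel(\queryopinput^{\relation})$.

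The heart of the argument is a structural lemma: for every AST node $\astnode$ occurring in some $\sqlquery_i$, the output encoded by the model agrees with the true (bounded) evaluation of the subquery rooted at $\astnode$ on $\inputdatabase$, that is $\smtmodel(\astnodeoutput^{\astnode}) = \denot{\astnode}_{\inputdatabase}$. I would prove this by induction on the structure of the AST. For a leaf $\astnode$ with $\getqueryop(\astnode) = \relation$, the encoding of Section~\ref{sec:alg:ua-semantics} identifies $\astnodeoutput^{\astnode}$ with the contents of $\queryopinput^{\relation}$, which is precisely how $\buildinputdatabase$ defines $\relation$ inside $\inputdatabase$; the use of the shared variable $\queryopinput^{\relation}$ (rather than a node-local one) is exactly what guarantees every occurrence of a relation, within and across queries, is read consistently. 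For an internal node $\astnode$ with children $\astnode_1, \mydots, \astnode_l$, the clause $\encode(\mapastnodetounderapprox)$ contains the equalities $\astnodeinput^{\astnode}_j = \astnodeoutput^{\astnode_j}$, so the induction hypothesis feeds $\astnode$ the genuine child outputs $\denot{\astnode_j}_{\inputdatabase}$. The same clause contains $\encode(\astnode, \underapprox)$, which is $\encodefullsemantics(\getqueryop(\astnode))$ conjoined with the UA-choice constraint under a renaming of $\queryopinput_j, \queryopoutput, \varUAvector$ to the node-specific variables. Restricting $\smtmodel$ to these variables yields a model of $\encodeuasemantics(\getqueryop(\astnode), \underapprox)$, so Theorem~\ref{thm:semantics-correctness} gives $\smtmodel(\astnodeoutput^{\astnode}) = \denot{\getqueryop(\astnode)}_{\smtmodel(\vec{\astnodeinput}^{\astnode})}$. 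Combining the two facts, $\smtmodel(\astnodeoutput^{\astnode}) = \denot{\getqueryop(\astnode)}_{\denot{\astnode_1}_{\inputdatabase}, \mydots, \denot{\astnode_l}_{\inputdatabase}} = \denot{\astnode}_{\inputdatabase}$, which closes the induction.

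Applying the lemma at each root $\astnode_i$ of $\sqlquery_i$ gives $\smtmodel(\astnodeoutput^{\astnode_i}) = \denot{\sqlquery_i}_{\inputdatabase} = \sqlquery_i(\inputdatabase)$. Since $\smtmodel$ satisfies the $\appcond$-clause of $\encodingall$, namely $\appcond[\queryoutput_1 \mapsto \astnodeoutput^{\astnode_1}, \mydots, \queryoutput_n \mapsto \astnodeoutput^{\astnode_n}]$, substituting these equalities shows that $\appcond[\queryoutput_1 \mapsto \sqlquery_1(\inputdatabase), \mydots, \queryoutput_n \mapsto \sqlquery_n(\inputdatabase)]$ holds, which is exactly the claim.

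I expect the main obstacle to be the inductive lemma rather than the top-level bookkeeping. The two delicate points are: (i) justifying that the renaming inside $\encode(\astnode, \underapprox)$ lets one lift the single global model $\smtmodel$ to a model of the operator-level formula to which Theorem~\ref{thm:semantics-correctness} is stated, and (ii) handling relations that occur multiple times, where one must argue that the shared variable $\queryopinput^{\relation}$ forces all occurrences to read the same concrete table, so that $\denot{\astnode}_{\inputdatabase}$ is well defined and the composition of per-node correctness is sound. A secondary subtlety is confirming that the $\unknownvalue$-free (minimal) UA choices fixed by the model keep the encoding within the scope of Theorem~\ref{thm:semantics-correctness}; this is immediate since $\encodechoice$ already eliminates $\unknownvalue$, but it should be noted explicitly.
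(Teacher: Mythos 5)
Your proposal takes essentially the same route as the paper's proof: the paper factors it into Lemma~\ref{lem:soundness-ua-search} (a non-null return of $\conflictdrivenUAsearch$ yields a map $\mapastnodetounderapprox$ with $\encode(\mapastnodetounderapprox) \land \appcond$ satisfiable) and Corollary~\ref{cor:ua-semantics-correcteness} (any model of $\encode(\mapastnodetounderapprox)$ is consistent with the precise query semantics, obtained by composing Theorem~\ref{thm:semantics-correctness} over the AST), and then substitutes into $\appcond$. Your inductive lemma is exactly the content of that corollary, just spelled out more carefully: the paper leaves the structural induction and the role of the shared variables $\queryopinput^{\relation}$ for repeated relations implicit, and your points (i) and (ii) are the right things to make explicit there.

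The one place where your argument is too quick is precisely the step the paper isolates as Lemma~\ref{lem:soundness-ua-search}. The map returned at line~\ref{alg:conflict-driven-UA-search:return-M} of Algorithm~\ref{alg:conflict-driven-UA-search} is \emph{not} the map whose satisfiability was checked at line~\ref{alg:conflict-driven-UA-search:sat-case}: line~\ref{alg:conflict-driven-UA-search:obtain-model-and-update} first remaps every node in the domain (including any node still carrying a Top UA from line~\ref{alg:conflict-driven-UA-search:add-more-nodes} of the previous iteration) to the minimal UA $\smtmodel(\varUAvector^{\astnode})$ read off the model. So ``the check at line~\ref{alg:conflict-driven-UA-search:sat-case} succeeded, hence the returned $\mapastnodetounderapprox$ is a satisfying UA map'' is, as literally written, a non sequitur. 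The conclusion is still true, and the patch is one line: the very model $\smtmodel$ also satisfies $\encode(\mapastnodetounderapprox') \land \appcond$ for the updated map $\mapastnodetounderapprox'$, because the updated choice constraints are exactly $\varUAvector^{\astnode}_i = \smtmodel(\varUAvector^{\astnode}_i)$ while the full-semantics conjuncts and child-output equalities are unchanged; the paper instead derives this from the refinement property (Theorem~\ref{thm:semantics-refinement} via Corollary~\ref{cor:ua-semantics-refinement}), since $\mapastnodetounderapprox'$ pointwise refines the checked map. With that patch---and noting, as your induction implicitly does, that $\buildinputdatabase$ may re-solve formula~(\ref{eq:UAmap-condition}) and the argument applies to whichever model it obtains---your proof is complete and matches the paper's.
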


%\pinhanrevision{
%Our UA encoding is consistent with the property of incorrectness logic~\cite{o2019incorrectness}, that the outputs on the input database $\inputdatabase$ indeed satisfies $\appcond$ --- i.e., there are no false positives.
%}

\begin{theorem}[Completeness] \label{thm:completeness}
Given queries $\sqlquery_1, \mydots, \sqlquery_n$ and application condition $\appcond$, if $\geninput$ returns $null$, then there does not exist an input $\inputdatabase$ (with respect to the semantics presented in Figure~\ref{fig:full-semantics}) for which $\appcond[ \queryoutput_1 \mapsto \sqlquery_1(\inputdatabase), \mydots, \queryoutput_n \mapsto \sqlquery_n(\inputdatabase)]$ is true. 
\xinyurevision{Intuitively, this is because our approach essentially performs exhaustive search while using the lattice structure of UAs to soundly prune the search space.}
\end{theorem}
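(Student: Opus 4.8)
The plan is to argue the contrapositive in spirit: assuming $\geninput$ returns $null$, I will first show that no full minimal UA map over all AST nodes of $\sqlquery_1,\dots,\sqlquery_n$ satisfies $\encode(\cdot)\land\appcond$, and then invoke a coverage lemma to conclude that no bounded input $\inputdatabase$ satisfies $\appcond$. The argument rests on three ingredients. First, a \emph{soundness-of-conflicts} invariant: every element ever placed in $\conflicts$ (at line~\ref{alg:resolveconflict:add-conflict} or~\ref{alg:resolveconflict:check-conflict}) is a genuine conflict, i.e. $\encode(c)\land\appcond$ is unsatisfiable, because conflicts are recorded only after the solver reports unsat. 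Combined with Theorem~\ref{thm:semantics-refinement} (a refined UA captures a subset of the behaviors of a coarser one) and the monotonicity of conjunction, this yields a \emph{blocking lemma}: if a full minimal map $\mapastnodetounderapprox^{\star}$ is refined-consistent with some $c\in\conflicts$ on $\dom(c)$ --- that is $c(\astnode)\subsumes\mapastnodetounderapprox^{\star}(\astnode)$ for all $\astnode\in\dom(c)$ --- then $\encode(\mapastnodetounderapprox^{\star})\land\appcond$ is unsatisfiable, since $\encode(\mapastnodetounderapprox^{\star}\mapprojection\dom(c))$ entails $\encode(c)$. Second, a \emph{coverage lemma}: by the design of each $\underapproxfamily_{\queryop}$ (every bounded table is covered by some UA, per Section~\ref{sec:alg:unders}) together with Theorem~\ref{thm:semantics-correctness}, any bounded input $\inputdatabase$ induces a unique minimal UA at every node, hence a minimal UA map $\mapastnodetounderapprox_{\inputdatabase}$ for which $\inputdatabase$ (with outputs $\sqlquery_i(\inputdatabase)$ and the induced UA variables) is a model of $\encode(\mapastnodetounderapprox_{\inputdatabase})$; if moreover $\inputdatabase$ satisfies $\appcond$, then $\encode(\mapastnodetounderapprox_{\inputdatabase})\land\appcond$ is satisfiable.

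The heart of the argument is analyzing the single $\fixconflict$ call that produces the $null$ returned at line~\ref{alg:conflict-driven-UA-search:return-null}. I will show that when $\fixconflict$ reaches line~\ref{alg:resolveconflict:return-null}, \emph{no} full minimal map $\mapastnodetounderapprox^{\star}$ has $\encode(\mapastnodetounderapprox^{\star})\land\appcond$ satisfiable. Suppose one did. Its restriction to $\conflictASTnodes$ is a minimal assignment, and since $\splitUAspace$ returns a \emph{covering} set, some combination $(\underapprox_1,\dots,\underapprox_m)$ enumerated at line~\ref{alg:resolveconflict:loop-begin} subsumes it. For that combination the loop did not return, so either $\encodingall_{\conflictASTnodes}$ was unsatisfiable or $\encodingall'$ was. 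The first case is impossible: a model of $\encode(\mapastnodetounderapprox^{\star})$ witnessing $\appcond$ also satisfies $\encodingall_{\conflictASTnodes}$, because it meets the full semantics of the $\conflictASTnodes$-nodes and the coarser choice constraint $\encodechoice(\underapprox_i)$ (which the minimal choice refines). The second case is also impossible: the same model satisfies $\encodingall'$, since the extra conjuncts only add the trivial top-choice constraints for nodes outside $\conflictASTnodes$ and the clause $\neg\encodeconflicts(\conflicts')$, and the latter holds because, by the blocking lemma, a satisfying map cannot match any genuine conflict in $\conflicts'$. Either way we contradict the loop's failure to return at line~\ref{alg:resolveconflict:return-M'}. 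The crucial point enabling this is that running the \emph{local} semantics of $\conflictASTnodes$ with all other nodes left at top is the most permissive encoding, so the existence of any global satisfying model forces a local model, even though $\mapastnodetounderapprox$ is only partial.

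Finally I assemble the pieces. If $\geninput$ returned $null$, the above shows no full minimal map is satisfying. By the coverage lemma, were there a bounded input $\inputdatabase$ with $\appcond[\queryoutput_1\mapsto\sqlquery_1(\inputdatabase),\dots,\queryoutput_n\mapsto\sqlquery_n(\inputdatabase)]$ true, then $\mapastnodetounderapprox_{\inputdatabase}$ would be a satisfying full minimal map --- a contradiction. Hence no such $\inputdatabase$ exists, which is exactly the completeness claim (for the bounded semantics of Figure~\ref{fig:full-semantics}). I also note that finiteness of the UA families makes the space of minimal maps and every enumeration finite, so the $null$ return is genuinely reached only after the relevant space is exhausted. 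The main obstacle I anticipate is the middle paragraph: making rigorous that a \emph{partial}, top-padded local encoding $\encodingall'$ is implied by every global model --- in particular tracking how the child--parent link constraints, the free root-output variables feeding $\appcond$, and the $\varUAvector^{\astnode}$ variables of conflicts over not-yet-loaded nodes all receive consistent values from the global model --- together with verifying that $\splitUAspace(\underapproxfamily_{\astnode_i})$ indeed subsumes every minimal assignment, so that no branch of the search is silently skipped.
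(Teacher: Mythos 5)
Your proposal is correct and follows essentially the same route as the paper's proof (Appendix~\ref{sec:proof}): conflicts recorded in $\conflicts$ are genuine (the paper's Lemma on conflict soundness), the covering sets enumerated by $\splitUAspace$ in $\fixconflict$ exhaust the space of minimal UA assignments at $\conflictASTnodes$ (the paper's exhaustion lemma), and therefore a $null$ return implies no minimal UA map satisfies $\encode(\mapastnodetounderapprox)\land\appcond$, hence no input $\inputdatabase$ exists. The only difference is one of rigor rather than strategy: you make explicit two steps the paper leaves terse or implicit --- the coverage lemma bridging a concrete satisfying input to its induced minimal UA map, and the model-transfer argument showing that a global satisfying model forces both $\encodingall_{\conflictASTnodes}$ and $\encodingall'$ (including $\neg\encodeconflicts(\conflicts')$, via your blocking lemma) to be satisfiable in the failing $\fixconflict$ call --- both of which are exactly what is needed to close the paper's argument.
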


%\pinhanrevision{
%Since our conflict-driven UA search algorithm exhausts the search space over the family of UAs, it is guaranteed that if $\geninput$ with $null$, then there does not exist an input $\inputdatabase$ that satisfies $\appcond$.
%}

\section{Evaluation}
\label{sec:eval}

This section describes a series of experiments designed to answer the following questions:
\begin{itemize}[leftmargin=*]
\item 
\textbf{RQ1}: 
Can $\tool$ effectively solve real-world benchmarks? 
\item 
\textbf{RQ2}: 
How does $\tool$ compare against state-of-the-art techniques? 
\item 
\textbf{RQ3}:
How useful are various ideas in $\tool$?
\end{itemize}

\newpara{Two applications.}
The first one is the long-standing problem of SQL equivalence checking~\cite{he2024verieql,wang2018speeding,chu2017cosette,chu2017demonstration}. 
Given $\sqlquery_1$ and $\sqlquery_2$, the goal is to generate  $\inputdatabase$ such that $\queryoutput_1 \neq \queryoutput_2$, where $\queryoutput_i$ is $\sqlquery_i$'s output $\sqlquery_i(\inputdatabase)$. 
The other is query disambiguation~\cite{brancas2022cubes,wang2018speeding}. 
Given $\sqlquery_1, \mydots, \sqlquery_n$, find an input $\inputdatabase$ as well as an even split of them into two disjoint sets, such that for all $\sqlquery_i$ and $\sqlquery_j$: if they belong to the same set, $O_i = O_j$; otherwise $O_i \neq O_j$.\footnote{The encoding of this application condition can be found in Appendix~\ref{sec:disambiguation-encoding}.}
%Note that no particular split is given a priori; in other words, we need to find the desired split as well as a satisfying input given this split. 
In what follows, we describe how we collect benchmarks for both applications.

\xinyurevision{
\newpara{Equivalence refutation benchmarks.}
We reuse the 24,455 benchmarks from a recent equivalence-checking work $\verieql$~\cite{he2024verieql}, where each benchmark is a pair of SQL queries.  These query pairs are obtained from a wide range of downstream tasks, including auto-grading, query rewriting and mutation testing. 
For instance, more than 20,000 query pairs correspond to auto-grading, where one query in the pair is the ground-truth and the other is a user submission accepted by LeetCode (i.e., passing LeetCode's test cases). In other words, non-equivalence in this case indicates inadequate testing, and counterexamples can help LeetCode developers further strengthen their test suites. 
We believe this is a comprehensive set of benchmarks for evaluating $\tool$.
Finally, we note that, while $\verieql$~\cite{he2024verieql} was able to refute more than 3,000 of these benchmarks, the solvability for the rest is unknown (and manually checking these benchmarks is a non-starter).
}

\newpara{Disambiguation benchmarks.}
We curate an extensive set of query disambiguation benchmarks, based on query synthesis tasks from the $\cubes$ work~\cite{brancas2022cubes} (which is a state-of-the-art SQL synthesizer).
In particular, given input-output examples from each $\cubes$ task, we generate one disambiguation \emph{task}, containing all satisfying queries that can be synthesized within 2 minutes. 
This yields {2,861} disambiguation tasks. 
The number of queries per task ranges from {2} to {3,434}, with an average of {272} and a median of {191}. 
While $\tool$ is directly applicable, a challenge for conducting our evaluation here is how to interpret the results: if a task was not solved, is that because it is not solvable at all, or is it due to $\tool$'s inability? 
Manually solving these tasks (even a subset) is nearly impossible. 
To address this, we develop a procedure to curate disambiguation \emph{benchmarks} (based on our {2,816} tasks) which are by construction solvable. 
\xinyurevision{
In particular, given a set $S$ of synthesized queries from the task, our first step is to partition $S$ into a set of equivalence classes $G$ as follows. 
\[\small 
%\hspace{-50pt}
\fbox{
$\begin{array}{ll}
\textbf{procedure} \textsc{ Partition}(S) \\ 
\ \ \ \ 
G \assign \emptyset;  \\ 
\ \ \ \ 
\textbf{while } S \text{ is not empty} \textbf{ do} \\ 
\ \ \ \ \ \ \ \ 
\sqlquery \assign S.\textit{remove}();  
\ \ \ \  
C = [ \sqlquery ];
\\ 
\ \ \ \  \ \ \ \ 
\textbf{foreach } \sqlquery_i \in S \textbf{ do} \\ 
\ \ \ \  \ \ \ \ 
\ \ \ \ 
\textbf{if } \sqlquery_i \text{ is equivalent to } \sqlquery \text{ up to bound } b \textbf{ then  } 
C.\textit{add}  ( \sqlquery_i ); 
\ \ S.\textit{remove}(\sqlquery_i); 
\\ 
\ \ \ \  \ \ \ \ 
\ \ \ \ 
\textbf{else if } \sqlquery_i \text{ is not equivalent to } \sqlquery \textbf{ then  } 
\textbf{continue}; 
\\ 
\ \ \ \  \ \ \ \ 
\ \ \ \ 
\textbf{else } S.\textit{remove}(\sqlquery_i); 
\ \ \ \ \triangleright \text{ bounded verifier timed out}
\\ 
\ \ \ \ \ \ \ \  
G \assign G \cup \{ C \}; \\ 
\ \ \ \  
\textbf{return } G; 
\end{array}$
}
\]
The key idea is to leverage a (bounded) equivalence verifier (in particular, $\verieql$~\cite{he2024verieql}) to partition $S$: queries in the same class $C$ are equivalent to each other (up to a certain bound $b$), whereas queries from different classes are not. 
In particular, the first query $\sqlquery$ in each class $C$ is a representative. We have a counterexample for any two representatives from two classes. 

Then, given $G$, we create a disambiguation benchmark with $2n$ queries, by selecting the first $n$ queries from each of any two classes. 
We consider $n=25, 50$ in our evaluation, yielding 4,245 and 2,475 disambiguation benchmarks, respectively. 
We name them D-50 and D-100.
}

\subsection{RQ1: Can $\tool$ Solve Real-World Benchmarks?}
\label{sec:eval:RQ1}

\evalfinding{$\tool$ solved {5,497} equivalence refutation benchmarks (out of 24,455 in total), with a median running time of {0.1} seconds per benchmark. 
Among {4,245} disambiguation benchmarks (each with {50} queries), $\tool$ solved {94\%} of them using a median of {3.7} seconds per benchmark.}

\newpara{Setup.}
Given a benchmark (either equivalence refutation or disambiguation), we run $\tool$ and record:
(1) if the benchmark is solved before timeout (1 minute), 
and (2) if so, the running time. 
We also log detailed statistics, which we will summarize and report below.

\newpara{Results.}
Table~\ref{tab:eval:RQ1-results} summarizes our key results. $\tool$ can solve {22.5\%} of the equivalence refutation benchmarks. This is a surprisingly high ratio, given that {98.1\% of our ER benchmarks are queries accepted by LeetCode.}
For disambiguation, $\tool$ consistently solves {over 90\% of the benchmarks in all settings.}
The solving time in general is pretty fast. 
$\tool$ slows down when disambiguating more queries, which is expected; but still, {within 40 seconds}, it can solve over {83\%} of the benchmarks. 

The next columns report some key internal statistics. 
In general, it takes $\tool$ a small number of iterations to solve a benchmark. This is because---while the number of AST nodes in a benchmark is large---line~\ref{alg:conflict-driven-UA-search:add-more-nodes} in Algorithm~\ref{alg:conflict-driven-UA-search} loads in multiple nodes in one iteration; therefore, we observe far fewer iterations. 
Some iterations are spent on fixing conflicts, by invoking $\fixconflict$; we report the number of such iterations in ``\#$\fixconflict$''. 
While the maximum can be over 100, the number of such calls is quite small in general.
But the impact of each call is much larger---look at the ``\#Nodes adjusted'' column. On average, over a quarter of a benchmark's nodes are remapped to different UAs. 
In other words, there are quite some backtrackings happening, but they are packed into a small number of $\fixconflict$ calls. 
We should note that the number of nodes in conflict (i.e., $|\conflictASTnodes|$ at line~\ref{alg:conflict-driven-UA-search:resolve-conflict} of Algorithm~\ref{alg:conflict-driven-UA-search}) is typically small. In particular, the median and average for ER are 4 and 5.1. For D-50 and D-100, median/average are 2/49.2 and 2/64.3, respectively. 
In other words, line~\ref{alg:resolveconflict:encode-conflicts} of Algorithm~\ref{alg:resolveconflict} encodes UA semantics for very few nodes (i.e., those in $\conflictASTnodes$), whereas it encodes UA choices for far more nodes (i.e., those outside $\conflictASTnodes$). 
The latter is typically much cheaper to solve.

\newcolumntype{R}[1]{>{\raggedleft\arraybackslash}p{#1}}

\newcolumntype{?}{!{\vrule width 0.5pt}}

\begin{table}[!t]
\caption{$\tool$ results across all benchmarks for equivalence refutation (ER) and disambiguation {(D-50 and D-100).} 
We first report the (average, median, maximum) number of AST nodes per benchmark---the number of AST nodes for a benchmark is the sum of AST sizes across all queries in the benchmark.
Then we show the number of benchmarks solved (with the total number of benchmarks below it) in ``\#Solved (\#Total)'' column, followed by the (average, median) running times across all solved benchmarks (recall: timeout is 1 minute). 
``\#Iterations'' shows the (average, median, maximum) number of iterations (lines~\ref{alg:conflict-driven-UA-search:while-begin}-\ref{alg:conflict-driven-UA-search:while-end} in Algorithm~\ref{alg:conflict-driven-UA-search}) across all \emph{solved} benchmarks. 
The next column reports the number of calls to $\fixconflict$. 
Finally, we present the (average, median, maximum) number of AST nodes adjusted before and after calling $\fixconflict$---that is, the number of AST nodes in $\mapastnodetounderapprox$ (see Algorithm~\ref{alg:resolveconflict}) that are mapped to different UAs in $\mapastnodetounderapprox'$ (i.e., $|\mapastnodetounderapprox' \setminus \mapastnodetounderapprox|$).} 
\vspace{-5pt}
\centering
%\small 
%\footnotesize
\scriptsize
\begin{tabular}{c|rR{18pt}R{18pt}|r|rr|rrr|rrr|rrr}

\toprule

\multirow{2}{*}{} & 
\multicolumn{3}{c|}{\#Nodes per benchmark} & 
%\multirow{2}{*}{Total} & 
{\#Solved} & 
\multicolumn{2}{c?}{Time (sec)} & 
\multicolumn{3}{c|}{\#Iterations} & 
\multicolumn{3}{c|}{\#$\fixconflict$} & 
\multicolumn{3}{c}{\#Nodes adjusted} 

\\[1pt]

& 
\emph{avg.} & \emph{med.} & \emph{max.} & 
(\#Total) & 
\emph{avg.} & \emph{med.} & 
\emph{avg.} & \emph{med.} & \emph{max.} & 
\emph{avg.} & \emph{med.} & \emph{max.} & 
\emph{avg.} & \emph{med.} & \emph{max.} 

\\

\midrule

ER & 
9 & 8 & 33 & 
\begin{tabular}{@{}c@{}} {\raggedleft \ \!\!\; 5,497} \\ {\raggedleft (24,455)}  \end{tabular} & 
0.6 & 0.1 & 
4.5 & 3 & 139 & 
2.5 & 1 & 137 & 
6 & 5 & 26

\\ 
\midrule\midrule

D-50 & 
211 & 213 & 343 & 
\begin{tabular}{@{}c@{}} {\raggedleft 4,004} \\ {\raggedleft (4,245)}  \end{tabular} & 
6.4 & 3.7 & 
4.8 & 4 & 39 & 
1.8 & 1 & 36 & 
82 & 51 & 280

\\

\cmidrule[0.5pt]{2-16}

D-100 & 
435 & 442 & 616 & 
\begin{tabular}{@{}c@{}} {\raggedleft 2,392} \\ {\raggedleft (2,475)}  \end{tabular} & 
21.7 & 18 & 
4.9 & 4 & 23 & 
1.9 & 1 & 20 & 
126 & 101 & 380

\\

\bottomrule

\end{tabular}
\label{tab:eval:RQ1-results}

\end{table}

\newpara{Discussion.}
In our experience, the number of AST nodes added to $\mapastnodetounderapprox$ (i.e., $k$ at line~\ref{alg:conflict-driven-UA-search:add-more-nodes} in Algorithm~\ref{alg:conflict-driven-UA-search}) in each iteration has a significant impact on the overall performance. 
One extreme is to set $k = 1$; in this case, the number of iterations would be at least the number of AST nodes across $\sqlquery_i$'s (see ``\#Nodes per benchmark'' column in Table~\ref{tab:eval:RQ1-results}). This would slow down the algorithm significantly. 
On the other hand, adding all nodes in one iteration is also suboptimal (as we will show in Section~\ref{sec:eval:RQ3}). 
$\tool$'s heuristic is to add all nodes for $k$ queries (e.g., $k=50$ for D-100), which seems to achieve a good balance between (1) the number of iterations and (2) SMT solving overhead in each iteration. 
The implementation of $\splitUAspace$ also matters noticeably in practice. The two extremes (namely, returning Top UA for each $\astnode_i$ at line~\ref{alg:resolveconflict:loop-begin} of Algorithm~\ref{alg:resolveconflict}, and returning all minimal UAs for $\astnode_i$) would be slow (which we will show in Section~\ref{sec:eval:RQ3}). 
$\tool$'s heuristic is to use top for a fixed number of values in $\astnode_i$'s UA \xinyurevision{(in particular, we set 8 UA choices to $\unknownvalue$ in our experiments)}, while spelling out all permutations for the rest of the UA values. 
In our experience, how many UA values are set to top seems to impact the performance more than which values.

\subsection{RQ2: $\tool$ vs. State-of-the-Art}
\label{sec:eval:RQ2}

\evalfinding{$\tool$ outperforms all state-of-the-art techniques by a significant margin---in particular, 1.7x more benchmarks solved for equivalence refutation and 1.4x for disambiguation.}

\newpara{Baselines.}
For each application, we consider all relevant existing work---to our best knowledge---as our baselines. 
For equivalence refutation, we include 6 baselines: 
(1) $\verieql$~\cite{he2024verieql,zhaodemonstration} (SMT-based), 
(2) $\cosette$~\cite{chu2017cosette} (based on $\textsc{Rosette}$~\cite{rosette-pldi14}, with provenance-based pruning~\cite{wang2018speeding}), 
(3) $\qex$~\cite{qex} (SMT-based, with provenance-based pruning~\cite{wang2018speeding}), 
(4) $\datafiller$~\cite{datafiller-website} (fuzzing-based), 
(5) $\xdata$~\cite{chandra2015data,chandra2019automated} (mutation-based tester), and 
{(6) $\evosql$~\cite{castelein2018search} (search-based tester using random search and genetic algorithms).}
For disambiguation, we include 3 baselines:
(i) the disambiguation component (fuzzing-based) from $\cubes$~\cite{brancas2022cubes},
(ii) a modified version of $\verieql$ that encodes full semantics for multiple (not just two) queries and the disambiguation condition, 
{and (iii) $\datafiller$.}

\newpara{Setup.}
We use the same setup as RQ1 for all baselines (with 1-minute timeout).
For each application, we record the benchmarks that are solved by each tool and the corresponding solving times.

\newpara{Results.}
Figure~\ref{fig:eval:RQ2-results-solved} and Figure~\ref{fig:eval:RQ2-results-time} present our results. 
$\tool$ is a clear winner for both applications.
It disproves significantly more query pairs than the best baseline $\verieql$ ({5,497 vs. 3,177}).
Across all refuted query pairs, $\tool$'s median running time is {0.1 seconds}, which is even faster than $\verieql$ (with a median of {0.4 seconds} across its solved benchmarks). 
For disambiguation, $\tool$ also solves significantly more benchmarks than all baselines. 
It is interesting that $\tool$ solves more D-100 benchmarks than D-50, which is also observed for $\datafiller$. 
In terms of time---see Figure~\ref{fig:eval:RQ2-results-time}(b)---$\tool$ is comparable with $\verieql$.
$\datafiller$ and $\cubes$ are understandably faster (due to their fuzzing-based methods), but they solve far fewer benchmarks (see Figure~\ref{fig:eval:RQ2-results-solved}).

\begin{figure}[!t]
\centering
\begin{minipage}[b]{.48\linewidth}
\centering
\includegraphics[height=4.5cm]{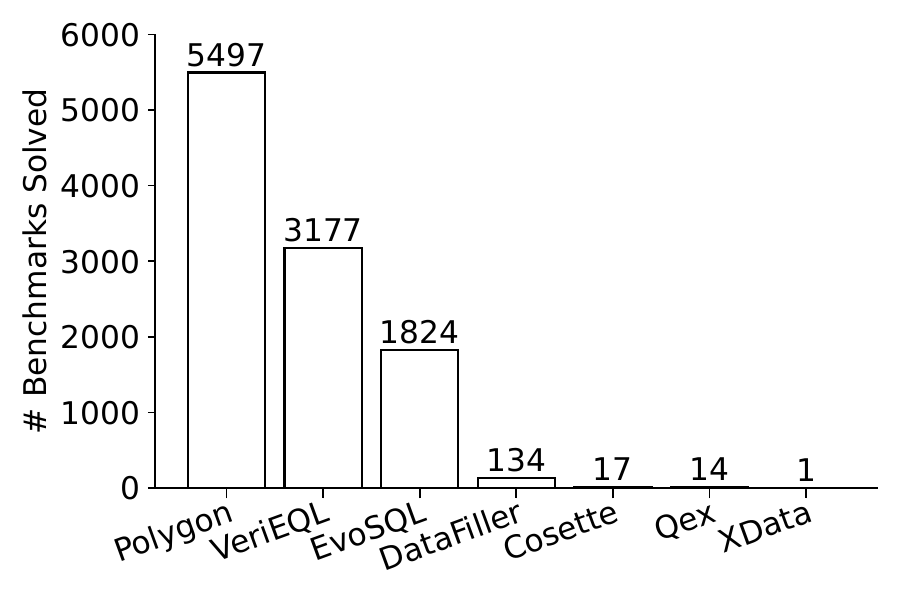}
\captionsetup{skip=3pt} 
\caption*{(a) Equivalence refutation.}
\end{minipage}
\hspace{2pt}
\begin{minipage}[b]{.48\linewidth}
\centering
\includegraphics[height=4.5cm]{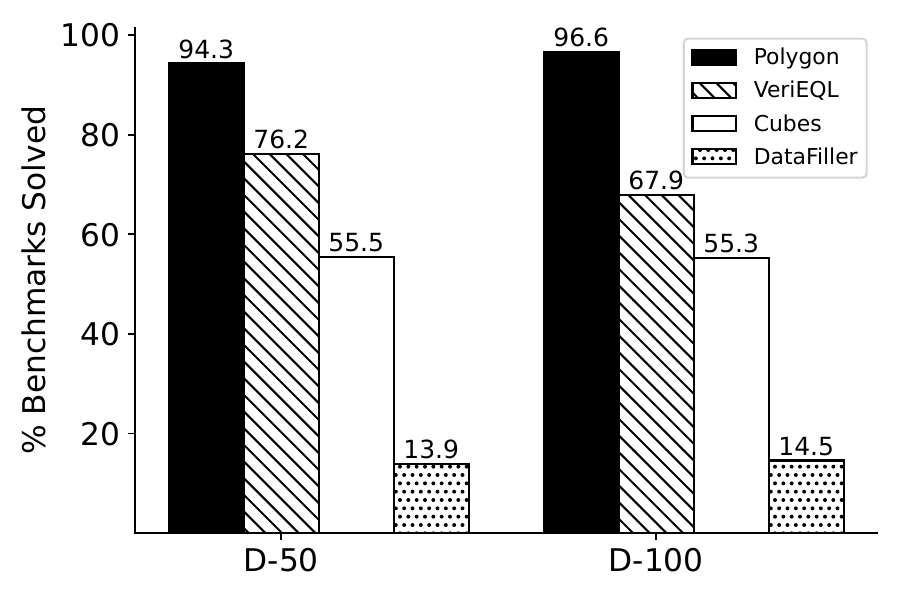}
\captionsetup{skip=3pt} 
\caption*{(b) Disambiguation.}
\end{minipage}
\vspace{-5pt}
\caption{$\tool$ vs. baselines, in terms of benchmarks solved. Note that for disambiguation, we present the \emph{percentage} of benchmarks solved. (Recall: all of our disambiguation benchmarks are solvable by construction).}
\label{fig:eval:RQ2-results-solved}
\end{figure}

\begin{figure}[!t]
\centering
\begin{minipage}[b]{.45\linewidth}
\centering
\includegraphics[height=4.4cm]{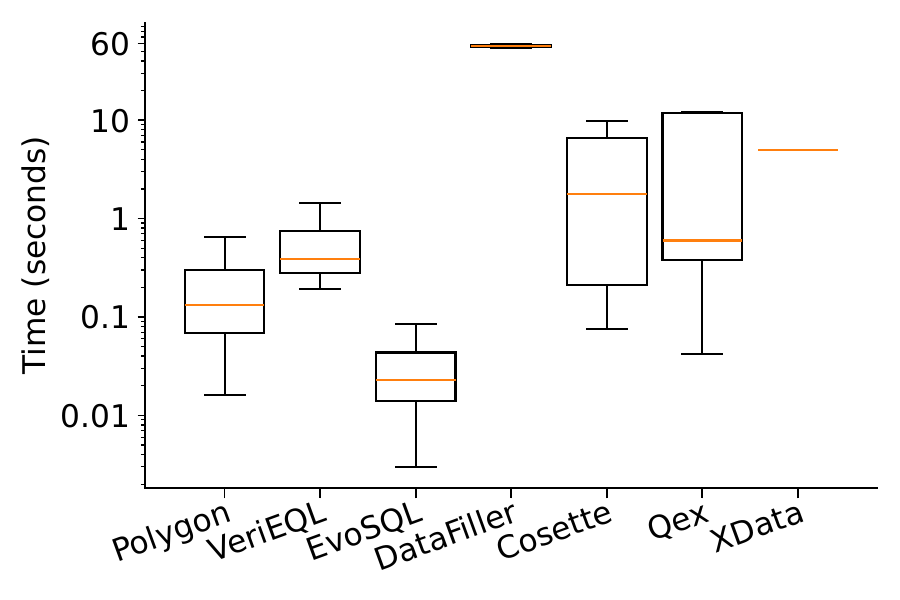}
\captionsetup{skip=3pt} 
\caption*{(a) Equivalence refutation.}
\end{minipage}
\hspace{1pt}
\begin{minipage}[b]{.53\linewidth}
\centering
\includegraphics[height=4.4cm]{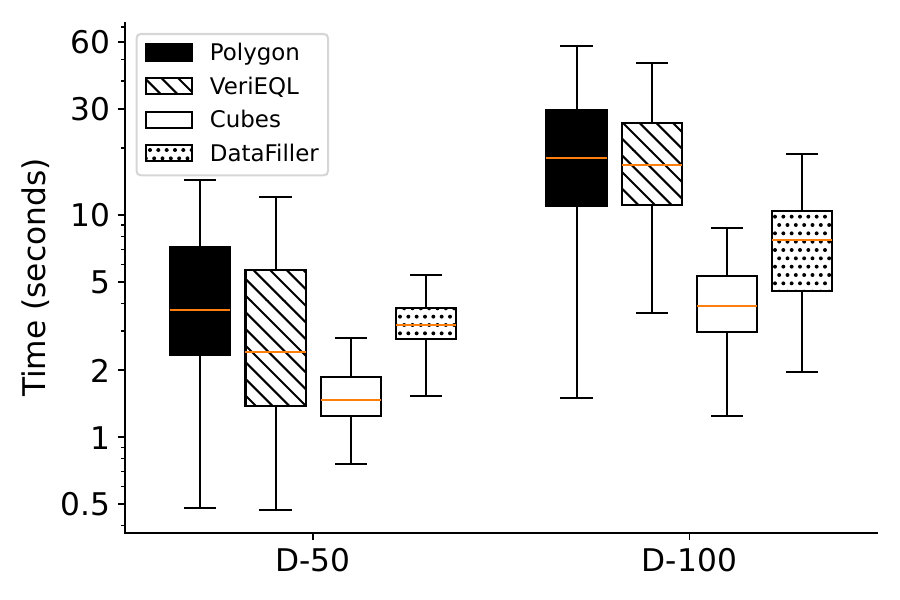}
\captionsetup{skip=3pt} 
\caption*{(b) Disambiguation.}
\end{minipage}
\vspace{-5pt}
\caption{$\tool$ vs. baselines, in terms of solving time. For each tool (including $\tool$ and all baselines), we present the quartile statistics of its solving times across all solved benchmarks.}
\label{fig:eval:RQ2-results-time}
\vspace{-10pt}
\end{figure}

\subsection{RQ3: Ablation Studies}
\label{sec:eval:RQ3}

\evalfinding{All of our design choices (including the top-level conflict-driven search algorithm architecture, the lattice structure of UAs, conflict extraction from unsatisfiability core, conflict accumulation) play an important role for $\tool$'s performance.}

\newpara{Two sets of ablations.}
The first set considers ablations that perform brute-force enumeration over UAs, with the goal of understanding the impact of the top-level architecture of our algorithm. 
Variants in the second set alter lower-level designs, and reuse the same architecture as $\tool$.

\newpara{First set of ablations.}
To be complete, these ablations must search a covering set of UAs for $\underapproxfamily_{\astnode}$, for each AST node $\astnode$. 
We create the following variants that use different covering sets. 
\begin{itemize}[leftmargin=*]
\item 
\textsc{Enum-MinUAs}, 
which considers all \emph{minimal} UAs for each $\astnode$. 
It enumerates all combinations of minimal UAs (one per node), and stops when a satisfying UA map is found.
\item 
\textsc{Enum-TopUAs}, 
which considers only \emph{Top UA} for each $\astnode$; i.e., it encodes the full semantics for $\astnode$. 
\item 
\textsc{Enum-50\%Top}, 
which considers UAs with \emph{half} of the values set to top. 
In particular, for each $\astnode$, we first set 50\% of the values (randomly picked) in $\astnode$'s UA to top. Then, we create all permutations of non-top values for the remaining half, each of which corresponds to a UA for $\astnode$.
\item 
\textsc{Enum-25\%Top} and \textsc{Enum-75\%Top}, which are constructed in the same fashion as \textsc{Enum-50\%Top} but set 25\% and 75\% (respectively, randomly selected) of the values to top. 
\end{itemize}
While still based on UAs, these ablations differ from our algorithm architecturally: they enumerate UA choices for every AST node in a brute-force manner, and return the first working combination. 
The reason we consider five ablations here is to obtain a full range of their performance data.

\newpara{Second set of ablations.}
We create the following variants. 
\begin{itemize}[leftmargin=*]

\item 
\textsc{TopUACover}, where $\splitUAspace$ is changed to return Top UA for each $\astnode_i$ at line~\ref{alg:resolveconflict:loop-begin} of Algorithm~\ref{alg:resolveconflict}.

\item 
\textsc{MinUAsCover}, where $\splitUAspace$ returns all minimal UAs for each $\astnode_i$ at line~\ref{alg:resolveconflict:loop-begin} of Algorithm~\ref{alg:resolveconflict}.

\item 
\textsc{NoNewConflicts}, which removes line~\ref{alg:resolveconflict:check-conflict} from Algorithm~\ref{alg:resolveconflict} and hence does not add new conflicts other than the one at line~\ref{alg:resolveconflict:add-conflict} (which is necessary for termination).

\item 
\textsc{NoUnsatCore}, where $\conflictASTnodes$ includes all nodes from $\mapastnodetounderapprox$ at line~\ref{alg:conflict-driven-UA-search:resolve-conflict} of Algorithm~\ref{alg:conflict-driven-UA-search}. 
%(rather than using only nodes from the unsat core).

\item 
\textsc{AddMinUAs}, which maps each $\astnode_i$ at line~\ref{alg:conflict-driven-UA-search:add-more-nodes} of Algorithm~\ref{alg:conflict-driven-UA-search}
to a minimal UA  (randomly selected from $\underapproxfamily_{\astnode_i}$) and also removes line~\ref{alg:conflict-driven-UA-search:obtain-model-and-update} (which is no longer necessary). 
%As a result, this variant might invoke $\fixconflict$ more frequently than $\tool$. 

\end{itemize}

\newpara{Setup.}
We use the same setup as in RQ2 to run all ablations and collect experimental data.

\newpara{Results.}
Figure~\ref{fig:eval:RQ3-results-solved} and Figure~\ref{fig:eval:RQ3-results-time} present our results. 
Let us begin with the first set of ablations. 
Our take-away is that the \textsc{Enum-X} ablations are significantly worse than $\tool$, 
both in terms of benchmarks solved and the solving time---for both equivalence refutation and disambiguation. 
This underscores the advantage of $\tool$'s algorithm architecture. 
On the other hand, ablations from the second set are on par with $\tool$ (except for $\textsc{MinUAsCover}$ which performs poorly) on ER and D-50: they solve slightly fewer benchmarks than $\tool$, using slightly more time. 
However, they become significantly worse on D-100 (with more queries to disambiguate). 
This highlights the importance of our various design choices---e.g., the lattice structure of UAs which allows $\splitUAspace$ to partition the search space---especially for harder problems (e.g., those involving more queries).

\begin{figure}[!t]
\centering
\begin{minipage}[b]{.35\linewidth}
\centering
\includegraphics[height=4.2cm]{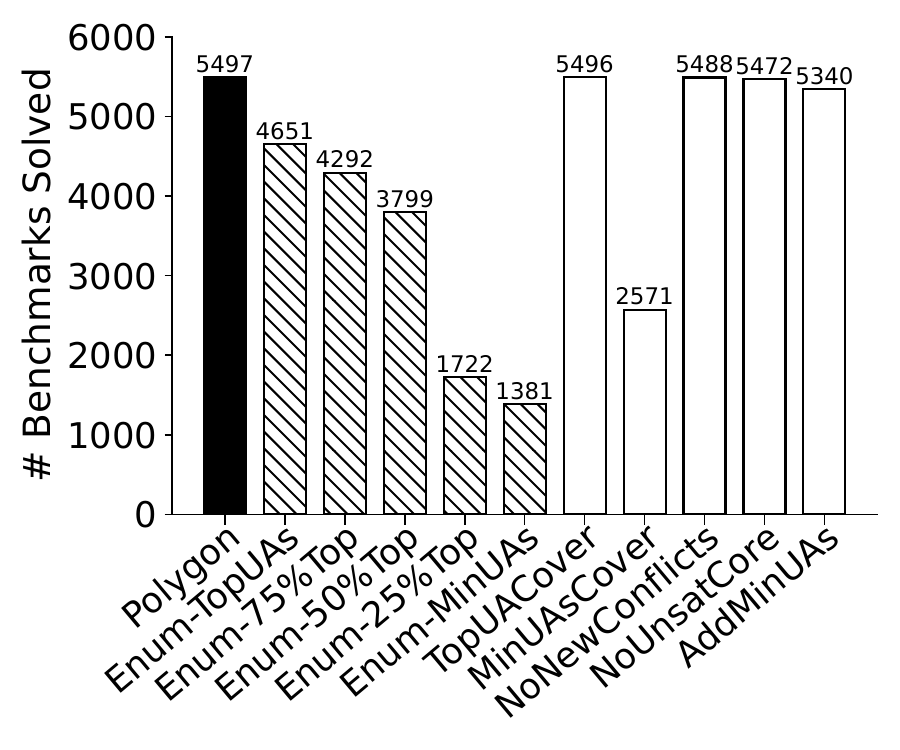}
\captionsetup{skip=3pt} 
\caption*{(a) Equivalence refutation.}
\end{minipage}
\hspace{1pt}
\begin{minipage}[b]{.63\linewidth}
\centering
\includegraphics[height=4.2cm]{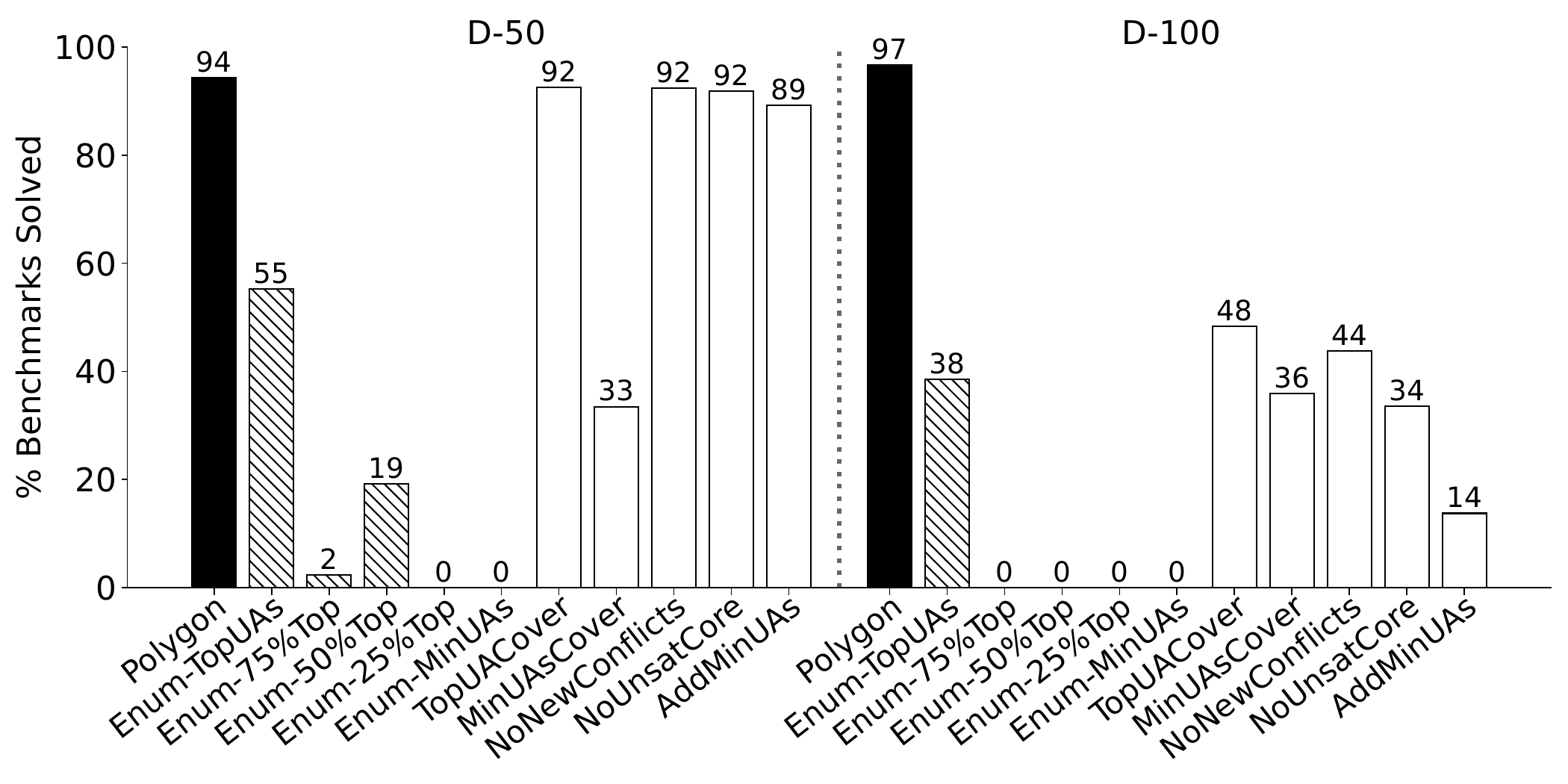}
\captionsetup{skip=3pt} 
\caption*{(b) Disambiguation.}
\end{minipage}
\vspace{-5pt}
\caption{$\tool$ vs. ablations, in terms of benchmarks solved.}
\label{fig:eval:RQ3-results-solved}
\end{figure}

\begin{figure}[!t]
\centering
\begin{minipage}[b]{.35\linewidth}
\centering
\includegraphics[height=4.2cm]{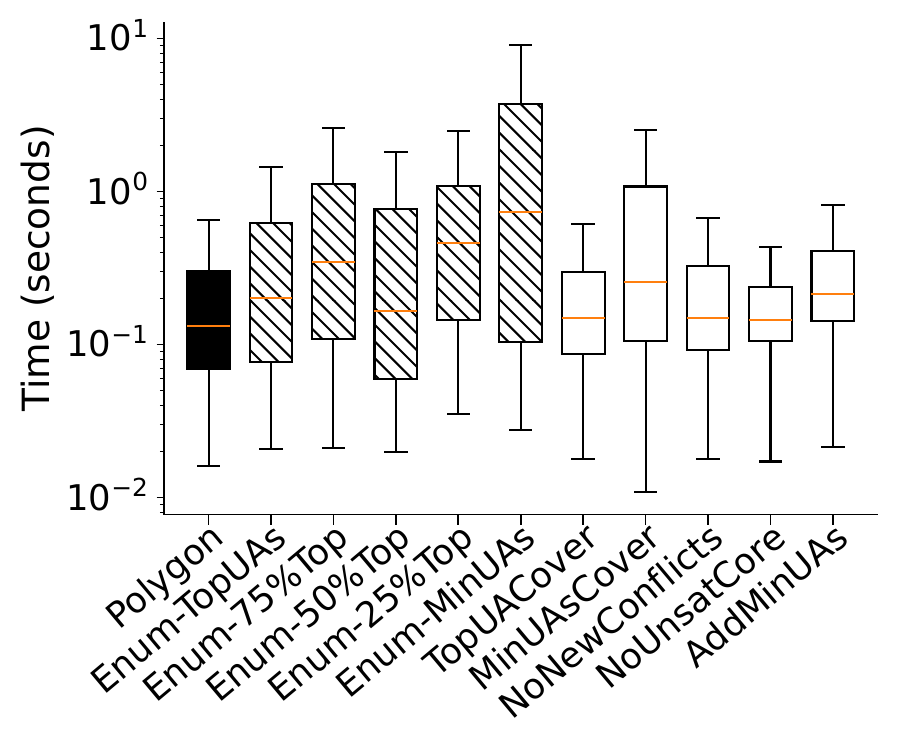}
\captionsetup{skip=3pt} 
\caption*{(a) Equivalence refutation.}
\end{minipage}
\hspace{1pt}
\begin{minipage}[b]{.63\linewidth}
\centering
\includegraphics[height=4.2cm]{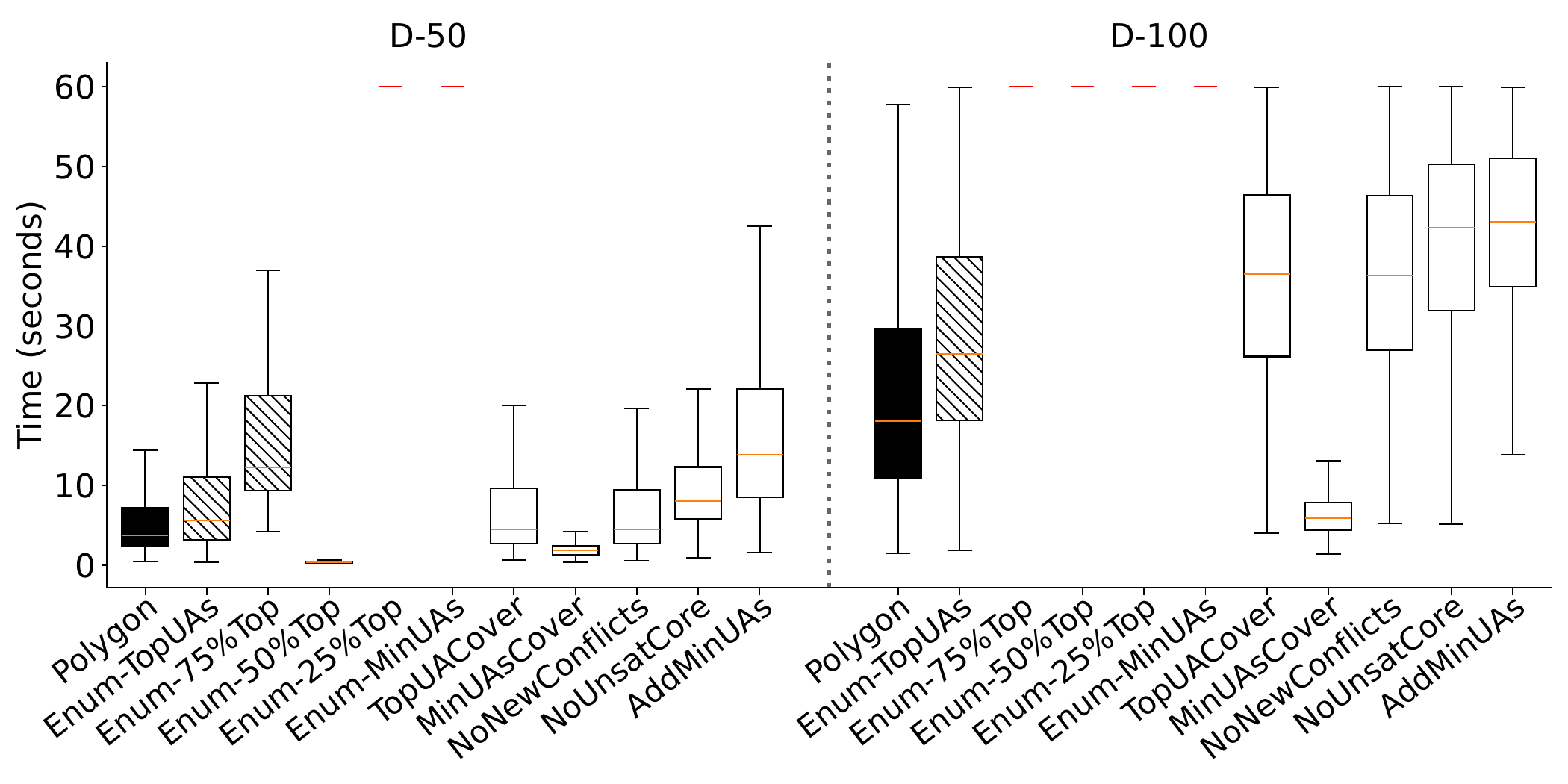}
\captionsetup{skip=3pt} 
\caption*{(b) Disambiguation.}
\end{minipage}
\vspace{-5pt}
\caption{$\tool$ vs. ablations, in terms of solving time. A red bar at the top means ``timeout on all benchmarks''.}
\label{fig:eval:RQ3-results-time}
\vspace{-10pt}
\end{figure}

\section{Related Work}\label{sec:related}

This section briefly discusses some closely related work.

\newpara{Under-approximate reasoning.}
Our work is inspired by O'Hearn's seminal work on incorrectness logic~\cite{o2019incorrectness} and a long line of works that leverage under-approximate reasoning for various tasks, such as 
proving non-termination~\cite{raad2024non}, 
detecting memory errors~\cite{le2022finding}, reasoning about concurrency~\cite{blackshear2018racerd,raad2023general}, scaling static analysis~\cite{distefano2019scaling}, dynamic symbolic execution~\cite{godefroid2007compositional},
among others~\cite{ball2005abstraction,puasuareanu2005concrete,kroening2015under,raad2020local,murray2020under,dardinier2024hyper,gorogiannis2019true}. 
$\tool$ can be viewed as a successful use of under-approximate reasoning to generate test inputs for SQL---in particular, for equivalence refutation and disambiguation. 
Building upon the literature, we contribute a compositional approach (based on SMT) to define a family of under-approximations per SQL operator, and a fast algorithm to search within this space for desired under-approximations.

\newpara{Symbolic execution.}
Our work can be viewed as a form of (backward) symbolic execution~\cite{baldoni2018survey,chandra2009snugglebug,chalupa2021backward}: it begins with the application condition $\appcond$ (i.e., an assertion) and under-approximates the semantics (akin to picking execution paths) of AST nodes top-down (i.e., backwards). 
We perform backtracking when the current UA map does not meet $\appcond$---but in a different fashion from prior work---by analyzing a subset of UAs that are in conflict. 
In particular, we utilize our lattice structure of UAs to search many UAs at the same time, which allows us to efficiently find a fix for the conflict. 
During this process, we discover and block additional conflicts, to speed up future search. 
This is related to, but different from, prior pruning techniques, such as those based on interpolation~\cite{mcmillan2010lazy,jaffar2013boosting} and   detecting inconsistent code~\cite{schwartz2015conflict}.
Our work is also related to summary-based symbolic execution~\cite{sery2011interpolation,alt2017hifrog,sery2012incremental}---especially compositional dynamic symbolic execution~\cite{anand2008demand,sen2015multise,godefroid2007compositional}---in the sense that we also utilize (under-approximate) function summaries. Our contribution is a novel compositional method to define a lattice of summaries (per SQL operator), which allows us to perform combinatorial search efficiently. 
Non-minimal UAs essentially correspond to state merging~\cite{godefroid2007compositional,kuznetsov2012efficient,porncharoenwase2022formal,lu2023grisette}, and are used locally (when analyzing conflicts or adding new AST nodes) to not overly stress the SMT solver.

\newpara{Symbolic reasoning for SQL.}
$\tool$ is especially related to prior symbolic reasoning techniques that are tailored towards SQL~\cite{wang2018speeding,he2024verieql,qex,chu2017cosette,chu2017hottsql,chu2018axiomatic,cheung2023towards} (all of which we compare with in our evaluation).
Unlike these works, $\tool$ under-approximates SQL semantics and performs reasoning while searching under-approximations. 
There is a long line of work on SQL equivalence verification~\cite{chu2017hottsql,chu2018axiomatic,green2009containment,zhou2022spes,zhou2019automated}. 
$\tool$ focuses on generating inputs to satisfy properties of query outputs,  including but not limited to checking (non-)equivalence. \xinyurevision{Alloy~\cite{torlak2007kodkod,jackson2012software,jackson2002alloy} represents another line of related work, which natively supports relational operators and hence in principle can be used to reason about SQL. 
The key distinction of our work lies in the granularity of UAs and the capability of performing search over them. In contrast, it is unclear if Alloy's ``scope'' mechanism is as flexible as our UAs. More importantly, Alloy's scope is always fixed beforehand and cannot be dynamically changed during analysis.}

\newpara{Test data generation for SQL.}
$\tool$ is also closely related to works on testing SQL queries. 
For instance, $\xdata$~\cite{chandra2019automated,chandra2015data} is a mutation-based tester to detect common SQL mistakes. $\datafiller$~\cite{datafiller-website} is a fuzzer that generates random test inputs, given the database schema. 
$\evosql$~\cite{castelein2018search} generates test data via an evolutionary search algorithm, guided by predicate coverage~\cite{tuya2010full}. 
Different from these approaches, we incorporate SQL semantics to generate satisfying inputs for a given property. 
$\tool$ is also related to property-based testing~\cite{claessen2000quickcheck,lampropoulos2019coverage,lampropoulos2018quickchick,lampropoulos2017generating,lampropoulos2017beginner,padhye2019jqf,zhou2023covering,goldstein2024property}, in the sense that we aim to generate input databases for a given property over SQL queries. 
Our generation process, however, uses SMT-based under-approximate reasoning over both the property and the queries, rather than some form of random input generation (as in many prior works) or enumeration~\cite{runciman2008smallcheck,matela2017tools}.

\newpara{Conflict-driven search.}
The idea of conflict-driven search has received success in multiple areas. 
For example, modern constraint solvers use conflict-driven clause learning to derive new clauses for faster boolean satisfiability solving~\cite{cdcl-book21}.
In program synthesis, a candidate program that fails to meet the specification can be viewed as a conflict. Various algorithms~\cite{neo-pldi18,trinity-vldb19,Wang20dynamite,mobius-oopsla23} have been proposed to generalize such a conflict to unseen programs that would fail due to the same reason. 
Our work is distinct in a few ways: 
we perform conflict-driven search over under-approximations, 
our conflict is generalized to new ones in a way that takes advantage of a predefined lattice structure of UAs, 
and we aim to generate inputs for SQL queries to meet a given property. 

\begin{comment}
The under-approximation search procedure presented in this paper builds on the conflict-driven search technique, which has been explored in various fields, including constraint solving and program synthesis.
Specifically, modern constraint solvers leverage conflict-driven clause learning (CDCL) to analyze conflicts and derive new clauses that expedite the solving of boolean satisfiability problems~\cite{cdcl-book21}.
Similar concepts have been adapted for program synthesis as well. For example, tools like \textsc{Neo}~\cite{neo-pldi18} and \textsc{Trinity}~\cite{trinity-vldb19} use deduction to prune the search space and incorporate root cause analysis to identify numerous programs that fail to meet specifications.
\textsc{Dynamite}~\cite{Wang20dynamite} introduces the idea of minimum failing inputs, enabling it to generalize an incorrect Datalog program to multiple incorrect variants, which facilitates faster synthesis.
\textsc{Mobius}~\cite{mobius-oopsla23} employs a provenance-guided approach to identify a minimal incorrect core related to faulty unification choices, speeding up the synthesis of Datalog programs with recursive and invented predicates.
Overall, \tool operates within this conflict-driven search framework but distinguishes itself by focusing on generating under-approximations for test input generation.

\end{comment}

\section{Conclusion and Discussion}
\label{sec:conc}

\xinyurevision{
This paper presented a new method based on under-approximation search to perform symbolic reasoning for SQL. Our evaluation demonstrated significant performance boost over all state-of-the-art techniques, for two reasoning tasks (namely SQL equivalence refutation and disambiguation).}

While this work is largely focused on SQL, we believe the underlying principles have the potential to generalize to other languages and domains.
A fundamental assumption is: analyzing an under-approximation (UA) of a program is cheap---which we believe holds true in general. 
Then, we need to curate a family of UAs for the language. 
In the case of SQL, we were able to do this compositionally. 
We believe this is also possible, in general, for programs representable using a loop-free composition of blocks (like an AST). 
Finally, our search technique (Algorithms~\ref{alg:conflict-driven-UA-search} and~\ref{alg:resolveconflict}) does not assume SQL. 
It, however, assumes a lattice of UAs, which we believe is definable for other languages.

\xinyurevision{
To explore the full generality of this idea, one interesting future direction is to build the idea on top of the Rosette solver-aided programming language~\cite{torlak2013growing}. 
For instance, one approach is to build a Rosette-based symbolic interpreter that is parameterized with a UA. Then, given $n$ programs, it can perform symbolic execution to check against the given application condition with respect to the given UA. 
On top of this symbolic interpreter, we can implement the UA search algorithm. 
}
\section*{Acknowledgments}
\label{sec:ack}

We would like to thank the PLDI anonymous reviewers for their insightful feedback. 
We thank Zheng Guo, Chenglong Wang, and Wenxi Wang for their feedback on earlier drafts of this work. 
We would also like to thank Danny Ding for helping with a baseline in the evaluation, Xiaomeng Xu and Yuxuan Zhu for their contributions to earlier versions of this work, and Brian Zhang for helping process some of the benchmarks. 
This research is supported by the National Science Foundation under Grant Numbers 
CCF-2210832, 
CCF-2318937,  
CCF-2236233, 
and CCF-2123654, 
as well as an NSERC Discovery Grant.

\section*{Artifact Availability Statement}\label{sec:das}

The artifact that implements the techniques and supports the evaluation results reported in this paper is available on Zenodo~\cite{artifact}.

\bibliography{main}

\newpage
\appendix
\section{Under-Approximations for Other Operators} \label{sec:ua-others}

\newpara{Cartesian Product.}
The UAs for $\product$ is similar to the UAs for the inner join $\ijoin_{\predicate}$, where its UA $\underapprox$ is also an $n_1 \times n_2$ matrix: $n_1$ (resp. $n_2$) is the maximum size of the first (resp. second) table, and each $\underapprox_{i, j}$ is either $\truevalue, \falsevalue$ or $\unknownvalue$.  Different from the inner join operator, a cartesian product doesn't have a join predicate $\predicate$.  So for each $\underapprox_{i, j}$, $\truevalue$ simply means the $i$th tuple $\tuple_i$ from the first table and the $j$th tuple $\tuple_j$ from the second are both present, whereas $\falsevalue$ means at least one of $\tuple_i$, $\tuple_j$ is deleted.

\newpara{Outer Joins.}
Outer joins $\ljoin_{\predicate}$, $\rjoin_{\predicate}$, and $\fjoin_{\predicate}$ have the same UA as inner join's.  Specifically, their UA $\underapprox$ is an $n_1 \times n_2$  matrix, where $\underapprox_{i, j}$ is $\truevalue, \falsevalue$ or $\unknownvalue$, and $n_1$ (resp. $n_2$) is the maximum size of the first (resp. second) table.
For each $\underapprox_{i, j}$, $\truevalue$  means the $i$th tuple $\tuple_i$ from the first table and the $j$th tuple $\tuple_j$ from the second both exist and satisfy the join condition $\predicate$; whereas $\falsevalue$ means at least one of $\tuple_i$, $\tuple_j$ is deleted, or they are both present but do not satisfy $\predicate$. 

\newpara{OrderBy.}
The UAs for $\orderby_{\expression}$ is different from all other operators. 
While its $\underapprox$ is still a vector of $n$ values and $n$ is the maximum size of the input table, each $\underapprox_i$ now is chosen from $\{ \falsevalue, \truevalue_1, \mydots, \truevalue_n, \unknownvalue \}$. 
If $\underapprox_i = \falsevalue$, it means the $i$th tuple $\tuple_i$ is not present in the input table. 
On the other hand, $\underapprox_i = \truevalue_k$ means tuple $\tuple_i$ has rank $k$ among all tuples according to expression $\expression$. 
\section{Encoding Full Semantics} \label{sec:all-encoding}

Figures~\ref{fig:under-semantics} and~\ref{fig:under-semantics-b} present the encoding of full semantics for all query operators from Figure ~\ref{fig:sql-syntax}.  Specifically, Figure ~\ref{fig:under-semantics} shows the encoding for $\filter$, $\proj$, $\product$, and $\ijoin$, and the other operators are formalized in ~\ref{fig:under-semantics-b}.

\begin{figure}[!h]
\centering
\scriptsize
\[
\hspace{-10pt}
\arraycolsep=1pt\def\arraystretch{1}
\begin{array}{ll}

(1) & 
\irule
{
\begin{array}{l}

\queryopinput = [\tuple_1, \mydots, \tuple_n] 

\quad

\queryopoutput = [\tuple'_1, \mydots, \tuple'_n]

\quad

\underapproxvar = [ \underapproxvar_1, \mydots, \underapproxvar_n ]

\quad

\attrlist = \tableattrs(\queryopinput)

\\

\queryopencoding_{i, \truevalue} = 
( \underapproxvar_i = \truevalue ) 
\to 
\big(
\cmdExist(\tuple_i, \tuple'_i, \denot{\predicate}_{\tuple'_i}) \land \cmdCopy(\tuple_i, \tuple'_i, \attrlist) 
\big)

\quad

\queryopencoding_{i, \falsevalue} = 
( 
\underapproxvar_i = \falsevalue ) 
\to 
\cmdCondDel(\tuple_i, \tuple'_i, \denot{\predicate}_{\tuple'_i}
)

\end{array}
}
{
\encodefullsemantics 
\big( 
\filter_{\predicate}
\big) 
\ruleleadsto 
\bigland_{i = 1, \mydots, n}
\queryopencoding_{i, \truevalue} 
\land 
\queryopencoding_{i, \falsevalue}
}

\\ \\ 

(2) & 
\irule
{
\begin{array}{l}

%\neg \text{hasAggregate(\attrlist)}  \quad
\queryopinput = [\tuple_1, \mydots, \tuple_n] 
\quad
\queryopoutput = [\tuple'_{1}, \mydots, \tuple'_n] 
\quad
\underapproxvar = [ \underapproxvar_1, \mydots, \underapproxvar_n ] 

\\

\queryopencoding_{i, \truevalue} = 
( \underapproxvar_i = \truevalue ) 
\to 
\big( 
\cmdExist(\tuple_i, \tuple'_i, \truepredicate) 
\land
\cmdCopy(\tuple_i, \tuple'_i, \attrlist) 
\big)

\quad 

\queryopencoding_{i, \falsevalue} = 
( \underapproxvar_i = \falsevalue ) 
\to 
\cmdCondDel(\tuple_i, \tuple'_i, \truepredicate)

\end{array}
}
{
\encodefullsemantics \big( \proj_\attrlist \big)  
\ruleleadsto 
\bigland_{i = 1, \mydots, n}
\queryopencoding_{i, \truevalue} 
\land 
\queryopencoding_{i, \falsevalue}
}

\\ \\

(3) & 
\irule
{
\begin{array}{l}

\queryopinput_1 = [ \tuple_1, \mydots, \tuple_{n_1} ] 
\ \ 
\queryopinput_2 = [ \tuple'_1, \mydots, \tuple'_{n_2} ] 
\ \ 
\queryopoutput = [ \tuple\doubleprime_{1,1}, \mydots, \tuple\doubleprime_{n_1, n_2} ] 

\ \ 

\underapproxvar = \big[ [ \underapproxvar_{1, 1}, \mydots \underapproxvar_{1, n_2} ], \mydots, [ \underapproxvar_{n_1, 1 }, \mydots, \underapproxvar_{n_1, n_2}] \big]

\ \ 

\attrlist_i =  \tableattrs(\queryopinput_i) 

\\[2pt]

\queryopencoding_{i, j, \truevalue} = 
( \underapproxvar_{i,j} = \truevalue ) 
\to 
\cmdExist 
\Big( 
( \tuple_i, \tuple'_j ), \tuple\doubleprime_{i, j}, \top 
\Big) 
\land 
\cmdCopy( \tuple_i, \tuple\doubleprime_{i,j}, \attrlist_1 ) 
\land 
\cmdCopy( \tuple'_j, \tuple\doubleprime_{i,j}, \attrlist_2 )

\\ 

\queryopencoding_{i, j, \falsevalue} = 
( \underapproxvar_{i,j} = \falsevalue )  
\to 
\cmdCondDel 
\Big( 
( \tuple_i, \tuple'_j ),
\tuple\doubleprime_{i,j}, 
\top 
\Big)

\end{array}
}
{
\encodefullsemantics \big( \product \big) 
\ruleleadsto 
\bigland_{i = 1, \mydots, n_1} 
\bigland_{ j = 1, \mydots, n_2}
\queryopencoding_{i, j, \truevalue}
\land 
\queryopencoding_{i, j, \falsevalue}
}

\\ \\

(4) & 
\irule
{
\begin{array}{l}

\queryopinput_1 = [ \tuple_1, \mydots, \tuple_{n_1} ] 
\ \ 
\queryopinput_2 = [ \tuple'_1, \mydots, \tuple'_{n_2} ] 
\ \ 
\queryopoutput = [ \tuple\doubleprime_{1,1}, \mydots, \tuple\doubleprime_{n_1, n_2} ] 

\ \ 

\underapproxvar = \big[ [ \underapproxvar_{1, 1}, \mydots \underapproxvar_{1, n_2} ], \mydots, [ \underapproxvar_{n_1, 1 }, \mydots, \underapproxvar_{n_1, n_2}] \big]

\ \ 

\attrlist_i =  \tableattrs(\queryopinput_i) 

\\[2pt]

\queryopencoding_{i, j, \truevalue} = 
( \underapproxvar_{i,j} = \truevalue ) 
\to 
\cmdExist 
\Big( 
( \tuple_i, \tuple'_j ), \tuple\doubleprime_{i, j}, \denot{\predicate}_{\tuple_i, \tuple'_j} 
\Big) 
\land 
\cmdCopy( \tuple_i, \tuple\doubleprime_{i,j}, \attrlist_1 ) 
\land 
\cmdCopy( \tuple'_j, \tuple\doubleprime_{i,j}, \attrlist_2 )

\\ 

\queryopencoding_{i, j, \falsevalue} = 
( \underapproxvar_{i,j} = \falsevalue )  
\to 
\cmdCondDel 
\Big( 
( \tuple_i, \tuple'_j ),
\tuple\doubleprime_{i,j}, 
\denot{\predicate}_{\tuple_i, \tuple'_j} 
\Big)

\end{array}
}
{
\encodefullsemantics \big( \ijoin_\predicate \big) 
\ruleleadsto 
\bigland_{i = 1, \mydots, n_1} 
\bigland_{ j = 1, \mydots, n_2}
\queryopencoding_{i, j, \truevalue}
\land 
\queryopencoding_{i, j, \falsevalue}
}

\end{array}
\]
\caption{Full semantics for query operators from Figure ~\ref{fig:sql-syntax}.  An auxiliary function $\cmdNullTuples$ creates and determines the existence of a null tuple $\tuple_\omega$, based on $\underapproxvar$, where attributes in $\attrlist$ are copied from the tuple $\tuple$, while attributes in $\attrlist_\omega$ are set to $\nullv$, formally defined as $\cmdNullTuples(\tuple_\omega, \tuple, \attrlist, \attrlist_\omega, \vec{\underapproxvar}) = \cmdCopy(\tuple, \tuple_\omega, \attrlist) \land \Big(\bigwedge\limits_{a \in \attrlist_\omega} \denot{\tuple_\omega.a} = \nullv \Big) \land (\land_{\underapproxvar \in \vec{\underapproxvar}} \ \underapproxvar = \falsevalue) \leftrightarrow \neg \del(\tuple_\omega)$. }
\label{fig:under-semantics}
\end{figure}

\begin{figure}[!h]
\centering
\scriptsize
\[
\hspace{-10pt}
\arraycolsep=1pt\def\arraystretch{1}
\begin{array}{ll}

\\ \\

(5) & 
\irule
{
\begin{array}{l}

\queryopinput_1 = [ \tuple_1, \mydots, \tuple_{n_1} ] 
\ \ 
\queryopinput_2 = [ \tuple'_1, \mydots, \tuple'_{n_2} ] 
\ \ 
\queryopoutput = [ \tuple\doubleprime_{1,1}, \mydots, \tuple\doubleprime_{n_1, n_2}, \tuple\doubleprime_{1, n_2+1}, \mydots, \tuple\doubleprime_{n_1, n_2+1} ] \\

% \ \ 

\underapproxvar = \big[ [ \underapproxvar_{1, 1}, \mydots \underapproxvar_{1, n_2} ], \mydots, [ \underapproxvar_{n_1, 1 }, \mydots, \underapproxvar_{n_1, n_2}] \big]

\ \ 

\attrlist_i =  \tableattrs(\queryopinput_i) 

\\[2pt]

\queryopencoding_{i, j, \truevalue} = 
( \underapproxvar_{i,j} = \truevalue ) 
\to 
\cmdExist 
\Big( 
( \tuple_i, \tuple'_j ), \tuple\doubleprime_{i, j}, \denot{\predicate}_{\tuple_i, \tuple'_j} 
\Big) 
\land 
\cmdCopy( \tuple_i, \tuple\doubleprime_{i,j}, \attrlist_1 ) 
\land 
\cmdCopy( \tuple'_j, \tuple\doubleprime_{i,j}, \attrlist_2 )

\\

\queryopencoding_{i, j, \falsevalue} = 
( \underapproxvar_{i,j} = \falsevalue )  
\to 
\cmdCondDel 
\Big( 
( \tuple_i, \tuple'_j ),
\tuple\doubleprime_{i,j}, 
\denot{\predicate}_{\tuple_i, \tuple'_j} 
\Big)

\quad

\queryopencoding_{i, \nullv} = 

\cmdNullTuples(\tuple''_{i, n_2+1}, \tuple_i, \attrlist_1, \attrlist_2, \bigcup_{j \in [1,n_2]}\underapproxvar_{i,j} ) \\

\end{array}
}
{
\encodefullsemantics \big( \ljoin_\predicate \big) 
\ruleleadsto 
\big(
\bigland_{i = 1, \mydots, n_1} 
\bigland_{ j = 1, \mydots, n_2}
\queryopencoding_{i, j, \truevalue}
\land 
\queryopencoding_{i, j, \falsevalue}
\big)
\land
\big(
\bigland_{i = 1, \mydots, n_1} 
\queryopencoding_{i, \nullv}
\big)
}

\\ \\

(6) & 
\irule
{
\begin{array}{l}

\queryopinput_1 = [ \tuple_1, \mydots, \tuple_{n_1} ] 
\ \ 
\queryopinput_2 = [ \tuple'_1, \mydots, \tuple'_{n_2} ] 
\ \ 
\queryopoutput = [ \tuple\doubleprime_{1,1}, \mydots, \tuple\doubleprime_{n_1, n_2}, \tuple\doubleprime_{n_1 + 1, 1}, \mydots, \tuple\doubleprime_{n_1 + 1, n_2} ] \\

% \ \ 

\underapproxvar = \big[ [ \underapproxvar_{1, 1}, \mydots \underapproxvar_{1, n_2} ], \mydots, [ \underapproxvar_{n_1, 1 }, \mydots, \underapproxvar_{n_1, n_2}] \big]

\ \ 

\attrlist_i =  \tableattrs(\queryopinput_i) 

\\[2pt]

\queryopencoding_{i, j, \truevalue} = 
( \underapproxvar_{i,j} = \truevalue ) 
\to 
\cmdExist 
\Big( 
( \tuple_i, \tuple'_j ), \tuple\doubleprime_{i, j}, \denot{\predicate}_{\tuple_i, \tuple'_j} 
\Big) 
\land 
\cmdCopy( \tuple_i, \tuple\doubleprime_{i,j}, \attrlist_1 ) 
\land 
\cmdCopy( \tuple'_j, \tuple\doubleprime_{i,j}, \attrlist_2 )

\\

\queryopencoding_{i, j, \falsevalue} = 
( \underapproxvar_{i,j} = \falsevalue )  
\to 
\cmdCondDel 
\Big( 
( \tuple_i, \tuple'_j ),
\tuple\doubleprime_{i,j}, 
\denot{\predicate}_{\tuple_i, \tuple'_j} 
\Big)

\quad

\queryopencoding_{j, \nullv} = 

\cmdNullTuples(\tuple''_{n_1+1, j}, \tuple'_j, \attrlist_2, \attrlist_1, \bigcup_{i \in [1,n_1]}\underapproxvar_{i,j} ) \\

\end{array}
}
{
\encodefullsemantics \big( \rjoin_\predicate \big) 
\ruleleadsto 
\big(
\bigland_{i = 1, \mydots, n_1} 
\bigland_{ j = 1, \mydots, n_2}
\queryopencoding_{i, j, \truevalue}
\land 
\queryopencoding_{i, j, \falsevalue}
\big)
\land
\big(
\bigland_{j = 1, \mydots, n_2} 
\queryopencoding_{j, \nullv}
\big)
}

\\ \\

(7) & 
\irule
{
\begin{array}{l}

\queryopinput_1 = [ \tuple_1, \mydots, \tuple_{n_1} ] 
\ \ 
\queryopinput_2 = [ \tuple'_1, \mydots, \tuple'_{n_2} ] 
\ \ 
\queryopoutput = [ \tuple\doubleprime_{1,1}, \mydots, \tuple\doubleprime_{n_1, n_2}, \tuple\doubleprime_{1, n_2+1}, \mydots, \tuple\doubleprime_{n_1, n_2+1}, \tuple\doubleprime_{n_1 + 1, 1}, \mydots, \tuple\doubleprime_{n_1 + 1, n_2} ] \\

% \ \ 

\underapproxvar = \big[ [ \underapproxvar_{1, 1}, \mydots \underapproxvar_{1, n_2} ], \mydots, [ \underapproxvar_{n_1, 1 }, \mydots, \underapproxvar_{n_1, n_2}] \big]

\ \ 

\attrlist_i =  \tableattrs(\queryopinput_i) 

\\[2pt]

\queryopencoding_{i, j, \truevalue} = 
( \underapproxvar_{i,j} = \truevalue ) 
\to 
\cmdExist 
\Big( 
( \tuple_i, \tuple'_j ), \tuple\doubleprime_{i, j}, \denot{\predicate}_{\tuple_i, \tuple'_j} 
\Big) 
\land 
\cmdCopy( \tuple_i, \tuple\doubleprime_{i,j}, \attrlist_1 ) 
\land 
\cmdCopy( \tuple'_j, \tuple\doubleprime_{i,j}, \attrlist_2 )

\\

\queryopencoding_{i, j, \falsevalue} = 
( \underapproxvar_{i,j} = \falsevalue )  
\to 
\cmdCondDel 
\Big( 
( \tuple_i, \tuple'_j ),
\tuple\doubleprime_{i,j}, 
\denot{\predicate}_{\tuple_i, \tuple'_j} 
\Big)

\\[2pt]

\queryopencoding_{i, \nullv} = 

\cmdNullTuples(\tuple''_{i, n_2+1}, \tuple_i, \attrlist_1, \attrlist_2, \bigcup_{j \in [1,n_2]}\underapproxvar_{i,j} )

\quad

\queryopencoding'_{j, \nullv} = 

\cmdNullTuples(\tuple''_{n_1+1, j}, \tuple'_j, \attrlist_2, \attrlist_1, \bigcup_{i \in [1,n_1]}\underapproxvar_{i,j} )
\\

\end{array}
}
{
\encodefullsemantics \big( \fjoin_\predicate \big) 
\ruleleadsto 
\big(
\bigland_{i = 1, \mydots, n_1} 
\bigland_{ j = 1, \mydots, n_2}
\queryopencoding_{i, j, \truevalue}
\land 
\queryopencoding_{i, j, \falsevalue}
\big)

\land

\big(
\bigland_{i = 1, \mydots, n_1} 
\queryopencoding_{i, \nullv}
\big)

\land

\big(
\bigland_{j = 1, \mydots, n_2} 
\queryopencoding'_{j, \nullv}
\big)
}

\\ \\

(8) & 
\irule
{
\begin{array}{l}

\queryopinput = [\tuple_1, \mydots, \tuple_n] 

\quad

\queryopoutput = [\tuple'_{1}, \mydots, \tuple'_n] 

\quad

\underapproxvar = [ \underapproxvar_1, \mydots, \underapproxvar_n ]

\\[2pt]

\queryopencoding_{i, \falsevalue} =

( \underapproxvar_i = \falsevalue ) 

\to

\Big(

\big( 
\del( \tuple_i ) 
\lor 

\Big( 
\neg \del( \tuple_i )

\land

\biglor_{j=1, \mydots, i-1} 
\big( 
\neg \del(\tuple_j) 
\land 
\bigland_{a \in \vec{\expression}} \denot{\tuple_i.a} = \denot{\tuple_j.a} 
\land 
\group(\tuple_i) = j
\big) 

\Big) 
\big)

\land 

\del( \tuple'_i )

\Big)

\\

\queryopencoding_{i, \neg \falsevalue} = 
\neg \del(\tuple_i) 
\land 
\neg \biglor_{j=1, \mydots, i-1} 
\Big( 
\neg \del(\tuple_j) \land \bigland_{a \in \vec{\expression}}\denot{\tuple_i.a} = \denot{\tuple_j.a}
\Big) 
\land 
\group(\tuple_i) = i

\\

\queryopencoding_{i, \truevalue} = 
( \underapproxvar_i = \truevalue ) 
\to 
\Big( 
\queryopencoding_{i, \neg \falsevalue}
\land 
\neg \denot{\predicate}_{\group^{-1}(i)} 
\land 
\del(\tuple'_i)
\Big)

\quad

\queryopencoding_{i, \truevalue_{\predicate}} = 
( \underapproxvar_i = \truevalue_{\predicate} ) 
\to 
\Big( 
\queryopencoding_{i, \neg \falsevalue}
\land 
\denot{\phi}_{\group^{-1}(i)} \land \neg \del(\tuple'_i) \land \cmdCopy(\group^{-1}(i), \tuple'_i, \attrlist)
\Big)

\end{array}
}
{
\encodefullsemantics \big( \groupby_{\vec{\expression}, \attrlist, \predicate} \big) 
\ruleleadsto 
\bigland_{i = 1, \mydots, n}
\queryopencoding_{i, \truevalue} 
\land 
\queryopencoding_{i, \truevalue_{\predicate}} 
\land 
\queryopencoding_{i, \falsevalue} 
}

\\ \\

(9) & 
\irule
{
\begin{array}{l}

\queryopinput = [\tuple_1, \mydots, \tuple_n] 
\quad
\queryopoutput = [\tuple'_{1}, \mydots, \tuple'_n] 
\quad
\underapproxvar = [ \underapproxvar_1, \mydots, \underapproxvar_n ]

\\[2pt]

\queryopencoding_{i, \truevalue} = 
( \underapproxvar_i = \truevalue ) 

\to
\big(
\cmdExist(\tuple_i, \tuple'_i, \land_{j=1}^{i-1} t_i \neq t_j) \land \cmdCopy(\tuple_i, \tuple'_i, \attrlist)
\big)

\quad

\queryopencoding_{i, \falsevalue} = 

( \underapproxvar_i = \falsevalue ) 

\to 

\cmdCondDel(\tuple_i, \tuple'_i, \land_{j=1}^{i-1} t_i \neq t_j)

\end{array}
}
{
\encodefullsemantics \big( \distinct \big) 
\ruleleadsto 
\bigland_{i = 1, \mydots, n}
\queryopencoding_{i, \truevalue} 
\land 
\queryopencoding_{i, \falsevalue} 
}

\\ \\

(10) & 
\irule
{
\begin{array}{l}

\queryopinput_1 = [\tuple_1, \mydots, \tuple_{n_1}] 
\quad
\queryopinput_2 = [\tuple'_1, \mydots, \tuple_{n_2}] 
\quad
\queryopoutput = [\tuple\doubleprime_{1}, \mydots, \tuple\doubleprime_{n_1+n_2}] 
\quad
\underapproxvar = [ \underapproxvar_1, \mydots, \underapproxvar_{n_1+n_2} ]

\\[2pt]

\queryopencoding_{i, \truevalue} = 
( \underapproxvar_i = \truevalue ) 

\to

\big(
\cmdCopy(\tuple_{i}, \tuple''_i, \attrlist_1) \land \cmdExist(\tuple_i, \tuple''_i, \top)
\big)

\quad

\queryopencoding_{i, \falsevalue} = 

( \underapproxvar_i = \falsevalue ) 

\to 

\cmdCondDel(\tuple_i, \tuple''_i, \top)

\\

\queryopencoding'_{i, \truevalue} = 
( \underapproxvar_i = \truevalue ) 

\to

\big(
(\cmdCopy(\tuple'_{i-{n_1}}, \tuple''_i, \attrlist_2) \land \cmdExist(\tuple'_{i-{n_1}}, \tuple''_i, \top))
\big)

\quad

\queryopencoding'_{i, \falsevalue} = 

( \underapproxvar_i = \falsevalue ) 

\to 

\cmdCondDel(\tuple'_{i-{n_1}}, \tuple''_i, \top)

\end{array}
}
{
\encodefullsemantics \big( \unionall \big) 
\ruleleadsto 

\big(
\bigland_{i = 1, \mydots, n}
\queryopencoding_{i, \truevalue} 
\land 
\queryopencoding_{i, \falsevalue} 
\big)

\land

\big(
\bigland_{i = n_1+1, \mydots, n_1 + n_2}
\queryopencoding'_{i, \truevalue} 
\land 
\queryopencoding'_{i, \falsevalue} 
\big)
}

\\ \\

(11) & 
\irule
{
\begin{array}{l}

\queryopinput = [\tuple_1, \mydots, \tuple_n] 

\quad

\queryopoutput = [\tuple'_{1}, \mydots, \tuple'_n] 

\quad

\underapproxvar = [ \underapproxvar_1, \mydots, \underapproxvar_n ]

\quad

\attrlist = \tableattrs(\queryopinput)

\\

\queryopencoding_{i, \falsevalue} = 
( \underapproxvar_i = \falsevalue  ) 
\to 
\del(\tuple_i)

\quad

\queryopencoding_{i, \truevalue_k} 
= 
( \underapproxvar_i = \truevalue_k ) 
\to 

\big(

\sum_{j=1}^{n}
\indicator 
\Big( 
j \neq i \land \denot{\expression}_{\tuple_j} < \denot{\expression}_{\tuple_i} 
\Big) 
= 
k 

\land 

\cmdCopy 
\Big( 
\tuple_i, \tuple'_{ k + \sum_{j=1}^{i-1}\indicator(\tuple_j = \tuple_i)}, \attrlist 
\Big)

\big)

\end{array}
}
{
\encodefullsemantics \big( \orderby_\expression \big) 
\ruleleadsto 
\big(
\bigland_{i = 1, \mydots, n}
\bigland_{k = 1, \mydots, n}
\queryopencoding_{i, \falsevalue}
\land 
\queryopencoding_{i, \truevalue_k} 
\big)

\land

\big(
\bigland_{i = \sum_{i=1}^n\indicator(\neg\del(\tuple_i)) + 1, \mydots, n}
\del(\tuple_i)
\big)
}

% \irulelabel
% {\begin{array}{c}
% \queryopinput = [\tuple_1, \ldots, \tuple_n] \quad
% \queryopoutput = [\tuple'_{1}, \ldots, \tuple'_n] \quad
% \attrlist = \text{Attrs}(\queryopinput) \quad
% |\queryopoutput| = \sum_{i=1}^n\indicator(\neg\del(\tuple_i)) \quad
% \underapproxvar \in [0, |\queryopoutput|]^n \\

% \queryopencoding = \land_{i=1}^n (
% (\underapproxvar_i = 0 \to \del(\tuple_i))
% \land
% (\underapproxvar_i = x \land x \neq 0 \to ((\sum_{j=1}^n\indicator(j \neq i \land \denot{\expression}_{\tuple_j} < \denot{\expression}_{\tuple_i)}) = x \land \cmdCopy(\tuple_i, \tuple'_{x+\sum_{j=1}^{i-1}\indicator(\tuple_j = \tuple_i)}, \attrlist))
% )
% )
% \end{array}}
% {\encodefullsemantics(\text{OrderBy}_\expression) \leadsto \queryopencoding}
% {\textsc{(OrderBy)}}

\end{array}
\]
\caption{Full semantics for query operators from Figure ~\ref{fig:sql-syntax} (cont.)}
\label{fig:under-semantics-b}
\end{figure}

Note that we do not have rules for $\relation$, $\rename$, and $\with$.  This is because for $\relation$ and $\rename$, the output is simply the same as input; an operator $\with(\vec{\sqlquery}, \vec{\relation}, \sqlquery)$ does not require a separate rule and is handled by first encoding the queries in $\vec{\sqlquery}$ and mapping the corresponding results to $\vec{\relation}$, and the result of $\sqlquery$ is returned.

% \todo{@Pinhan - note I commented out something I tried to write for OrderBy in the main paper rules. which you might find useful when writing the rule for OrderBy. or maybe not.}

\section{Proofs} \label{sec:proof}

We provide proofs for all theorems in this section.

\begin{comment}
Completeness 1: Omega' only contains incorrect UAs
Completeness 2: Omega' strictly grows in each iteration
Completeness 3: Alg3 Line5 exhausts the sub-search space, Alg3 Line2 enumerates all sub-search space because of cover
\end{comment}

\textsc{Theorem}~\ref{thm:semantics-correctness}. 
Suppose $\encodeuasemantics(\queryop, \underapprox)$ yields an SMT formula $\formula$, for query operator $\queryop$ and UA $\underapprox \in \underapproxfamily_{\queryop}$.
For any model $\smtmodel$ of $\formula$, the corresponding inputs $\smtmodel(\vec{x})$ and output $\smtmodel(y)$ are consistent with the precise semantics of $\queryop$; that is, $\denot{\queryop}_{\smtmodel(\vec{x})} = \smtmodel(y)$.
\begin{proof}
Prove by case analysis of $\queryop$.
\begin{itemize}[leftmargin=*]

\item $\queryop = \proj_\attrlist$. \\
Suppose $\encodeuasemantics(\queryop, \underapprox)$ is satisfiable, let us consider a model $\smtmodel$ of the result formula, i.e., $\smtmodel \models \encodeuasemantics(\queryop, \underapprox)$.
By the definition $\encodeuasemantics(\queryop, \underapprox) = \encodechoice(\underapprox) \land \encodefullsemantics(\queryop)$, it holds that $\smtmodel \models \encodechoice(\underapprox)$ and $\smtmodel \models \encodefullsemantics(\queryop)$.
Consider the definition of $\encodefullsemantics(\queryop)$ in Figure~\ref{fig:under-semantics}, $\smtmodel \models \bigland_{i = 1, \mydots, n} \queryopencoding_{i, \truevalue} \land  \queryopencoding_{i, \falsevalue}$ where
\[
\queryopencoding_{i, \truevalue} = ( \underapproxvar_i = \truevalue ) \to \big( \cmdExist(\tuple_i, \tuple'_i, \truepredicate) \land \cmdCopy(\tuple_i, \tuple'_i, \attrlist) \big)
\]
and
\[
\queryopencoding_{i, \falsevalue} = ( \underapproxvar_i = \falsevalue ) \to \cmdCondDel(\tuple_i, \tuple'_i, \truepredicate)
\]
Thus, $\smtmodel \models \queryopencoding_{i, \truevalue}$ and $\smtmodel \models \queryopencoding_{i, \falsevalue}$.
Next, let us consider two possible values of $\smtmodel(\underapproxvar_i)$.
    \begin{enumerate}
    \item $\smtmodel(\underapproxvar_i) = \truevalue$.
    Since  $\smtmodel \models \queryopencoding_{i, \truevalue}$, it holds that $\smtmodel \models \cmdExist(\tuple_i, \tuple'_i, \truepredicate) \land \cmdCopy(\tuple_i, \tuple'_i, \attrlist)$. By the definition of $\cmdExist$ and $\cmdCopy$, we know that (a) $\smtmodel(\tuple_i)$ exists in the input $\smtmodel(\queryopinput)$, (b) $\smtmodel(\tuple'_i)$ exists in the result $\smtmodel(\queryopoutput)$, and (c) $\forall a \in \attrlist.~\denot{\smtmodel(\tuple_i).a} = \denot{\smtmodel(\tuple'_i).a}$. By the semantics of $\proj$, $\denot{\proj_\attrlist}_{[\smtmodel(\tuple_i)]} = [\smtmodel(\tuple'_i)]$.
    
    \item $\smtmodel(\underapproxvar_i) = \falsevalue$. Since $\smtmodel \models \queryopencoding_{i, \falsevalue}$, it holds that $\smtmodel \models \cmdCondDel(\tuple_i, \tuple'_i, \truepredicate)$. By the definition of $\cmdCondDel$, we know that (a) $\smtmodel(\tuple_i)$ does not exist in input $\smtmodel(\queryopinput)$, and (b) $\smtmodel(\tuple'_i)$ does not exist in output $\smtmodel(\queryopoutput)$.
    \end{enumerate}
Observe that for each tuple $\tuple_i$ ($1 \leq i \leq n$) in the input $\queryopinput$: if $\smtmodel(\underapproxvar_i) = \truevalue$, then $\denot{\proj_\attrlist}_{[\smtmodel(\tuple_i)]} = [\smtmodel(\tuple'_i)]$; if $\smtmodel(\underapproxvar_i) = \falsevalue$, then $\smtmodel(\tuple_i)$ and $\smtmodel(\tuple'_i)$ do not exist in the input or output. Therefore, by the semantics of $\proj$, $\denot{\proj_\attrlist}_{\smtmodel(\queryopinput)} = \smtmodel(\queryopoutput)$.

\item $\queryop = \filter_\predicate.$ \\
Suppose $\encodeuasemantics(\queryop, \underapprox)$ is satisfiable, let us consider a model $\smtmodel$ of the result formula, i.e., $\smtmodel \models \encodeuasemantics(\queryop, \underapprox)$.
By the definition $\encodeuasemantics(\queryop, \underapprox) = \encodechoice(\underapprox) \land \encodefullsemantics(\queryop)$, it holds that $\smtmodel \models \encodechoice(\underapprox)$ and $\smtmodel \models \encodefullsemantics(\queryop)$.
Consider the definition of $\encodefullsemantics(\queryop)$ in Figure~\ref{fig:under-semantics}, $\smtmodel \models \bigland_{i = 1, \mydots, n} \queryopencoding_{i, \truevalue} \land  \queryopencoding_{i, \falsevalue}$ where
\[
\queryopencoding_{i, \truevalue} = ( \underapproxvar_i = \truevalue ) \to \big(
\cmdExist(\tuple_i, \tuple'_i, \denot{\predicate}_{\tuple'_i}) \land \cmdCopy(\tuple_i, \tuple'_i, \attrlist) \big)
\]
and
\[
\queryopencoding_{i, \falsevalue} = ( \underapproxvar_i = \falsevalue ) \to \cmdCondDel(\tuple_i, \tuple'_i, \denot{\predicate}_{\tuple'_i})
\]
Thus, $\smtmodel \models \queryopencoding_{i, \truevalue}$ and $\smtmodel \models \queryopencoding_{i, \falsevalue}$.
Next, let us consider two possible values of $\smtmodel(\underapproxvar_i)$.
    \begin{enumerate}
    \item $\smtmodel(\underapproxvar_i) = \truevalue$.
    Since  $\smtmodel \models \queryopencoding_{i, \truevalue}$, it holds that $\smtmodel \models \cmdExist(\tuple_i, \tuple'_i, \truepredicate) \land \cmdCopy(\tuple_i, \tuple'_i, \attrlist)$. By the definition of $\cmdExist$ and $\cmdCopy$, we know that (a) $\smtmodel(\tuple_i)$ exists in the input $\smtmodel(\queryopinput)$ and $\denot{\predicate}_{\tuple_i} = \top$, (b) $\smtmodel(\tuple'_i)$ exists in the result $\smtmodel(\queryopoutput)$, and (c) $\forall a \in \attrlist.~\denot{\smtmodel(\tuple_i).a} = \denot{\smtmodel(\tuple'_i).a}$. By the semantics of $\filter$, $\denot{\filter_\predicate}_{[\smtmodel(\tuple_i)]} = [\smtmodel(\tuple'_i)]$.
    
    \item $\smtmodel(\underapproxvar_i) = \falsevalue$. Since $\smtmodel \models \queryopencoding_{i, \falsevalue}$, it holds that $\smtmodel \models \cmdCondDel(\tuple_i, \tuple'_i, \truepredicate)$. By the definition of $\cmdCondDel$, we know that (a) $\smtmodel(\tuple_i)$ does not exist in input $\smtmodel(\queryopinput)$ or $\smtmodel(\tuple_i)$ exists in $\smtmodel(\queryopinput)$ but $\denot{\predicate}_{\tuple_i} = \bot$, and (b) $\smtmodel(\tuple'_i)$ does not exist in output $\smtmodel(\queryopoutput)$.  By the semantics of $\filter$, $\denot{\filter_\predicate}_{[\smtmodel(\tuple_i)]} = []$.
    \end{enumerate}
Observe that for each tuple $\tuple_i$ ($1 \leq i \leq n$) in the input $\queryopinput$: if $\smtmodel(\underapproxvar_i) = \truevalue$, then $\denot{\filter_\predicate}_{[\smtmodel(\tuple_i)]} = [\smtmodel(\tuple'_i)]$; if $\smtmodel(\underapproxvar_i) = \falsevalue$, then $\smtmodel(\tuple'_i)$ do not exist does not exist in the output and $\smtmodel(\tuple_i)$ either does not exist in the input or does not satisfy the filter predicate $\predicate$. Therefore, by the semantics of $\filter$, $\denot{\filter_\predicate}_{\smtmodel(\queryopinput)} = \smtmodel(\queryopoutput)$.

\item $\queryop = \product$. \\
Suppose $\encodeuasemantics(\queryop, \underapprox)$ is satisfiable, let us consider a model $\smtmodel$ of the result formula, i.e., $\smtmodel \models \encodeuasemantics(\queryop, \underapprox)$.
By the definition $\encodeuasemantics(\queryop, \underapprox) = \encodechoice(\underapprox) \land \encodefullsemantics(\queryop)$, it holds that $\smtmodel \models \encodechoice(\underapprox)$ and $\smtmodel \models \encodefullsemantics(\queryop)$.
Consider the definition of $\encodefullsemantics(\queryop)$ in Figure~\ref{fig:under-semantics}, $\smtmodel \models \bigland_{i = 1, \mydots, n_1} \bigland_{ j = 1, \mydots, n_2} \queryopencoding_{i, j, \truevalue} \land \queryopencoding_{i, j, \falsevalue}$ where
\[
\queryopencoding_{i, j, \truevalue} = 
( \underapproxvar_{i,j} = \truevalue ) 
\to 
\cmdExist 
\Big( 
( \tuple_i, \tuple'_j ), \tuple\doubleprime_{i, j}, \top 
\Big) 
\land 
\cmdCopy( \tuple_i, \tuple\doubleprime_{i,j}, \attrlist_1 ) 
\land 
\cmdCopy( \tuple'_j, \tuple\doubleprime_{i,j}, \attrlist_2 )
\]
and
\[
\queryopencoding_{i, j, \falsevalue} = 
( \underapproxvar_{i,j} = \falsevalue )  
\to 
\cmdCondDel 
\Big( 
( \tuple_i, \tuple'_j ),
\tuple\doubleprime_{i,j}, 
\top 
\Big)
\]
Thus, $\smtmodel \models \queryopencoding_{i, j, \truevalue}$ and $\smtmodel \models \queryopencoding_{i, j, \falsevalue}$.
Next, let us consider two possible values of $\smtmodel(\underapproxvar_{i,j})$.
    \begin{enumerate}
    \item $\smtmodel(\underapproxvar_{i,j}) = \truevalue$.
    Since  $\smtmodel \models \queryopencoding_{i, j, \truevalue}$, it holds that $\smtmodel \models \cmdExist \Big( ( \tuple_i, \tuple'_j ), \tuple\doubleprime_{i, j}, \top \Big) \land \cmdCopy( \tuple_i, \tuple\doubleprime_{i,j}, \attrlist_1 ) \land \cmdCopy( \tuple'_j, \tuple\doubleprime_{i,j}, \attrlist_2 )$. By the definition of $\cmdExist$ and $\cmdCopy$, we know that (a) $\smtmodel(\tuple_i)$ and $\smtmodel(\tuple'_j)$ exists in the input $\smtmodel(\queryopinput)$, (b) $\smtmodel(\tuple''_{i,j})$ exists in the result $\smtmodel(\queryopoutput)$, and (c) $\forall a \in \attrlist_1.~\denot{\smtmodel(\tuple''_{i,j}).a} = \denot{\smtmodel(\tuple_i).a}$ and $\forall a \in \attrlist_2.~\denot{\smtmodel(\tuple''_{i,j}).a} = \denot{\smtmodel(\tuple'_j).a}$. By the semantics of $\product$, $\denot{\product}_{[\smtmodel(\tuple_i), \smtmodel(\tuple'_j)]} = [\texttt{merge}(\smtmodel(\tuple_i), \smtmodel(\tuple'_j))]$.
    
    \item $\smtmodel(\underapproxvar_{i,j}) = \falsevalue$. Since $\smtmodel \models \queryopencoding_{i, j, \truevalue}$, it holds that $\smtmodel \models \cmdCondDel \Big( ( \tuple_i, \tuple'_j ), \tuple\doubleprime_{i,j}, \top \Big)$. By the definition of $\cmdCondDel$, we know that (a) $\smtmodel(\tuple_i)$ or $\smtmodel(\tuple'_j)$ does not exist in input $\smtmodel(\queryopinput)$, and (b) $\smtmodel(\tuple''_{i,j})$ does not exist in output $\smtmodel(\queryopoutput)$.
    \end{enumerate}
Observe that for each tuple pair $\tuple_i$ and $\tuple'_j$ ($1 \leq i \leq n_1$, $1 \leq i \leq n_2$) in the input $\queryopinput$: if $\smtmodel(\underapproxvar_{i,j}) = \truevalue$, then $\denot{\product}_{[\smtmodel(\tuple_i), \smtmodel(\tuple'_j)]} = [\texttt{merge}(\smtmodel(\tuple_i), \smtmodel(\tuple'_j))]$; if $\smtmodel(\underapproxvar_{i,j}) = \falsevalue$, then $\smtmodel(\tuple_i)$ or $\smtmodel(\tuple'_j)$ does not exist in the input, and $\smtmodel(\tuple''_{i,j})$ does not exist in the output. Therefore, by the semantics of $\product$, $\denot{\product}_{\smtmodel(\queryopinput)} = \smtmodel(\queryopoutput)$.

\item $\queryop = \ijoin_\predicate$. \\
Suppose $\encodeuasemantics(\queryop, \underapprox)$ is satisfiable, let us consider a model $\smtmodel$ of the result formula, i.e., $\smtmodel \models \encodeuasemantics(\queryop, \underapprox)$.
By the definition $\encodeuasemantics(\queryop, \underapprox) = \encodechoice(\underapprox) \land \encodefullsemantics(\queryop)$, it holds that $\smtmodel \models \encodechoice(\underapprox)$ and $\smtmodel \models \encodefullsemantics(\queryop)$.
Consider the definition of $\encodefullsemantics(\queryop)$ in Figure~\ref{fig:under-semantics}, $\smtmodel \models \bigland_{i = 1, \mydots, n_1} \bigland_{ j = 1, \mydots, n_2} \queryopencoding_{i, j, \truevalue} \land \queryopencoding_{i, j, \falsevalue}$ where
\[
\queryopencoding_{i, j, \truevalue} = 
( \underapproxvar_{i,j} = \truevalue ) 
\to 
\cmdExist 
\Big( 
( \tuple_i, \tuple'_j ), \tuple\doubleprime_{i, j}, \denot{\predicate}_{\tuple_i, \tuple'_j} 
\Big) 
\land 
\cmdCopy( \tuple_i, \tuple\doubleprime_{i,j}, \attrlist_1 ) 
\land 
\cmdCopy( \tuple'_j, \tuple\doubleprime_{i,j}, \attrlist_2 )
\]
and
\[
\queryopencoding_{i, j, \falsevalue} = 
( \underapproxvar_{i,j} = \falsevalue )  
\to 
\cmdCondDel 
\Big( 
( \tuple_i, \tuple'_j ),
\tuple\doubleprime_{i,j}, 
\denot{\predicate}_{\tuple_i, \tuple'_j} 
\Big)
\]
Thus, $\smtmodel \models \queryopencoding_{i, j, \truevalue}$ and $\smtmodel \models \queryopencoding_{i, j, \falsevalue}$.
Next, let us consider two possible values of $\smtmodel(\underapproxvar_{i,j})$.
    \begin{enumerate}
    \item $\smtmodel(\underapproxvar_{i,j}) = \truevalue$.
    Since  $\smtmodel \models \queryopencoding_{i, j, \truevalue}$, it holds that $\smtmodel \models \cmdExist \Big( ( \tuple_i, \tuple'_j ), \tuple\doubleprime_{i, j},\denot{\predicate}_{\tuple_i, \tuple'_j} \Big) \land \cmdCopy( \tuple_i, \tuple\doubleprime_{i,j}, \attrlist_1 ) \land \cmdCopy( \tuple'_j, \tuple\doubleprime_{i,j}, \attrlist_2 )$. By the definition of $\cmdExist$ and $\cmdCopy$, we know that (a) $\smtmodel(\tuple_i)$ and $\smtmodel(\tuple'_j)$ exists in the input $\smtmodel(\queryopinput)$, (b) $\smtmodel(\tuple_i)$ and $\smtmodel(\tuple'_j)$ satisfy the join predicate $\predicate$, (c) $\smtmodel(\tuple''_{i,j})$ exists in the result $\smtmodel(\queryopoutput)$, and (d) $\forall a \in \attrlist_1.~\denot{\smtmodel(\tuple''_{i,j}).a} = \denot{\smtmodel(\tuple_i).a}$ and $\forall a \in \attrlist_2.~\denot{\smtmodel(\tuple''_{i,j}).a} = \denot{\smtmodel(\tuple'_j).a}$. By the semantics of $\ijoin$, $\denot{\ijoin_\predicate}_{[\smtmodel(\tuple_i),  \smtmodel(\tuple'_j)]} = [\texttt{merge}(\smtmodel(\tuple_i), \smtmodel(\tuple'_j))]$.
    
    \item $\smtmodel(\underapproxvar_{i,j}) = \falsevalue$. Since $\smtmodel \models \queryopencoding_{i, j, \truevalue}$, it holds that $\smtmodel \models \cmdCondDel \Big( ( \tuple_i, \tuple'_j ), \tuple\doubleprime_{i,j}, \denot{\predicate}_{\tuple_i, \tuple'_j} \Big)$. By the definition of $\cmdCondDel$, we know that (a) $\smtmodel(\tuple_i)$ or $\smtmodel(\tuple'_j)$ does not exist in input $\smtmodel(\queryopinput)$, or $\smtmodel(\tuple_i)$ and $\smtmodel(\tuple'_j)$ exist but do not satisfy the join predicate $\predicate$, and (b) $\smtmodel(\tuple''_{i,j})$ does not exist in output $\smtmodel(\queryopoutput)$.
    \end{enumerate}
Observe that for each tuple pair $\tuple_i$ and $\tuple'_j$ ($1 \leq i \leq n_1$, $1 \leq i \leq n_2$) in the input $\queryopinput$: if $\smtmodel(\underapproxvar_{i,j}) = \truevalue$, then $\denot{\ijoin_\predicate}_{[\smtmodel(\tuple_i), \smtmodel(\tuple'_j)]} = [\texttt{merge}(\smtmodel(\tuple_i), \smtmodel(\tuple'_j))]$; if $\smtmodel(\underapproxvar_{i,j}) = \falsevalue$, then $\smtmodel(\tuple_i)$ or $\smtmodel(\tuple'_j)$ does not exist in input $\smtmodel(\queryopinput)$, or $\smtmodel(\tuple_i)$ and $\smtmodel(\tuple'_j)$ exist but do not satisfy the join predicate $\predicate$, and $\smtmodel(\tuple''_{i,j})$ does not exist in the output. Therefore, by the semantics of $\ijoin$, $\denot{\ijoin_\predicate}_{\smtmodel(\queryopinput)} = \smtmodel(\queryopoutput)$.

\item $\queryop = \ljoin_\predicate$. \\
Suppose $\encodeuasemantics(\queryop, \underapprox)$ is satisfiable, let us consider a model $\smtmodel$ of the result formula, i.e., $\smtmodel \models \encodeuasemantics(\queryop, \underapprox)$.
By the definition $\encodeuasemantics(\queryop, \underapprox) = \encodechoice(\underapprox) \land \encodefullsemantics(\queryop)$, it holds that $\smtmodel \models \encodechoice(\underapprox)$ and $\smtmodel \models \encodefullsemantics(\queryop)$.
Consider the definition of $\encodefullsemantics(\queryop)$ in Figure~\ref{fig:under-semantics}, $\smtmodel \models \bigland_{i = 1, \mydots, n_1} \bigland_{ j = 1, \mydots, n_2} \queryopencoding_{i, j, \truevalue} \land \queryopencoding_{i, j, \falsevalue}$ $\land \big( \bigland_{i = 1, \mydots, n_1}  \queryopencoding_{i, \nullv} \big)$ where
\[
\queryopencoding_{i, j, \truevalue} = 
( \underapproxvar_{i,j} = \truevalue ) 
\to 
\cmdExist 
\Big( 
( \tuple_i, \tuple'_j ), \tuple\doubleprime_{i, j}, \denot{\predicate}_{\tuple_i, \tuple'_j} 
\Big) 
\land 
\cmdCopy( \tuple_i, \tuple\doubleprime_{i,j}, \attrlist_1 ) 
\land 
\cmdCopy( \tuple'_j, \tuple\doubleprime_{i,j}, \attrlist_2 )
\]
and
\[
\queryopencoding_{i, j, \falsevalue} = 
( \underapproxvar_{i,j} = \falsevalue )  
\to 
\cmdCondDel 
\Big( 
( \tuple_i, \tuple'_j ),
\tuple\doubleprime_{i,j}, 
\denot{\predicate}_{\tuple_i, \tuple'_j} 
\Big)
\]
and
\[
\queryopencoding_{i, \nullv} = 
\cmdNullTuples(\tuple''_{i, n_2+1}, \tuple_i, \attrlist_1, \attrlist_2, \bigcup_{j \in [1,n_2]}\underapproxvar_{i,j} )
\]
Thus, $\smtmodel \models \queryopencoding_{i, j, \truevalue}$ and $\smtmodel \models \queryopencoding_{i, j, \falsevalue}$.
Next, let us first consider the two possible values of $\smtmodel(\underapproxvar_{i,j})$.
    \begin{enumerate}
    \item $\smtmodel(\underapproxvar_{i,j}) = \truevalue$.
    Since  $\smtmodel \models \queryopencoding_{i, j, \truevalue}$, it holds that $\smtmodel \models \cmdExist \Big( ( \tuple_i, \tuple'_j ), \tuple\doubleprime_{i, j},\denot{\predicate}_{\tuple_i, \tuple'_j} \Big) \land \cmdCopy( \tuple_i, \tuple\doubleprime_{i,j}, \attrlist_1 ) \land \cmdCopy( \tuple'_j, \tuple\doubleprime_{i,j}, \attrlist_2 )$. By the definition of $\cmdExist$ and $\cmdCopy$, we know that (a) $\smtmodel(\tuple_i)$ and $\smtmodel(\tuple'_j)$ exists in the input $\smtmodel(\queryopinput)$, (b) $\smtmodel(\tuple_i)$ and $\smtmodel(\tuple'_j)$ satisfy the join predicate $\predicate$, (c) $\smtmodel(\tuple''_{i,j})$ exists in the result $\smtmodel(\queryopoutput)$, and (d) $\forall a \in \attrlist_1.~\denot{\smtmodel(\tuple''_{i,j}).a} = \denot{\smtmodel(\tuple_i).a}$ and $\forall a \in \attrlist_2.~\denot{\smtmodel(\tuple''_{i,j}).a} = \denot{\smtmodel(\tuple'_j).a}$.
    
    \item $\smtmodel(\underapproxvar_{i,j}) = \falsevalue$. Since $\smtmodel \models \queryopencoding_{i, j, \truevalue}$, it holds that $\smtmodel \models \cmdCondDel \Big( ( \tuple_i, \tuple'_j ), \tuple\doubleprime_{i,j}, \denot{\predicate}_{\tuple_i, \tuple'_j} \Big)$. By the definition of $\cmdCondDel$, we know that (a) $\smtmodel(\tuple_i)$ or $\smtmodel(\tuple'_j)$ does not exist in input $\smtmodel(\queryopinput)$, or $\smtmodel(\tuple_i)$ and $\smtmodel(\tuple'_j)$ exist but do not satisfy the join predicate $\predicate$, and (b) $\smtmodel(\tuple''_{i,j})$ does not exist in output $\smtmodel(\queryopoutput)$.
    \end{enumerate}
As $\smtmodel \models \queryopencoding_{i, \nullv}$, it also holds that $\smtmodel \models \cmdNullTuples(\tuple''_{i, n_2+1}, \tuple_i, \attrlist_1, \attrlist_2, \bigcup_{j \in [1,n_2]}\underapproxvar_{i,j} )$.  Then, by the definition of $\cmdNullTuples$, if $\bigland_{j=1, \mydots, n_2} \smtmodel(\underapproxvar_{i,j}) = \falsevalue$, we know that (a) all tuples $\smtmodel(\tuple''_{i, j})$ where $1 \leq j \leq n_2$ do not exist in the output $\smtmodel(\queryopoutput)$, and therefore (b) $\smtmodel(\tuple''_{i, n_2 + 1})$ exists in $\smtmodel(\queryopoutput)$, and (c) $\forall a \in \attrlist_1.~\denot{\smtmodel(\tuple''_{i,j}).a} = \denot{\smtmodel(\tuple_i).a}$ and $\forall a \in \attrlist_2.~\denot{\smtmodel(\tuple''_{i,j}).a} = \nullv$.  Otherwise, if $\biglor_{j=1, \mydots, n_2} \smtmodel(\underapproxvar_{i,j}) = \truevalue$, we have that (a) at least one tuple $\smtmodel(\tuple''_{i, j})$ where $1 \leq j \leq n_2$ exists in the output $\smtmodel(\queryopoutput)$, and therefore (b) $\smtmodel(\tuple''_{i, n_2 + 1})$ does not exist in $\smtmodel(\queryopoutput)$.

Therefore, by the semantics of $\ljoin$, $\denot{\ljoin_\predicate}_{\smtmodel(\queryopinput)} = \smtmodel(\queryopoutput)$.

\item $\queryop = \rjoin_\predicate$. \\
Suppose $\encodeuasemantics(\queryop, \underapprox)$ is satisfiable, let us consider a model $\smtmodel$ of the result formula, i.e., $\smtmodel \models \encodeuasemantics(\queryop, \underapprox)$.
By the definition $\encodeuasemantics(\queryop, \underapprox) = \encodechoice(\underapprox) \land \encodefullsemantics(\queryop)$, it holds that $\smtmodel \models \encodechoice(\underapprox)$ and $\smtmodel \models \encodefullsemantics(\queryop)$.
Consider the definition of $\encodefullsemantics(\queryop)$ in Figure~\ref{fig:under-semantics}, $\smtmodel \models \bigland_{i = 1, \mydots, n_1} \bigland_{ j = 1, \mydots, n_2} \queryopencoding_{i, j, \truevalue} \land \queryopencoding_{i, j, \falsevalue}$ $\land \big( \bigland_{j = 1, \mydots, n_2}  \queryopencoding_{j, \nullv} \big)$ where
\[
\queryopencoding_{i, j, \truevalue} = 
( \underapproxvar_{i,j} = \truevalue ) 
\to 
\cmdExist 
\Big( 
( \tuple_i, \tuple'_j ), \tuple\doubleprime_{i, j}, \denot{\predicate}_{\tuple_i, \tuple'_j} 
\Big) 
\land 
\cmdCopy( \tuple_i, \tuple\doubleprime_{i,j}, \attrlist_1 ) 
\land 
\cmdCopy( \tuple'_j, \tuple\doubleprime_{i,j}, \attrlist_2 )
\]
and
\[
\queryopencoding_{i, j, \falsevalue} = 
( \underapproxvar_{i,j} = \falsevalue )  
\to 
\cmdCondDel 
\Big( 
( \tuple_i, \tuple'_j ),
\tuple\doubleprime_{i,j}, 
\denot{\predicate}_{\tuple_i, \tuple'_j} 
\Big)
\]
and
\[
\queryopencoding_{j, \nullv} = 
\cmdNullTuples(\tuple''_{n_1+1, j}, \tuple'_j, \attrlist_2, \attrlist_1, \bigcup_{i \in [1,n_1]}\underapproxvar_{i,j} )
\]
Thus, $\smtmodel \models \queryopencoding_{i, j, \truevalue}$ and $\smtmodel \models \queryopencoding_{i, j, \falsevalue}$.
Next, let us first consider the two possible values of $\smtmodel(\underapproxvar_{i,j})$.
    \begin{enumerate}
    \item $\smtmodel(\underapproxvar_{i,j}) = \truevalue$.
    Since  $\smtmodel \models \queryopencoding_{i, j, \truevalue}$, it holds that $\smtmodel \models \cmdExist \Big( ( \tuple_i, \tuple'_j ), \tuple\doubleprime_{i, j},\denot{\predicate}_{\tuple_i, \tuple'_j} \Big) \land \cmdCopy( \tuple_i, \tuple\doubleprime_{i,j}, \attrlist_1 ) \land \cmdCopy( \tuple'_j, \tuple\doubleprime_{i,j}, \attrlist_2 )$. By the definition of $\cmdExist$ and $\cmdCopy$, we know that (a) $\smtmodel(\tuple_i)$ and $\smtmodel(\tuple'_j)$ exists in the input $\smtmodel(\queryopinput)$, (b) $\smtmodel(\tuple_i)$ and $\smtmodel(\tuple'_j)$ satisfy the join predicate $\predicate$, (c) $\smtmodel(\tuple''_{i,j})$ exists in the result $\smtmodel(\queryopoutput)$, and (d) $\forall a \in \attrlist_1.~\denot{\smtmodel(\tuple''_{i,j}).a} = \denot{\smtmodel(\tuple_i).a}$ and $\forall a \in \attrlist_2.~\denot{\smtmodel(\tuple''_{i,j}).a} = \denot{\smtmodel(\tuple'_j).a}$.
    
    \item $\smtmodel(\underapproxvar_{i,j}) = \falsevalue$. Since $\smtmodel \models \queryopencoding_{i, j, \truevalue}$, it holds that $\smtmodel \models \cmdCondDel \Big( ( \tuple_i, \tuple'_j ), \tuple\doubleprime_{i,j}, \denot{\predicate}_{\tuple_i, \tuple'_j} \Big)$. By the definition of $\cmdCondDel$, we know that (a) $\smtmodel(\tuple_i)$ or $\smtmodel(\tuple'_j)$ does not exist in input $\smtmodel(\queryopinput)$, or $\smtmodel(\tuple_i)$ and $\smtmodel(\tuple'_j)$ exist but do not satisfy the join predicate $\predicate$, and (b) $\smtmodel(\tuple''_{i,j})$ does not exist in output $\smtmodel(\queryopoutput)$.
    \end{enumerate}
As $\smtmodel \models \queryopencoding_{j, \nullv}$, it also holds that $\smtmodel \models \cmdNullTuples(\tuple''_{n_1+1, j}, \tuple'_j, \attrlist_2, \attrlist_1, \bigcup_{i \in [1,n_1]}\underapproxvar_{i,j} )$.  Then, by the definition of $\cmdNullTuples$, if $\bigland_{i=1, \mydots, n_1} \smtmodel(\underapproxvar_{i,j}) = \falsevalue$, we know that (a) all tuples $\smtmodel(\tuple''_{i, j})$ where $1 \leq i \leq n_1$ do not exist in the output $\smtmodel(\queryopoutput)$, and therefore (b) $\smtmodel(\tuple''_{n_1+1, j})$ exists in $\smtmodel(\queryopoutput)$, and (c) $\forall a \in \attrlist_1.~\denot{\smtmodel(\tuple''_{i,j}).a} = \nullv$ and $\forall a \in \attrlist_2.~\denot{\smtmodel(\tuple''_{i,j}).a} = \denot{\smtmodel(\tuple'_j).a}$.  Otherwise, if $\biglor_{i=1, \mydots, n_1} \smtmodel(\underapproxvar_{i,j}) = \truevalue$, we have that (a) at least one tuple $\smtmodel(\tuple''_{i, j})$ where $1 \leq i \leq n_1$ exists in the output $\smtmodel(\queryopoutput)$, and therefore (b) $\smtmodel(\tuple''_{n_1+1, j})$ does not exist in $\smtmodel(\queryopoutput)$.

Therefore, by the semantics of $\rjoin$, $\denot{\rjoin_\predicate}_{\smtmodel(\queryopinput)} = \smtmodel(\queryopoutput)$.

\item $\queryop = \fjoin_\predicate$. \\
Suppose $\encodeuasemantics(\queryop, \underapprox)$ is satisfiable, let us consider a model $\smtmodel$ of the result formula, i.e., $\smtmodel \models \encodeuasemantics(\queryop, \underapprox)$.
By the definition $\encodeuasemantics(\queryop, \underapprox) = \encodechoice(\underapprox) \land \encodefullsemantics(\queryop)$, it holds that $\smtmodel \models \encodechoice(\underapprox)$ and $\smtmodel \models \encodefullsemantics(\queryop)$.
Consider the definition of $\encodefullsemantics(\queryop)$ in Figure~\ref{fig:under-semantics}, $\smtmodel \models \bigland_{i = 1, \mydots, n_1} \bigland_{ j = 1, \mydots, n_2} \queryopencoding_{i, j, \truevalue} \land \queryopencoding_{i, j, \falsevalue}$ $\land \big( \bigland_{i = 1, \mydots, n_1}  \queryopencoding_{i, \nullv} \big) \land \big(\bigland_{j = 1, \mydots, n_2} \queryopencoding'_{j, \nullv} \big)$ where
\[
\queryopencoding_{i, j, \truevalue} = 
( \underapproxvar_{i,j} = \truevalue ) 
\to 
\cmdExist 
\Big( 
( \tuple_i, \tuple'_j ), \tuple\doubleprime_{i, j}, \denot{\predicate}_{\tuple_i, \tuple'_j} 
\Big) 
\land 
\cmdCopy( \tuple_i, \tuple\doubleprime_{i,j}, \attrlist_1 ) 
\land 
\cmdCopy( \tuple'_j, \tuple\doubleprime_{i,j}, \attrlist_2 )
\]
and
\[
\queryopencoding_{i, j, \falsevalue} = 
( \underapproxvar_{i,j} = \falsevalue )  
\to 
\cmdCondDel 
\Big( 
( \tuple_i, \tuple'_j ),
\tuple\doubleprime_{i,j}, 
\denot{\predicate}_{\tuple_i, \tuple'_j} 
\Big)
\]
and
\begin{align*}
\queryopencoding_{i, \nullv} = 
\cmdNullTuples(\tuple''_{i, n_2+1}, \tuple_i, \attrlist_1, \attrlist_2, \bigcup_{j \in [1,n_2]}\underapproxvar_{i,j} ) \\
\queryopencoding'_{j, \nullv} = 
\cmdNullTuples(\tuple''_{n_1+1, j}, \tuple'_j, \attrlist_2, \attrlist_1, \bigcup_{i \in [1,n_1]}\underapproxvar_{i,j} )
\end{align*}
Thus, $\smtmodel \models \queryopencoding_{i, j, \truevalue}$ and $\smtmodel \models \queryopencoding_{i, j, \falsevalue}$.
Next, let us first consider the two possible values of $\smtmodel(\underapproxvar_{i,j})$.
    \begin{enumerate}
    \item $\smtmodel(\underapproxvar_{i,j}) = \truevalue$.
    Since  $\smtmodel \models \queryopencoding_{i, j, \truevalue}$, it holds that $\smtmodel \models \cmdExist \Big( ( \tuple_i, \tuple'_j ), \tuple\doubleprime_{i, j},\denot{\predicate}_{\tuple_i, \tuple'_j} \Big) \land \cmdCopy( \tuple_i, \tuple\doubleprime_{i,j}, \attrlist_1 ) \land \cmdCopy( \tuple'_j, \tuple\doubleprime_{i,j}, \attrlist_2 )$. By the definition of $\cmdExist$ and $\cmdCopy$, we know that (a) $\smtmodel(\tuple_i)$ and $\smtmodel(\tuple'_j)$ exists in the input $\smtmodel(\queryopinput)$, (b) $\smtmodel(\tuple_i)$ and $\smtmodel(\tuple'_j)$ satisfy the join predicate $\predicate$, (c) $\smtmodel(\tuple''_{i,j})$ exists in the result $\smtmodel(\queryopoutput)$, and (d) $\forall a \in \attrlist_1.~\denot{\smtmodel(\tuple''_{i,j}).a} = \denot{\smtmodel(\tuple_i).a}$ and $\forall a \in \attrlist_2.~\denot{\smtmodel(\tuple''_{i,j}).a} = \denot{\smtmodel(\tuple'_j).a}$.
    
    \item $\smtmodel(\underapproxvar_{i,j}) = \falsevalue$. Since $\smtmodel \models \queryopencoding_{i, j, \truevalue}$, it holds that $\smtmodel \models \cmdCondDel \Big( ( \tuple_i, \tuple'_j ), \tuple\doubleprime_{i,j}, \denot{\predicate}_{\tuple_i, \tuple'_j} \Big)$. By the definition of $\cmdCondDel$, we know that (a) $\smtmodel(\tuple_i)$ or $\smtmodel(\tuple'_j)$ does not exist in input $\smtmodel(\queryopinput)$, or $\smtmodel(\tuple_i)$ and $\smtmodel(\tuple'_j)$ exist but do not satisfy the join predicate $\predicate$, and (b) $\smtmodel(\tuple''_{i,j})$ does not exist in output $\smtmodel(\queryopoutput)$.
    \end{enumerate}
As $\smtmodel \models \queryopencoding_{i, \nullv}$, it also holds that $\smtmodel \models \cmdNullTuples(\tuple''_{i, n_2+1}, \tuple_i, \attrlist_1, \attrlist_2, \bigcup_{j \in [1,n_2]}\underapproxvar_{i,j} )$.  Then, by the definition of $\cmdNullTuples$, if $\bigland_{j=1, \mydots, n_2} \smtmodel(\underapproxvar_{i,j}) = \falsevalue$, we know that (a) all tuples $\smtmodel(\tuple''_{i, j})$ where $1 \leq j \leq n_2$ do not exist in the output $\smtmodel(\queryopoutput)$, and therefore (b) $\smtmodel(\tuple''_{i, n_2 + 1})$ exists in $\smtmodel(\queryopoutput)$, and (c) $\forall a \in \attrlist_1.~\denot{\smtmodel(\tuple''_{i,j}).a} = \denot{\smtmodel(\tuple_i).a}$ and $\forall a \in \attrlist_2.~\denot{\smtmodel(\tuple''_{i,j}).a} = \nullv$.  Otherwise, if $\biglor_{j=1, \mydots, n_2} \smtmodel(\underapproxvar_{i,j}) = \truevalue$, we have that (a) at least one tuple $\smtmodel(\tuple''_{i, j})$ where $1 \leq j \leq n_2$ exists in the output $\smtmodel(\queryopoutput)$, and therefore (b) $\smtmodel(\tuple''_{i, n_2 + 1})$ does not exist in $\smtmodel(\queryopoutput)$.

Also since $\smtmodel \models \queryopencoding'_{j, \nullv}$, it also holds that $\smtmodel \models \cmdNullTuples(\tuple''_{n_1+1, j}, \tuple'_j, \attrlist_2, \attrlist_1, \bigcup_{i \in [1,n_1]}\underapproxvar_{i,j} )$.  Then, by the definition of $\cmdNullTuples$, if $\bigland_{i=1, \mydots, n_1} \smtmodel(\underapproxvar_{i,j}) = \falsevalue$, we know that (a) all tuples $\smtmodel(\tuple''_{i, j})$ where $1 \leq i \leq n_1$ do not exist in the output $\smtmodel(\queryopoutput)$, and therefore (b) $\smtmodel(\tuple''_{n_1+1, j})$ exists in $\smtmodel(\queryopoutput)$, and (c) $\forall a \in \attrlist_1.~\denot{\smtmodel(\tuple''_{i,j}).a} = \nullv$ and $\forall a \in \attrlist_2.~\denot{\smtmodel(\tuple''_{i,j}).a} = \denot{\smtmodel(\tuple'_j).a}$.  Otherwise, if $\biglor_{i=1, \mydots, n_1} \smtmodel(\underapproxvar_{i,j}) = \truevalue$, we have that (a) at least one tuple $\smtmodel(\tuple''_{i, j})$ where $1 \leq i \leq n_1$ exists in the output $\smtmodel(\queryopoutput)$, and therefore (b) $\smtmodel(\tuple''_{n_1+1, j})$ does not exist in $\smtmodel(\queryopoutput)$.

Therefore, by the semantics of $\fjoin$, $\denot{\fjoin_\predicate}_{\smtmodel(\queryopinput)} = \smtmodel(\queryopoutput)$.

\item $\queryop = \groupby_{\vec{\expression}, \attrlist, \predicate}$. \\
Suppose $\encodeuasemantics(\queryop, \underapprox)$ is satisfiable, let us consider a model $\smtmodel$ of the result formula, i.e., $\smtmodel \models \encodeuasemantics(\queryop, \underapprox)$.
By the definition $\encodeuasemantics(\queryop, \underapprox) = \encodechoice(\underapprox) \land \encodefullsemantics(\queryop)$, it holds that $\smtmodel \models \encodechoice(\underapprox)$ and $\smtmodel \models \encodefullsemantics(\queryop)$.
Consider the definition of $\encodefullsemantics(\queryop)$ in Figure~\ref{fig:under-semantics}, $\smtmodel \models \bigland_{i = 1, \mydots, n} \queryopencoding_{i, \truevalue} \land \queryopencoding_{i, \truevalue_{\predicate}} \land \queryopencoding_{i, \falsevalue} $ where
{
\scriptsize
\[
\queryopencoding_{i, \falsevalue} = 
( \underapproxvar_i = \falsevalue ) 
\to 
\Big( 
\big( 
\del( \tuple_i ) 
\lor 
\Big( 
\neg \del( \tuple_i ) 
\land 
\biglor_{j=1, \mydots, i-1} 
\big( 
\neg \del(\tuple_j) 
\land 
\bigland_{a \in \vec{\expression}} \denot{\tuple_i.a} = \denot{\tuple_j.a} 
\land 
\group(\tuple_i) = j
\big) 
\Big) 
\big)
\land 
\del( \tuple'_i )
\Big)
\]
}
and
\[
\queryopencoding_{i, \neg \falsevalue} = 
\neg \del(\tuple_i) 
\land 
\neg \biglor_{j=1, \mydots, i-1} 
\Big( 
\neg \del(\tuple_j) \land \bigland_{a \in \vec{\expression}}\denot{\tuple_i.a} = \denot{\tuple_j.a}
\Big) 
\land 
\group(\tuple_i) = i 
\]
and
\[
\queryopencoding_{i, \truevalue} = 
( \underapproxvar_i = \truevalue ) 
\to 
\Big( 
\queryopencoding_{i, \neg \falsevalue}
\land 
\neg \denot{\predicate}_{\group^{-1}(i)} 
\land 
\del(\tuple'_i)
\Big)
\]

and
\[
\queryopencoding_{i, \truevalue_{\predicate}} = 
( \underapproxvar_i = \truevalue_{\predicate} ) 
\to 
\Big( 
\queryopencoding_{i, \neg \falsevalue}
\land 
\denot{\phi}_{\group^{-1}(i)} \land \neg \del(\tuple'_i) \land \cmdCopy(\group^{-1}(i), \tuple'_i, \attrlist)
\Big)
\]
Thus, $\smtmodel \models \queryopencoding_{i, \truevalue}$, $\smtmodel \models \queryopencoding_{i, \truevalue_{\predicate}}$, and $\smtmodel \models \queryopencoding_{i, \falsevalue}$.
Next, let us first consider the three possible values of $\smtmodel(\underapproxvar_i)$.
    \begin{enumerate}
    \item $\smtmodel(\underapproxvar_i) = \falsevalue$.
    Since  $\smtmodel \models \queryopencoding_{i, \falsevalue}$, it holds that $\smtmodel \models \Big( \big( \del( \tuple_i ) \lor \Big( \neg \del( \tuple_i ) \land \biglor_{j=1, \mydots, i-1} \big( \neg \del(\tuple_j) \land \bigland_{a \in \vec{\expression}} \denot{\tuple_i.a} = \denot{\tuple_j.a} \land \group(\tuple_i) = j \big) \Big) \big) \land \del( \tuple'_i ) \Big)$.  Thus we have in this case (a) $\smtmodel(\tuple_i)$ does not exist in the input $\smtmodel(\queryopinput)$ or $\smtmodel(\tuple_i)$ exists but $\forall a \in \vec{E}.~\smtmodel(t_i.a) = \smtmodel(t_j.a)$ where $1 \leq j \le i - 1$, and (b) $\smtmodel(\tuple'_{i})$ does not exist in the result $\smtmodel(\queryopoutput)$.
    
    \item $\smtmodel(\underapproxvar_i) = \truevalue$.
    Since  $\smtmodel \models \queryopencoding_{i, \falsevalue}$, it holds that $\smtmodel \models \Big( \queryopencoding_{i, \neg \falsevalue} \land \neg \denot{\predicate}_{\group^{-1}(i)}  \land \del(\tuple'_i) \Big)$.  We know that (a) $\smtmodel(\tuple_i)$ exists in the input $\smtmodel(\queryopinput)$, (b) $\forall 1 \leq j \le i - 1.~\exists a \in \vec{E}.~\smtmodel(t_i.a) \neq \smtmodel(t_j.a)$, (c) $\denot{\phi}_{\group^{-1}(i)} = \bot$, and (d) $\smtmodel(\tuple'_{i})$ does not exist in the result $\smtmodel(\queryopoutput)$.

    \item $\smtmodel(\underapproxvar_i) = \truevalue_\predicate$.
    Since  $\smtmodel \models \queryopencoding_{i, \falsevalue}$, it holds that $\smtmodel \models \Big( \queryopencoding_{i, \neg \falsevalue} \land \neg \denot{\predicate}_{\group^{-1}(i)}  \land \del(\tuple'_i) \Big)$.  We know that (a) $\smtmodel(\tuple_i)$ exists in the input $\smtmodel(\queryopinput)$, (b) $\forall 1 \leq j \le i - 1.~\exists a \in \vec{E}.~\smtmodel(t_i.a) \neq \smtmodel(t_j.a)$, (c) $\denot{\phi}_{\group^{-1}(i)} = \top$, and (d) $\smtmodel(\tuple'_{i})$ exists in the result $\smtmodel(\queryopoutput)$.
    \end{enumerate}
Therefore, by the semantics of $\groupby$, $\denot{\groupby_{\vec{\expression}, \attrlist, \predicate}}_{\smtmodel(\queryopinput)} = \smtmodel(\queryopoutput)$.

\item $\queryop = \distinct.$ \\
Suppose $\encodeuasemantics(\queryop, \underapprox)$ is satisfiable, let us consider a model $\smtmodel$ of the result formula, i.e., $\smtmodel \models \encodeuasemantics(\queryop, \underapprox)$.
By the definition $\encodeuasemantics(\queryop, \underapprox) = \encodechoice(\underapprox) \land \encodefullsemantics(\queryop)$, it holds that $\smtmodel \models \encodechoice(\underapprox)$ and $\smtmodel \models \encodefullsemantics(\queryop)$.
Consider the definition of $\encodefullsemantics(\queryop)$ in Figure~\ref{fig:under-semantics}, $\smtmodel \models \bigland_{i = 1, \mydots, n} \queryopencoding_{i, \truevalue} \land  \queryopencoding_{i, \falsevalue}$ where
\[
\queryopencoding_{i, \truevalue} = 
( \underapproxvar_i = \truevalue ) 
\to
\big(
\cmdExist(\tuple_i, \tuple'_i, \land_{j=1}^{i-1} t_i \neq t_j) \land \cmdCopy(\tuple_i, \tuple'_i, \attrlist)
\big)
\]
and
\[
\queryopencoding_{i, \falsevalue} = 
( \underapproxvar_i = \falsevalue ) 
\to 
\cmdCondDel(\tuple_i, \tuple'_i, \land_{j=1}^{i-1} t_i \neq t_j)
\]
Thus, $\smtmodel \models \queryopencoding_{i, \truevalue}$ and $\smtmodel \models \queryopencoding_{i, \falsevalue}$.
Next, let us consider two possible values of $\smtmodel(\underapproxvar_i)$.
    \begin{enumerate}
    \item $\smtmodel(\underapproxvar_i) = \truevalue$.
    Since  $\smtmodel \models \queryopencoding_{i, \truevalue}$, it holds that $\smtmodel \models \cmdExist(\tuple_i, \tuple'_i, \land_{j=1}^{i-1} t_i \neq t_j) \land \cmdCopy(\tuple_i, \tuple'_i, \attrlist)$. By the definition of $\cmdExist$ and $\cmdCopy$, we know that (a) $\smtmodel(\tuple_i)$ exists in the input $\smtmodel(\queryopinput)$ and $\land_{j=1}^{i-1} t_i \neq t_j$, (b) $\smtmodel(\tuple'_i)$ exists in the result $\smtmodel(\queryopoutput)$, and (c) $\forall a \in \attrlist.~\denot{\smtmodel(\tuple_i).a} = \denot{\smtmodel(\tuple'_i).a}$.
    
    \item $\smtmodel(\underapproxvar_i) = \falsevalue$. Since $\smtmodel \models \queryopencoding_{i, \falsevalue}$, it holds that $\smtmodel \models \cmdCondDel(\tuple_i, \tuple'_i, \land_{j=1}^{i-1} t_i \neq t_j)$. By the definition of $\cmdCondDel$, we know that (a) $\smtmodel(\tuple_i)$ does not exist in input $\smtmodel(\queryopinput)$ or $\smtmodel(\tuple_i)$ exists in $\smtmodel(\queryopinput)$ but $\lor_{j=1}^{i-1} t_i = t_j$, and (b) $\smtmodel(\tuple'_i)$ does not exist in output $\smtmodel(\queryopoutput)$.
    \end{enumerate}
Therefore, by the semantics of $\distinct$, $\denot{\distinct}_{\smtmodel(\queryopinput)} = \smtmodel(\queryopoutput)$.

\item $\queryop = \unionall.$ \\
Suppose $\encodeuasemantics(\queryop, \underapprox)$ is satisfiable, let us consider a model $\smtmodel$ of the result formula, i.e., $\smtmodel \models \encodeuasemantics(\queryop, \underapprox)$.
By the definition $\encodeuasemantics(\queryop, \underapprox) = \encodechoice(\underapprox) \land \encodefullsemantics(\queryop)$, it holds that $\smtmodel \models \encodechoice(\underapprox)$ and $\smtmodel \models \encodefullsemantics(\queryop)$.
Consider the definition of $\encodefullsemantics(\queryop)$ in Figure~\ref{fig:under-semantics}, $\smtmodel \models \big( \bigland_{i = 1, \mydots, n} \queryopencoding_{i, \truevalue} \land \queryopencoding_{i, \falsevalue} \big) \land \big( \bigland_{i = n_1+1, \mydots, n_1 + n_2}$  $\queryopencoding'_{i, \truevalue} \land \queryopencoding'_{i, \falsevalue} \big)$ where
\begin{align*}
&    \queryopencoding_{i, \truevalue} = ( \underapproxvar_i = \truevalue ) \to
\big(
\cmdCopy(\tuple_{i}, \tuple''_i, \attrlist_1) \land \cmdExist(\tuple_i, \tuple''_i, \top)
\big) \\
& \queryopencoding_{i, \falsevalue} = 
( \underapproxvar_i = \falsevalue ) 
\to 
\cmdCondDel(\tuple_i, \tuple''_i, \top)
\end{align*}
and
\begin{align*}
& \queryopencoding'_{i, \truevalue} = 
( \underapproxvar_i = \truevalue ) 
\to
\big(
(\cmdCopy(\tuple'_{i-{n_1}}, \tuple''_i, \attrlist_2) \land \cmdExist(\tuple'_{i-{n_1}}, \tuple''_i, \top))
\big) \\
& \queryopencoding'_{i, \falsevalue} = 
( \underapproxvar_i = \falsevalue ) 
\to 
\cmdCondDel(\tuple'_{i-{n_1}}, \tuple''_i, \top)
\end{align*}
Thus, $\smtmodel \models \queryopencoding_{i, \truevalue}$, $\smtmodel \models \queryopencoding_{i, \falsevalue}$, $\smtmodel \models \queryopencoding'_{i, \truevalue}$, and $\smtmodel \models \queryopencoding'_{i, \falsevalue}$.
Observe that $\queryopencoding$ and $\queryopencoding'$ are symmetric.  
Next, let us consider two possible values of $\smtmodel(\underapproxvar_i)$.
    \begin{enumerate}
    \item $\smtmodel(\underapproxvar_i) = \truevalue$.
    Since  $\smtmodel \models \queryopencoding_{i, \truevalue}$, it holds that $\smtmodel \models \cmdCopy(\tuple_{i}, \tuple''_i, \attrlist_1) \land \cmdExist(\tuple_i, \tuple''_i, \top)$. By the definition of $\cmdExist$ and $\cmdCopy$, we know that (a) $\smtmodel(\tuple_i)$ exists in the input $\smtmodel(\queryopinput)$, (b) $\smtmodel(\tuple'_i)$ exists in the result $\smtmodel(\queryopoutput)$, and (c) $\forall a \in \attrlist_1.~\denot{\smtmodel(\tuple_i).a} = \denot{\smtmodel(\tuple''_i).a}$.
    
    \item $\smtmodel(\underapproxvar_i) = \falsevalue$. Since $\smtmodel \models \queryopencoding_{i, \falsevalue}$, it holds that $\smtmodel \models \cmdCondDel(\tuple_i, \tuple''_i, \top)$. By the definition of $\cmdCondDel$, we know that (a) $\smtmodel(\tuple_i)$ does not exist in input $\smtmodel(\queryopinput)$, and (b) $\smtmodel(\tuple'_i)$ does not exist in output $\smtmodel(\queryopoutput)$.
    \end{enumerate}
Therefore, by the semantics of $\unionall$, $\denot{\unionall}_{\smtmodel(\queryopinput)} = \smtmodel(\queryopoutput)$.

\item $\queryop = \orderby_\expression.$ \\
Suppose $\encodeuasemantics(\queryop, \underapprox)$ is satisfiable, let us consider a model $\smtmodel$ of the result formula, i.e., $\smtmodel \models \encodeuasemantics(\queryop, \underapprox)$.
By the definition $\encodeuasemantics(\queryop, \underapprox) = \encodechoice(\underapprox) \land \encodefullsemantics(\queryop)$, it holds that $\smtmodel \models \encodechoice(\underapprox)$ and $\smtmodel \models \encodefullsemantics(\queryop)$.
Consider the definition of $\encodefullsemantics(\queryop)$ in Figure~\ref{fig:under-semantics}, $\smtmodel \models \big(
\bigland_{i = 1, \mydots, n}
\bigland_{k = 1, \mydots, n}
\queryopencoding_{i, \falsevalue}
\land 
\queryopencoding_{i, \truevalue_k} 
\big)
\land
\big(
\bigland_{i = \sum_{i=1}^n\indicator(\neg\del(\tuple_i)) + 1, \mydots, n}
\del(\tuple_i)
\big)$ where
\[
\queryopencoding_{i, \falsevalue} = 
( \underapproxvar_i = \falsevalue  ) 
\to 
\del(\tuple_i) 
\]
and
\[
\queryopencoding_{i, \truevalue_k} 
= 
( \underapproxvar_i = \truevalue_k ) 
\to 
\big(
\sum_{j=1}^{n}
\indicator 
\Big( 
j \neq i \land \denot{\expression}_{\tuple_j} < \denot{\expression}_{\tuple_i} 
\Big) 
= 
k 
\land 
\cmdCopy 
\Big( 
\tuple_i, \tuple'_{ k + \sum_{j=1}^{i-1}\indicator(\tuple_j = \tuple_i)}, \attrlist 
\Big)
\big)
\]
Thus, $\smtmodel \models \queryopencoding_{i, \falsevalue}$ and $\smtmodel \models \queryopencoding_{i, \truevalue_k}$.
Next, let us consider two possible values of $\smtmodel(\underapproxvar_i)$.
    \begin{enumerate}
    \item $\smtmodel(\underapproxvar_i) = \truevalue$.
    Since  $\smtmodel \models \queryopencoding_{i, \truevalue}$, it holds that $\smtmodel \models \big(
\sum_{j=1}^{n}
\indicator 
\Big( 
j \neq i \land \denot{\expression}_{\tuple_j} < \denot{\expression}_{\tuple_i} 
\Big) 
= 
k 
\land 
\cmdCopy 
\Big( 
\tuple_i, \tuple'_{ k + \sum_{j=1}^{i-1}\indicator(\tuple_j = \tuple_i)}, \attrlist 
\Big)
\big)$. By the definition of $\cmdCopy$, we know that (a) $\smtmodel(\tuple_i)$ exists in the input $\smtmodel(\queryopinput)$, (b) $\smtmodel(\tuple'_{ k + \sum_{j=1}^{i-1}\indicator(\tuple_j = \tuple_i)})$ exists in the result $\smtmodel(\queryopoutput)$, and (c) $\forall a \in \attrlist.~\denot{\smtmodel(\tuple_i).a} = \denot{\smtmodel(\tuple'_{ k + \sum_{j=1}^{i-1}\indicator(\tuple_j = \tuple_i)}).a}$.
    
    \item $\smtmodel(\underapproxvar_i) = \falsevalue$. Since $\smtmodel \models \queryopencoding_{i, \falsevalue}$, it holds that $\smtmodel \models \del(\tuple_i)$. Then we know that $\smtmodel(\tuple_i)$ does not exist in input $\smtmodel(\queryopinput)$.
    \end{enumerate}
Therefore, by the semantics of $\orderby$, $\denot{\orderby}_{\smtmodel(\queryopinput)} = \smtmodel(\queryopoutput)$. 
\end{itemize}
\end{proof}

\begin{corollary}[Correctness of UA semantics for queries] \label{cor:ua-semantics-correcteness}
Given a query $\sqlquery$ and a UA map $\mapastnodetounderapprox$ that maps each AST node of P to a UA choice, let formula $\formula$ be the result of $\encodeuasemantics(\mapastnodetounderapprox)$.  If $\formula$ is satisfiable, then for any $\smtmodel$ of $\formula$, the corresponding inputs $\smtmodel(\vec{\queryopinput})$ and output $\smtmodel(y)$ are consistent with the precise semantics of $\sqlquery$, i.e., $\denot{\sqlquery}_{\smtmodel(\vec{\queryopinput})} = \smtmodel(\queryopoutput)$.
\end{corollary}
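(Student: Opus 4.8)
The plan is to prove this by structural induction on the AST of $\sqlquery$, lifting the per-operator guarantee of Theorem~\ref{thm:semantics-correctness} to the entire query through the wiring equalities inside $\encode(\mapastnodetounderapprox)$. First I would strengthen the claim into an invariant over \emph{every} AST node rather than only the root: for each node $\astnode$ of $\sqlquery$, letting $\sqlquery_{\astnode}$ be the subquery rooted at $\astnode$, any model $\smtmodel$ of $\formula$ satisfies $\denot{\sqlquery_{\astnode}}_{\smtmodel(\vec{\queryopinput})} = \smtmodel(\astnodeoutput^{\astnode})$, where $\vec{\queryopinput}$ are the base relations feeding $\sqlquery_{\astnode}$. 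The corollary is then the special case in which $\astnode$ is the root of $\sqlquery$, whose output variable is $\queryopoutput$.

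Since $\encode(\mapastnodetounderapprox)$ is a conjunction of one labeled clause per AST node, any $\smtmodel \models \formula$ satisfies every conjunct at once. For the base case, a leaf $\astnode$ with $\getqueryop(\astnode) = \relation$ (or $\rename$) is encoded so that its output variable equals its input relation, hence $\smtmodel(\astnodeoutput^{\astnode}) = \denot{\relation}$ directly. For the inductive step, let $\astnode$ be an internal node with $\queryop = \getqueryop(\astnode)$ and children $\astnode_1, \mydots, \astnode_l$, so $\sqlquery_{\astnode} = \queryop(\sqlquery_{\astnode_1}, \mydots, \sqlquery_{\astnode_l})$. Because $\smtmodel$ satisfies the clause for $\astnode$, it satisfies both $\encode(\astnode, \underapprox)$ and the wiring constraints $\astnodeinput^{\astnode}_i = \astnodeoutput^{\astnode_i}$. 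Applying Theorem~\ref{thm:semantics-correctness} to $\queryop$ (modulo the node-specific variable renaming, which preserves satisfaction because it is an injective substitution) yields local correctness $\denot{\queryop}_{\smtmodel(\astnodeinput^{\astnode}_1), \mydots, \smtmodel(\astnodeinput^{\astnode}_l)} = \smtmodel(\astnodeoutput^{\astnode})$. Rewriting each $\smtmodel(\astnodeinput^{\astnode}_i)$ as $\smtmodel(\astnodeoutput^{\astnode_i})$ via the wiring equalities and then substituting the induction hypothesis $\smtmodel(\astnodeoutput^{\astnode_i}) = \denot{\sqlquery_{\astnode_i}}_{\smtmodel(\vec{\queryopinput})}$ gives $\smtmodel(\astnodeoutput^{\astnode}) = \denot{\queryop}_{\denot{\sqlquery_{\astnode_1}}, \mydots, \denot{\sqlquery_{\astnode_l}}}$, which equals $\denot{\sqlquery_{\astnode}}_{\smtmodel(\vec{\queryopinput})}$ by the compositional definition of the SQL denotational semantics. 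This closes the induction.

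I expect the main obstacle to be careful bookkeeping rather than any deep idea. I must track the node-specific renamings consistently, including the fact that distinct AST nodes referencing the same base relation $\relation$ share the variable $\queryopinput^{\relation}$, which is precisely what guarantees that a single model induces one coherent input database. I must also separately dispatch the operators $\relation$, $\rename$, and $\with$ that Theorem~\ref{thm:semantics-correctness} does not treat in its operator case analysis: $\relation$ and $\rename$ encode the identity on tables, while $\with(\vec{\sqlquery}, \vec{\relation}, \sqlquery)$ is encoded by wiring the results of $\vec{\sqlquery}$ into the named relations $\vec{\relation}$ before encoding $\sqlquery$, so the same wiring-plus-induction argument applies. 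The only genuinely semantic step is the final appeal to compositionality of $\denot{\cdot}$, which holds by construction of the list semantics inherited from prior work, and the note that the renaming-invariance of satisfiability justifies reusing the theorem verbatim at each node.
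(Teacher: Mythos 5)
Your proposal is correct and takes essentially the same route as the paper's own proof: both decompose $\encode(\mapastnodetounderapprox)$ into per-node conjuncts joined by the wiring equalities $\astnodeinput^{\astnode}_i = \astnodeoutput^{\astnode_i}$, and lift the per-operator guarantee of Theorem~\ref{thm:semantics-correctness} up the AST to the root. Your version merely makes explicit the structural induction, the node-specific variable renaming, and the separate handling of $\relation$, $\rename$, and $\with$, all of which the paper's terser proof leaves implicit.
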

\begin{proof}
By the definition of $\encodeuasemantics$, the resulting formula is a conjunction of the formulas returned by $\encodeuasemantics(\astnode, \underapprox)$ where $(\astnode \mapsto \underapprox) \in \mapastnodetounderapprox$ and the inputs $\queryopinput_1, \mydots, \queryopinput_l$ to $\astnode$ are the outputs $\queryopoutput^{\astnode_1}, \mydots, \queryopoutput^{\astnode_l}$ produced by its children AST nodes $\astnode_1, \mydots, \astnode_l$.  Suppose $\encodeuasemantics(\mapastnodetounderapprox)$ is satisfiable, let us consider a model $\smtmodel$ of the result formula, i.e., $\smtmodel \models \encodeuasemantics(\mapastnodetounderapprox)$.  By Theorem ~\ref{thm:semantics-correctness}, we know that (1) $\denot{\astnode_i}_{\smtmodel(\vec{\queryopinput^{\astnode_i}})} = \smtmodel(\queryopoutput^{\astnode_i})$ holds for $1 \leq i \leq l$, and (2) $\denot{op(\astnode)}_{\smtmodel(\vec{\queryopinput})} = \smtmodel(\queryopoutput)$. 
 Therefore, it also holds that $\denot{\sqlquery}_{\smtmodel(\vec{\queryopinput})} = \smtmodel(\queryopoutput)$.
\end{proof}

\textsc{Theorem}~\ref{thm:semantics-refinement}.
Given query operator $\queryop$, 
and two UAs $\underapprox \in \underapproxfamily_{\queryop}$ and $\underapprox' \in \underapproxfamily_{\queryop}$ where  $\underapprox'$ refines $\underapprox$ (i.e., $\underapprox' \refines \underapprox$), we have  $\encodeuasemantics(\queryop, \underapprox) \Rightarrow \encodeuasemantics(\queryop, \underapprox')$.
Intuitively, the set of $\queryop$'s reachable outputs for $\underapprox'$ should be a subset of that for $\underapprox$; therefore, the UA semantics encoding for $\underapprox'$ is entailed by that for $\underapprox$.
\begin{proof}
Prove by case analysis of $\queryop$.
\begin{itemize}[leftmargin=*]
\item $\queryop = \proj, \filter, \groupby, \distinct, \unionall$, or $\orderby$. \\
By the definition of $\encodeuasemantics$, it holds that $\encodeuasemantics(\queryop, \underapprox) = \encodechoice(\underapprox) \land \encodefullsemantics(\queryop)$ and $\encodeuasemantics(\queryop, \underapprox') = \encodechoice(\underapprox') \land \encodefullsemantics(\queryop)$.  Then, we will need to show that $\encodechoice(\underapprox) \Rightarrow$ $\encodechoice(\underapprox')$.

Since $\encodechoice(\underapprox) = \bigland_{\underapprox_i \in \underapprox} \encodeuavalue( \underapprox_i )$ and by the definition of $\encodeuavalue$, it holds that (1) $\encodeuavalue(\underapprox'_i) = ( \underapproxvar'_i = \underapprox'_i)$ if $\underapprox'_i \neq \unknownvalue$, and $\encodeuavalue(\underapprox'_i) = \bigvee_{c \in \dom(\underapprox'_i) \setminus \set{\unknown}} \underapproxvar'_i = c$ if $\underapprox'_i = \unknownvalue$, and (2) $\encodeuavalue(\underapprox_i) = ( \underapproxvar_i = \underapprox_i)$ if $\underapprox_i \neq \unknownvalue$, and $\encodeuavalue(\underapprox_i) = \bigvee_{c \in \dom(\underapprox_i) \setminus \set{\unknown}} \underapproxvar_i = c$ if $\underapprox_i = \unknownvalue$.

Also by the value subsumption relation, if $\underapprox \subsumes \underapprox'$, then for all $\underapprox_i \neq \unknownvalue$, $\underapprox_i = \underapprox'_i$. Therefore, for each $\underapprox_i$ and $\underapprox'_i$, the disjunctive clause encoded by $\encodeuavalue( \underapprox'_i )$ is a subset of and is weaker than that encoded by $\encodeuavalue( \underapprox_i )$.  It holds that $\encodechoice(\underapprox) \Rightarrow$ $\encodechoice(\underapprox')$.  Therefore, it is proved that $\encodeuasemantics(\queryop, \underapprox) \Rightarrow \encodeuasemantics(\queryop, \underapprox')$.

\item $\queryop = \product, \ijoin, \ljoin, \rjoin$, or $\fjoin$. \\
By the definition of $\encodeuasemantics$, it holds that $\encodeuasemantics(\queryop, \underapprox) = \encodechoice(\underapprox) \land \encodefullsemantics(\queryop)$ and $\encodeuasemantics(\queryop, \underapprox') = \encodechoice(\underapprox') \land \encodefullsemantics(\queryop)$.

Then, we will need to show that $\encodechoice(\underapprox) \Rightarrow$ $\encodechoice(\underapprox')$.

By the definition $\encodechoice(\underapprox) = \bigland_{\underapprox_{i,j} \in \underapprox} \encodeuavalue( \underapprox_{i,j} )$ and by the definition of $\encodeuavalue$, it holds that (1) $\encodeuavalue(\underapprox'_{i,j}) = ( \underapproxvar'_{i,j} = \underapprox'_{i,j})$ if $\underapprox'_{i,j} \neq \unknownvalue$, and $\encodeuavalue$ $(\underapprox'_{i,j}) =$ $\bigvee_{c \in \dom(\underapprox'_{i,j}) \setminus \set{\unknown}} \underapproxvar'_{i,j} = c$ if $\underapprox'_{i,j} = \unknownvalue$; (2) $\encodeuavalue(\underapprox_{i,j}) = ( \underapproxvar_{i,j} = \underapprox_{i,j})$ if $\underapprox_{i,j} \neq \unknownvalue$, and $\encodeuavalue$ $(\underapprox_{i,j}) =$ $\bigvee_{c \in \dom(\underapprox_{i,j}) \setminus \set{\unknown}} \underapproxvar_{i,j} = c$ if $\underapprox_{i,j} = \unknownvalue$.

Also by the value subsumption relation, if $\underapprox \subsumes \underapprox'$, then for all $\underapprox_{i,j} \neq \unknownvalue$, $\underapprox_{i,j} = \underapprox'_{i,j}$. Therefore, for each $\underapprox_{i,j}$ and $\underapprox'_{i,j}$, the disjunctive clause encoded by $\encodeuavalue( \underapprox'_{i,j} )$ is a subset of and is weaker than that encoded by $\encodeuavalue( \underapprox_{i,j} )$.  It holds that $\encodechoice(\underapprox) \Rightarrow$ $\encodechoice(\underapprox')$.  Therefore, it is proved that $\encodeuasemantics(\queryop, \underapprox) \Rightarrow \encodeuasemantics(\queryop, \underapprox')$.
\end{itemize}
\end{proof}

\begin{corollary}[Refinement of UA semantics for queries] \label{cor:ua-semantics-refinement}
Given a query $\sqlquery$ and a UA map $\mapastnodetounderapprox$ that maps each AST node of P to a UA choice, let formula $\formula$ be the result of $\encodeuasemantics(\mapastnodetounderapprox)$.  For a UA map $\mapastnodetounderapprox'$ where for each node $v \in \dom(\mapastnodetounderapprox')$, $\mapastnodetounderapprox(v) \subsumes \mapastnodetounderapprox'(v)$, it holds that $\encodeuasemantics(\mapastnodetounderapprox) \Rightarrow \encodeuasemantics(\mapastnodetounderapprox')$.
\end{corollary}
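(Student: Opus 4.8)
The plan is to reduce this corollary about whole-query encodings to the per-operator statement of Theorem~\ref{thm:semantics-refinement}, exploiting the fact that $\encodeuasemantics(\mapastnodetounderapprox)$ is, by definition, a conjunction whose only dependence on the chosen UAs is isolated inside the $\encodechoice$ factor of each AST node. First I would recall that $\encodeuasemantics(\mapastnodetounderapprox) = \bigland_{(\astnode \mapsto \underapprox) \in \mapastnodetounderapprox} \big( \encodeuasemantics(\astnode, \underapprox) \land \chi_{\astnode} \big)_{\labelastnode}$, where $\chi_{\astnode}$ abbreviates the children-equality constraints $\astnodeinput^{\astnode}_1 = \astnodeoutput^{\astnode_1} \land \mydots \land \astnodeinput^{\astnode}_l = \astnodeoutput^{\astnode_l}$ for $\astnode$. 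Since the hypothesis guarantees $\mapastnodetounderapprox(\astnode)$ is defined for every $\astnode \in \dom(\mapastnodetounderapprox')$, we have $\dom(\mapastnodetounderapprox') \subseteq \dom(\mapastnodetounderapprox)$; and because implication distributes over conjunction on the right (to prove $A \Rightarrow \bigland_k B_k$ it suffices to prove $A \Rightarrow B_k$ for each $k$), it is enough to show, for each node $\astnode \in \dom(\mapastnodetounderapprox')$, that the conjunct of $\encodeuasemantics(\mapastnodetounderapprox)$ for $\astnode$ entails the conjunct of $\encodeuasemantics(\mapastnodetounderapprox')$ for $\astnode$.

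Next, fixing such a node $\astnode$ and writing $\underapprox = \mapastnodetounderapprox(\astnode)$ and $\underapprox' = \mapastnodetounderapprox'(\astnode)$, I would note that the factor $\chi_{\astnode}$ is literally identical on both sides, since it depends only on $\astnode$'s position in the AST and not on the chosen UA. Thus the node-level obligation collapses to $\encodeuasemantics(\astnode, \underapprox) \Rightarrow \encodeuasemantics(\astnode, \underapprox')$. I would discharge this by the same case split on $\getqueryop(\astnode)$ used earlier: for operators $\relation$, $\rename$, and $\with$ there is a single UA, so $\underapprox = \underapprox'$ and the implication is an identity; for every other operator, $\encodeuasemantics(\astnode, \underapprox)$ is by definition $\big( \encodechoice(\underapprox) \land \encodefullsemantics(\getqueryop(\astnode)) \big)$ with a fixed renaming of $\queryopinput_i, \queryopoutput, \varUAvector$ to their $\astnode$-indexed counterparts, a renaming that is the same for $\underapprox$ and $\underapprox'$ and does not touch the UA values.

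The crux is then exactly the statement established inside Theorem~\ref{thm:semantics-refinement}: since $\underapprox' \refines \underapprox$ (equivalently $\underapprox \subsumes \underapprox'$), we have $\encodeuasemantics(\getqueryop(\astnode), \underapprox) \Rightarrow \encodeuasemantics(\getqueryop(\astnode), \underapprox')$, which rests on $\encodechoice(\underapprox) \Rightarrow \encodechoice(\underapprox')$ while the UA-independent factor $\encodefullsemantics(\getqueryop(\astnode))$ is shared verbatim. I would then observe that an injective variable renaming preserves logical entailment, so applying the common renaming to both sides carries this implication over to $\encodeuasemantics(\astnode, \underapprox) \Rightarrow \encodeuasemantics(\astnode, \underapprox')$, completing the node-level step and hence the corollary by right-distributivity.

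I expect the main obstacle to be bookkeeping rather than anything conceptual: one must argue carefully that (i) $\encodefullsemantics(\getqueryop(\astnode))$ and the children-equality constraints $\chi_{\astnode}$ are genuinely independent of the UA choice, so that only the $\encodechoice$ factors differ between the two node encodings, and (ii) the renaming applied in $\encodeuasemantics(\astnode, \cdot)$ is identical for $\underapprox$ and $\underapprox'$ and is entailment-preserving. Once these two facts are pinned down, the corollary follows immediately from Theorem~\ref{thm:semantics-refinement} with no further calculation.
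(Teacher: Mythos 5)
Your proof is correct and follows essentially the same route as the paper's own proof: expand $\encodeuasemantics(\mapastnodetounderapprox)$ into its per-node conjuncts (node encoding plus children-wiring constraints), apply the per-operator refinement result of Theorem~\ref{thm:semantics-refinement} to each node, and recombine by conjunction. Your write-up is in fact somewhat more careful than the paper's, which silently treats the wiring constraints and variable renaming as UA-independent, does not single out the one-UA operators ($\relation$, $\rename$, $\with$), and even cites Theorem~\ref{thm:semantics-correctness} where the refinement theorem is the result actually being invoked.
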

\begin{proof}
By the definition of $\encodeuasemantics$, the resulting formula is a conjunction of the formulas returned by $\encodeuasemantics(\astnode, \underapprox)$ where $(\astnode \mapsto \underapprox) \in \mapastnodetounderapprox$ and the inputs $\queryopinput_1, \mydots, \queryopinput_l$ to $\astnode$ are the outputs $\queryopoutput^{\astnode_1}, \mydots, \queryopoutput^{\astnode_l}$ produced by its children AST nodes $\astnode_1, \mydots, \astnode_l$.  By Theorem ~\ref{thm:semantics-correctness}, we know that (1) $\encodeuasemantics(op(\astnode_i), \mapastnodetounderapprox(\astnode_i)) \Rightarrow \encodeuasemantics(op(\astnode_i), \mapastnodetounderapprox'(\astnode_i))$ holds for $1 \leq i \leq l$, and (2) $\encodeuasemantics(op(\astnode), \mapastnodetounderapprox(\astnode_i)) \Rightarrow \encodeuasemantics(op(\astnode), \mapastnodetounderapprox'(\astnode_i))$.  Therefore it also holds that $\encodeuasemantics(\mapastnodetounderapprox) \Rightarrow \encodeuasemantics(\mapastnodetounderapprox')$.
\end{proof}

\begin{lemma}[Soundness of Conflict-Driven UA Search] \label{lem:soundness-ua-search}
Given queries $\sqlquery_1, \mydots, \sqlquery_n$ and an application condition $\appcond$, if the procedure $\conflictdrivenUAsearch(\sqlquery_1, \mydots, \sqlquery_n, \appcond)$ returns a UA map $\mapastnodetounderapprox$ and $\mapastnodetounderapprox \neq null$, then $\encodeuasemantics(\mapastnodetounderapprox) \land \appcond$ is satisfiable.
\end{lemma}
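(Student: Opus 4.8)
The plan is to trace the single code path on which $\conflictdrivenUAsearch$ can return a non-\emph{null} map and then exhibit an explicit satisfying assignment for $\encode(\mapastnodetounderapprox) \land \appcond$. Inspecting Algorithm~\ref{alg:conflict-driven-UA-search}, a non-\emph{null} map is returned only at line~\ref{alg:conflict-driven-UA-search:return-M}, which is reached only inside the satisfiable branch guarded by line~\ref{alg:conflict-driven-UA-search:sat-case}. Thus at the moment of return we know (i) the formula $\encodingall = \encode(\mapastnodetounderapprox_{\text{prev}}) \land \appcond$ built at line~\ref{alg:conflict-driven-UA-search:condition} from the map $\mapastnodetounderapprox_{\text{prev}}$ that entered the iteration was satisfiable, with $\smtmodel \assign \getmodel(\encodingall)$, and (ii) line~\ref{alg:conflict-driven-UA-search:obtain-model-and-update} then replaced $\mapastnodetounderapprox_{\text{prev}}$ by the returned map $\mapastnodetounderapprox = \mapastnodetounderapprox_{\text{prev}}[\astnode \mapsto \smtmodel(\varUAvector^{\astnode}) \mid \astnode \in \dom(\mapastnodetounderapprox_{\text{prev}})]$. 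As the surrounding text notes, the $\encodingall$ at line~\ref{alg:conflict-driven-UA-search:condition} is exactly the formula of the lemma up to cosmetic variable-renaming and the $\appcond$ label, so the two notions of satisfiability coincide. The goal therefore reduces to showing that this same $\smtmodel$ remains a model after the remapping.

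The central observation I would establish is that each new entry $\smtmodel(\varUAvector^{\astnode})$ is a \emph{minimal} UA that refines the old entry $\mapastnodetounderapprox_{\text{prev}}(\astnode)$. This follows from the definitions of $\encodechoice$ and $\encodeuavalue$: since $\smtmodel \models \encode(\mapastnodetounderapprox_{\text{prev}})$ and that formula contains $\encodechoice(\mapastnodetounderapprox_{\text{prev}}(\astnode))$ for each node, every non-top component of $\smtmodel(\varUAvector^{\astnode})$ equals the corresponding component of $\mapastnodetounderapprox_{\text{prev}}(\astnode)$, while every $\unknownvalue$ component is forced to some concrete non-top value in its domain. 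Hence $\smtmodel(\varUAvector^{\astnode})$ contains no $\unknownvalue$ and $\mapastnodetounderapprox_{\text{prev}}(\astnode) \subsumes \smtmodel(\varUAvector^{\astnode})$ for every $\astnode \in \dom(\mapastnodetounderapprox_{\text{prev}})$; that is, the returned $\mapastnodetounderapprox$ is a node-wise refinement of $\mapastnodetounderapprox_{\text{prev}}$.

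With this refinement in hand I would invoke Corollary~\ref{cor:ua-semantics-refinement} (the map-level consequence of Theorem~\ref{thm:semantics-refinement}): because $\mapastnodetounderapprox_{\text{prev}}(\astnode) \subsumes \mapastnodetounderapprox(\astnode)$ at every node, $\encode(\mapastnodetounderapprox_{\text{prev}}) \Rightarrow \encode(\mapastnodetounderapprox)$. Since $\smtmodel \models \encode(\mapastnodetounderapprox_{\text{prev}}) \land \appcond$, we conclude $\smtmodel \models \encode(\mapastnodetounderapprox) \land \appcond$, which proves the lemma. Equivalently and more directly, one can verify $\smtmodel$ against $\encode(\mapastnodetounderapprox)$ clause by clause: the $\encodefullsemantics$ parts and the child-to-parent connection constraints are literally unchanged by the remapping, while each $\encodechoice(\mapastnodetounderapprox(\astnode))$ reduces, for a minimal UA, to the equalities $\varUAvector^{\astnode} = \smtmodel(\varUAvector^{\astnode})$, which hold by construction.

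The main obstacle I anticipate is not the refinement reasoning itself but the bookkeeping across the iteration boundary: carefully distinguishing $\mapastnodetounderapprox_{\text{prev}}$ (which may still contain Top UAs introduced at line~\ref{alg:conflict-driven-UA-search:add-more-nodes}) from the returned $\mapastnodetounderapprox$, pinning down the direction of $\subsumes$, and justifying that the model read off the \emph{old} formula is reusable verbatim for the \emph{new} one. Once it is shown that reading a UA off a model yields a minimal UA lying below the map's entry, Corollary~\ref{cor:ua-semantics-refinement} does the remaining work.
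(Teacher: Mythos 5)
Your proposal is correct and follows essentially the same route as the paper's proof: identify that a non-\emph{null} return happens only in the satisfiable branch, observe that the model-based update at line~\ref{alg:conflict-driven-UA-search:obtain-model-and-update} makes each new entry a refinement of the old one ($\mapastnodetounderapprox_{\text{prev}}(\astnode) \subsumes \mapastnodetounderapprox(\astnode)$), and then apply Corollary~\ref{cor:ua-semantics-refinement} to get $\encode(\mapastnodetounderapprox_{\text{prev}}) \Rightarrow \encode(\mapastnodetounderapprox)$ and hence satisfiability of $\encode(\mapastnodetounderapprox) \land \appcond$. Your version is in fact slightly more careful than the paper's, since you justify the refinement claim from the definitions of $\encodechoice$ and $\encodeuavalue$ rather than asserting it directly from the update line.
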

\begin{proof}
Suppose at line ~\ref{alg:conflict-driven-UA-search:condition} of Algorithm ~\ref{alg:conflict-driven-UA-search}, we have a UA map $\mapastnodetounderapprox^*$.
If $\conflictdrivenUAsearch$ $(\sqlquery_1, \mydots, \sqlquery_n, \appcond)$ returns a UA map $\mapastnodetounderapprox$ and $\mapastnodetounderapprox \neq null$, by line ~\ref{alg:conflict-driven-UA-search:sat-case} of Algorithm ~\ref{alg:conflict-driven-UA-search}, the formula $\Phi = \encodeuasemantics(\mapastnodetounderapprox^*) \land \appcond$ is satisfiable.
Also by line ~\ref{alg:conflict-driven-UA-search:obtain-model-and-update} of Algorithm ~\ref{alg:conflict-driven-UA-search}, for each AST node $\astnode \in \dom(\mapastnodetounderapprox^*)$, it holds that $v \in \dom(\mapastnodetounderapprox)$ and $\mapastnodetounderapprox^*(v) \subsumes \mapastnodetounderapprox(v)$.  Thus, by Corollary ~\ref{cor:ua-semantics-refinement}, we have $\encodeuasemantics(\mapastnodetounderapprox^*) \Rightarrow \encodeuasemantics(\mapastnodetounderapprox)$.
Since $\Phi = \encodeuasemantics(\mapastnodetounderapprox^*) \land \appcond$ is satisfiable, $\encodeuasemantics(\mapastnodetounderapprox) \land \appcond$ is also satisfiable.
\end{proof}

\textsc{Theorem}~\ref{thm:soundness}.
Given queries $\sqlquery_1, \mydots, \sqlquery_n$ and application condition $\appcond$, if $\geninput$ returns an input database $\inputdatabase$, then we have $\appcond[ \queryoutput_1 \mapsto \sqlquery_1(\inputdatabase), \mydots, \queryoutput_n \mapsto \sqlquery_n(\inputdatabase) ]$ is true. 
\begin{proof}
By line ~\ref{alg:top-level:call-search} of Algorithm~\ref{alg:top-level} and Lemma~\ref{lem:soundness-ua-search}, if $\conflictdrivenUAsearch(\sqlquery_1, \mydots, \sqlquery_n, \appcond)$ returns a UA map $\mapastnodetounderapprox$ and $\mapastnodetounderapprox \neq null$, then the formula $\Phi = \encodeuasemantics(\mapastnodetounderapprox) \land \appcond$ is satisfiable.
Then for any model $\smtmodel$ of $\Phi$, by Corollary~\ref{cor:ua-semantics-correcteness} and soundness of $\buildinputdatabase$, the database $\inputdatabase$ is consistent with $\smtmodel(\vec{x})$ and $\denot{\sqlquery}_{\smtmodel(\vec{x})} = \smtmodel(y)$ where $\smtmodel(y) \models \appcond$.  Therefore, if $\geninput(\sqlquery_1, \mydots, \sqlquery_n, \appcond)$ returns a database $\inputdatabase$, then $\inputdatabase$ satisfies $\appcond$. 
\end{proof}

\begin{lemma} \label{lem:omega-all-incorrect-ua}
Given a UA map $\mapastnodetounderapprox$, a subset of AST nodes $\conflictASTnodes$, an application condition $\appcond$, and a set of conflicts $\conflicts$, let $(\mapastnodetounderapprox', \conflicts')$ be the result of $\fixconflict(\mapastnodetounderapprox, \conflictASTnodes, \appcond, \conflicts)$. If any $\mapastnodetounderapprox^* \in \conflicts$ makes $\encodeuasemantics(\mapastnodetounderapprox^*) \land \appcond$ unsatisfiable, then any $\mapastnodetounderapprox \in \conflicts'$ also makes $\encodeuasemantics(\mapastnodetounderapprox) \land \appcond$ unsatisfiable.
\end{lemma}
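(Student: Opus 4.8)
The plan is to treat this as a simple \emph{invariant-preservation} argument: the set of conflicts only grows ($\conflicts \subseteq \conflicts'$), so it suffices to show that (i) the elements carried over from $\conflicts$ are genuine conflicts by hypothesis, and (ii) every element that $\fixconflict$ \emph{adds} to produce $\conflicts'$ is also a genuine conflict, in the sense that its UA-semantics encoding conjoined with $\appcond$ is unsatisfiable. Reading ``any'' in the statement as ``every'', the hypothesis is precisely that each $\mapastnodetounderapprox^* \in \conflicts$ satisfies ``$\encodeuasemantics(\mapastnodetounderapprox^*) \land \appcond$ is unsatisfiable,'' and the goal is the same property for each element of $\conflicts'$. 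So the whole proof reduces to inspecting the two program points in Algorithm~\ref{alg:resolveconflict} where $\conflicts$ is augmented.

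First I would identify those two points. The set is modified only at line~\ref{alg:resolveconflict:add-conflict} and at line~\ref{alg:resolveconflict:check-conflict}; the remaining lines (in particular lines~\ref{alg:resolveconflict:encode-conflicts}--\ref{alg:resolveconflict:return-M'} and the early return) read but never insert into the conflict set. At line~\ref{alg:resolveconflict:add-conflict} the element $\mapastnodetounderapprox \mapprojection \conflictASTnodes$ is inserted; by the input contract of $\fixconflict$ (its \textbf{Input} specification states that $\mapastnodetounderapprox \mapprojection \conflictASTnodes$ is known to cause a conflict), this element is a conflict by definition, i.e.\ $\encode(\mapastnodetounderapprox \mapprojection \conflictASTnodes) \land \appcond$ is unsatisfiable. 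At line~\ref{alg:resolveconflict:check-conflict} the element $\mapastnodetounderapprox_{\conflictASTnodes}$ is inserted, but only along the branch guarded by ``$\encodingall_{\conflictASTnodes}$ is not satisfiable.'' Since line~\ref{alg:resolveconflict:encode-semantics-at-V} defines $\encodingall_{\conflictASTnodes} \assign \encode(\mapastnodetounderapprox_{\conflictASTnodes}) \land \appcond$, the guard is exactly the statement that $\encode(\mapastnodetounderapprox_{\conflictASTnodes}) \land \appcond$ is unsatisfiable, so every $\mapastnodetounderapprox_{\conflictASTnodes}$ added here is again a genuine conflict.

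With these two observations I would conclude by case analysis on an arbitrary $\mapastnodetounderapprox \in \conflicts'$: either $\mapastnodetounderapprox \in \conflicts$, in which case the hypothesis gives unsatisfiability directly, or $\mapastnodetounderapprox$ was newly inserted at line~\ref{alg:resolveconflict:add-conflict} or line~\ref{alg:resolveconflict:check-conflict}, in which case the argument above gives that $\encodeuasemantics(\mapastnodetounderapprox) \land \appcond$ is unsatisfiable (recalling that $\encode$ and $\encodeuasemantics$ denote the same encoding). Since these are the only ways an element can appear in $\conflicts'$, every element of $\conflicts'$ is a genuine conflict, which is the claim. Note that the possibility of an early return at line~\ref{alg:resolveconflict:return-M'} does not weaken the argument, because we only need that \emph{whatever} has been inserted is a conflict, regardless of how many loop iterations run before termination.

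\textbf{Main obstacle.} There is no heavy calculation here; the only things requiring care are (a) correctly invoking the \emph{precondition} of $\fixconflict$ to justify that the initial projection $\mapastnodetounderapprox \mapprojection \conflictASTnodes$ is a genuine conflict rather than merely an arbitrary sub-map, and (b) verifying by exhaustive inspection of Algorithm~\ref{alg:resolveconflict} that lines~\ref{alg:resolveconflict:add-conflict} and~\ref{alg:resolveconflict:check-conflict} are the \emph{only} sites that write to the conflict set, so that no unchecked (possibly satisfiable) map can sneak in. Both are routine once the control flow is laid out, so I expect the bookkeeping of ``which lines mutate $\conflicts$'' to be the main, and fairly mild, point to get exactly right.
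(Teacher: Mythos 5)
Your proposal is correct and takes essentially the same approach as the paper's proof: both reduce the claim to checking the two sites in Algorithm~\ref{alg:resolveconflict} that insert into the conflict set, observing that the guard at line~\ref{alg:resolveconflict:check-conflict} is literally the unsatisfiability of $\encode(\mapastnodetounderapprox_{\conflictASTnodes}) \land \appcond$, and then concluding via the hypothesis on $\conflicts$. The only difference is bookkeeping: you justify that $\mapastnodetounderapprox \mapprojection \conflictASTnodes$ (inserted at line~\ref{alg:resolveconflict:add-conflict}) is a genuine conflict by invoking the stated input contract of $\fixconflict$, whereas the paper discharges that same precondition by appealing to the soundness of $\extractconflict$ at the call site in Algorithm~\ref{alg:conflict-driven-UA-search}---substantively the same fact.
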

\begin{proof}
Since by line~\ref{alg:conflict-driven-UA-search:resolve-conflict} of Algorithm~\ref{alg:conflict-driven-UA-search}, $\conflictASTnodes$ is a subset of nodes in $\mapastnodetounderapprox$ such that $\encodeuasemantics(\conflictASTnodes) \land \appcond$ is unsatisfiable, and by soundness of $\extractconflict$, $\encodeuasemantics(\mapastnodetounderapprox \mapprojection \conflictASTnodes) \land \appcond$ is also unsatisfiable.
Also by line ~\ref{alg:resolveconflict:encode-semantics-at-V} and line ~\ref{alg:resolveconflict:check-conflict} of Algorithm ~\ref{alg:resolveconflict}, we know that if $\encodingall_{\conflictASTnodes}$ is unsatisfiable, then $\encode(\mapastnodetounderapprox_{\conflictASTnodes}) \land  \appcond$ is unsatisfiable.
Thus, if $\mapastnodetounderapprox^* \in \conflicts$ such that $\encodeuasemantics(\mapastnodetounderapprox^*) \land \appcond$ is unsatisfiable, then it also holds any $\mapastnodetounderapprox \in \conflicts'$ also makes $\encodeuasemantics(\mapastnodetounderapprox) \land \appcond$ unsatisfiable.
\end{proof}

\begin{lemma} \label{lem:omega-progress}
Given a UA map $\mapastnodetounderapprox$, a subset of AST nodes $\conflictASTnodes$, an application condition $\appcond$, and a set of conflicts $\conflicts$, let $(\mapastnodetounderapprox', \conflicts')$ be the result of $\fixconflict(\mapastnodetounderapprox, \conflictASTnodes, \appcond, \conflicts)$. $\conflicts'$ is a proper superset of $\conflicts$, i.e., $\conflicts' \supset \conflicts$.
\end{lemma}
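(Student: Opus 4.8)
The plan is to reduce the claim to a single \emph{freshness} fact. First I would record that $\fixconflict$ never deletes from the conflict set: line~\ref{alg:resolveconflict:add-conflict} of Algorithm~\ref{alg:resolveconflict} sets $\conflicts' \assign \conflicts \cup \{ \mapastnodetounderapprox \mapprojection \conflictASTnodes \}$, and the only other write to $\conflicts'$ (at line~\ref{alg:resolveconflict:check-conflict}) is again a union. Hence $\conflicts \subseteq \conflicts'$ holds unconditionally, and the conclusion $\conflicts' \supset \conflicts$ follows as soon as I exhibit one member of $\conflicts'$ that is not already in $\conflicts$. The natural witness is the conflict $\mapastnodetounderapprox \mapprojection \conflictASTnodes$ inserted at line~\ref{alg:resolveconflict:add-conflict}, so it suffices to prove $\mapastnodetounderapprox \mapprojection \conflictASTnodes \notin \conflicts$. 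Note this reduction is agnostic to whether $\fixconflict$ ultimately returns a map or $null$, since $\conflicts'$ is already fixed by line~\ref{alg:resolveconflict:add-conflict} before either exit. It is worth flagging that for a truly arbitrary input the claim is false (if $\mapastnodetounderapprox \mapprojection \conflictASTnodes$ happened to lie in $\conflicts$ already), so the argument must appeal to the context in which $\fixconflict$ is invoked by Algorithm~\ref{alg:conflict-driven-UA-search}.

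To secure freshness I would isolate the following invariant of the caller: at every point where the satisfiability of $\mapastnodetounderapprox$ is tested (line~\ref{alg:conflict-driven-UA-search:condition} of Algorithm~\ref{alg:conflict-driven-UA-search}), the current map $\mapastnodetounderapprox$ does not \emph{manifest} any conflict in $\conflicts$, where $\mapastnodetounderapprox$ manifests $\mapastnodetounderapprox^{\ast}$ iff $\mapastnodetounderapprox \mapprojection \dom(\mapastnodetounderapprox^{\ast}) \refines \mapastnodetounderapprox^{\ast}$. Granting the invariant, the set $\conflictASTnodes$ extracted at line~\ref{alg:conflict-driven-UA-search:resolve-conflict} satisfies $\conflictASTnodes \subseteq \dom(\mapastnodetounderapprox)$, so $\mapastnodetounderapprox \mapprojection \conflictASTnodes$ is a genuine sub-map of $\mapastnodetounderapprox$; were it equal to some $\mapastnodetounderapprox^{\ast} \in \conflicts$, then $\mapastnodetounderapprox$ would manifest $\mapastnodetounderapprox^{\ast}$ through the identity refinement, contradicting the invariant. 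This yields $\mapastnodetounderapprox \mapprojection \conflictASTnodes \notin \conflicts$ and closes the reduction.

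The remaining and genuinely hard work is proving the invariant by induction on the iterations of Algorithm~\ref{alg:conflict-driven-UA-search}. The base case is immediate since $\conflicts = \emptyset$ initially. For the unsatisfiable branch, the map produced at line~\ref{alg:resolveconflict:create-M'} comes from a model of a formula that conjoins $\neg \encodeconflicts(\conflicts')$ (line~\ref{alg:resolveconflict:encode-conflicts}); because $\encodeconflicts$ is a disjunction, over conflicts, of the conjoined $\encodechoice$ clauses, any such model must violate at least one entry of every conflict, so the refined map manifests none of them. The delicate case is the satisfiable branch (lines~\ref{alg:conflict-driven-UA-search:obtain-model-and-update}--\ref{alg:conflict-driven-UA-search:add-more-nodes}), which refines $\mapastnodetounderapprox$ to minimal UAs and then extends it with top-UA entries \emph{without} any explicit $\neg\encodeconflicts$ guard. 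Here I would argue semantically: the refined (pre-extension) map is still satisfiable with $\appcond$ (the same model $\smtmodel$ witnesses it, since $\smtmodel$ already assigns each $\varUAvector^{\astnode}$ its minimal value), and if it manifested some $\mapastnodetounderapprox^{\ast}\in\conflicts$ then Corollary~\ref{cor:ua-semantics-refinement} (via Theorem~\ref{thm:semantics-refinement}) would give $\encodeuasemantics(\mapastnodetounderapprox \mapprojection \dom(\mapastnodetounderapprox^{\ast})) \Rightarrow \encodeuasemantics(\mapastnodetounderapprox^{\ast})$, and since $\encodeuasemantics(\mapastnodetounderapprox)$ entails its sub-conjunction $\encodeuasemantics(\mapastnodetounderapprox \mapprojection \dom(\mapastnodetounderapprox^{\ast}))$, satisfiability would propagate to $\encodeuasemantics(\mapastnodetounderapprox^{\ast}) \land \appcond$, contradicting Lemma~\ref{lem:omega-all-incorrect-ua} (which, chained from the empty base, shows every accumulated conflict is genuinely unsatisfiable). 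The newly added nodes need a separate, purely structural observation: a freshly worklist-removed node has never appeared in any earlier map and therefore in no conflict's domain, so extending $\mapastnodetounderapprox$ with top-UA entries for such nodes cannot create a manifestation.

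I expect the main obstacle to be pinning down the exact reading of ``manifest'' in terms of the refinement order $\refines$, so that the $\encodechoice$-level reasoning used for the unsatisfiable branch and the $\encodeuasemantics$-level reasoning used for the satisfiable branch both align with the projection operator $\mapprojection$ and with the covering guarantee of $\splitUAspace$; the structural fact about freshly added nodes is what lets the top-UA extension slip through without breaking the invariant.
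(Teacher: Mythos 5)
Your proposal is correct, and its first paragraph is, in essence, the paper's entire proof: the paper notes that $\conflicts'$ is obtained from $\conflicts$ by unions only (lines~\ref{alg:resolveconflict:add-conflict} and~\ref{alg:resolveconflict:check-conflict} of Algorithm~\ref{alg:resolveconflict}) and that line~\ref{alg:resolveconflict:add-conflict} contributes ``at least one additional element'' $\mapastnodetounderapprox \mapprojection \conflictASTnodes$, and then stops. The genuine difference is that the paper treats the freshness of that element as evident, whereas you correctly observe that it is not: taken literally, for arbitrary inputs the lemma is false (invoke $\fixconflict$ with $\mapastnodetounderapprox \mapprojection \conflictASTnodes$ already in $\conflicts$ and with a first covering UA whose $\encodingall_{\conflictASTnodes}$ is satisfiable; then $\conflicts' = \conflicts$). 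Your caller invariant---the map tested at line~\ref{alg:conflict-driven-UA-search:condition} of Algorithm~\ref{alg:conflict-driven-UA-search} never manifests a known conflict---together with its inductive proof (the $\neg\encodeconflicts(\conflicts')$ guard for the unsat branch, the refinement-plus-Lemma~\ref{lem:omega-all-incorrect-ua} argument for the model-based refinement in the sat branch, and the structural freshness of worklist nodes for the top-UA extension) is exactly the justification the paper omits, and each of those steps checks out. What your route buys is rigor, and it makes explicit that the lemma really concerns $\fixconflict$ \emph{as invoked by} Algorithm~\ref{alg:conflict-driven-UA-search}, which is precisely how Lemma~\ref{lem:completeness-ua-search} uses it; what it costs is substantially more machinery than the paper's two-line argument.

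One caveat on your sat-branch step: you invoke Corollary~\ref{cor:ua-semantics-refinement} to obtain $\encodeuasemantics(\mapastnodetounderapprox \mapprojection \dom(\mapastnodetounderapprox^{\ast})) \Rightarrow \encodeuasemantics(\mapastnodetounderapprox^{\ast})$, i.e., finer implies coarser. That direction is the semantically correct one (a pinned choice $\underapproxvar_i = c$ entails the disjunction that $\encodechoice$ produces for $\unknownvalue$), but it is the \emph{reverse} of the paper's literal statements of Theorem~\ref{thm:semantics-refinement} and Corollary~\ref{cor:ua-semantics-refinement}, which assert the coarser encoding implies the finer one. So when writing this up you should either derive the implication you need directly from the definition of $\encodechoice$ (a one-line argument), or note the discrepancy explicitly; citing the corollary verbatim would hand you the implication in the unusable direction.
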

\begin{proof}
By line ~\ref{alg:resolveconflict:add-conflict} of Algorithm~\ref{alg:resolveconflict}, $\conflicts' = \conflicts \ \cup \ \{  \mapastnodetounderapprox \mapprojection \conflictASTnodes  \}$.  Also by line ~\ref{alg:resolveconflict:check-conflict} of Algorithm~\ref{alg:resolveconflict}, $\conflicts' = \conflicts' \cup \{ \mapastnodetounderapprox_{\conflictASTnodes} \}$ when $\encodingall_{\conflictASTnodes}$ is not satisfiable.  Then evidently, the result $\conflicts'$ will contain at least one additional element $\mapastnodetounderapprox \mapprojection \conflictASTnodes$ and $\conflicts'$ also contains all the elements of $\conflicts$.  Therefore, it holds that $\conflicts' \supset \conflicts$.
\end{proof}
% \begin{lemma} \label{lem:resolve-conflict-exhaust}
% Given a UA map $\mapastnodetounderapprox$, a subset of AST nodes $\conflictASTnodes$, an application condition $\appcond$, and a set of conflicts $\conflicts$, let $(\mapastnodetounderapprox', \conflicts')$ be the result of $\fixconflict(\mapastnodetounderapprox, \conflictASTnodes, \appcond, \conflicts)$. If $\mapastnodetounderapprox' = null$, for any UA map $
% \mapastnodetounderapprox^* =
% \big\{ \astnode_i \mapsto \underapprox_i \ | \ \underapprox_i \in \splitUAspace(\underapproxfamily_{\astnode_i}), \astnode_i \in \conflictASTnodes \big\}
% \uplus
% \big\{  \astnode \mapsto \topUA(\underapproxfamily_{\astnode}) \ | \  \astnode \in \dom(\mapastnodetounderapprox ) \setminus \conflictASTnodes \big\}
% $, $\encodeuasemantics(\mapastnodetounderapprox^*) \land \appcond$ is unsatisfiable.
% \end{lemma}
\begin{lemma} \label{lem:resolve-conflict-exhaust}
Given a UA map $\mapastnodetounderapprox$, a subset of AST nodes $\conflictASTnodes$, an application condition $\appcond$, and a set of conflicts $\conflicts$, let $(\mapastnodetounderapprox', \conflicts')$ be the result of $\fixconflict(\mapastnodetounderapprox, \conflictASTnodes, \appcond, \conflicts)$. All possible UA map $
\mapastnodetounderapprox^* =
\big\{ \astnode_i \mapsto \underapprox_i \ | \ \underapprox_i \in \splitUAspace(\underapproxfamily_{\astnode_i}), \astnode_i \in \conflictASTnodes \big\}
\uplus
\big\{  \astnode \mapsto \topUA(\underapproxfamily_{\astnode}) \ | \  \astnode \in \dom(\mapastnodetounderapprox ) \setminus \conflictASTnodes \big\}
$ collectively cover the space of $\big\{ \underapprox^* \in \underapproxfamily_{\astnode^*} \ | \ \astnode^* \in \mapastnodetounderapprox^* \big\} \setminus \conflicts$, and if for any $\mapastnodetounderapprox^*$, $\encodeuasemantics(\mapastnodetounderapprox^*) \land \appcond$ is unsatisfiable, then $\mapastnodetounderapprox' = null$.
\end{lemma}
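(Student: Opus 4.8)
The plan is to prove the two assertions---\emph{coverage} and \emph{exhaustiveness}---separately and then combine them. For coverage, write $\conflictASTnodes = \{\astnode_1, \mydots, \astnode_m\}$ and recall the covering-set property of $\splitUAspace$ stated in the Remarks of Section~\ref{sec:alg:resolve-conflict}: for each $\astnode_i$ and each minimal UA $\underapprox \in \underapproxfamily_{\astnode_i}$ there is some $\underapprox' \in \splitUAspace(\underapproxfamily_{\astnode_i})$ with $\underapprox' \subsumes \underapprox$. For each non-conflict node $\astnode \in \dom(\mapastnodetounderapprox) \setminus \conflictASTnodes$, the value $\topUA(\underapproxfamily_{\astnode})$ subsumes every UA in $\underapproxfamily_{\astnode}$ because the top value subsumes all non-top values. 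Taking the node-wise product, I would show that every minimal UA map $\mapastnodetounderapprox_{\min}$ over $\dom(\mapastnodetounderapprox)$ is subsumed by some $\mapastnodetounderapprox^*$, and that the \texttt{for} loop at line~\ref{alg:resolveconflict:loop-begin} enumerates exactly the finite Cartesian product that produces all such $\mapastnodetounderapprox^*$. The key transfer step is that satisfiability moves \emph{upward} along subsumption: any model of $\encodeuasemantics(\mapastnodetounderapprox_{\min}) \land \appcond$ is already a model of $\encodeuasemantics(\mapastnodetounderapprox^*) \land \appcond$, since replacing a minimal UA by a subsuming one only relaxes the corresponding $\encodechoice$ conjunct (a single equality becomes a weaker disjunction the model already satisfies) while leaving the full-semantics and linking conjuncts untouched---this is the monotonicity underlying Theorem~\ref{thm:semantics-refinement}. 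Finally, the conjunct $\neg\encodeconflicts(\conflicts')$ added at line~\ref{alg:resolveconflict:encode-conflicts} removes precisely the assignments that exhibit a known conflict, so the family $\{\mapastnodetounderapprox^*\}$ together with this pruning covers exactly the space minus $\conflicts$.

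For exhaustiveness I would argue directly from the control flow of Algorithm~\ref{alg:resolveconflict}. The procedure has only two exit points: line~\ref{alg:resolveconflict:return-M'}, reached within the loop when both $\encodingall_{\conflictASTnodes}$ (line~\ref{alg:resolveconflict:encode-semantics-at-V}) and $\encodingall'$ (line~\ref{alg:resolveconflict:encode-conflicts}) are satisfiable in some iteration, and line~\ref{alg:resolveconflict:return-null}, reached only after the loop has run to completion. Since the loop ranges over the finite product from line~\ref{alg:resolveconflict:loop-begin}, it terminates; hence if no iteration yields a satisfiable $\encodingall'$ (with its prerequisite satisfiable $\encodingall_{\conflictASTnodes}$), the procedure falls through and returns $null$. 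It then remains to connect the hypothesis---unsatisfiability of $\encodeuasemantics(\mapastnodetounderapprox^*) \land \appcond$ for every loop-generated $\mapastnodetounderapprox^*$---to the unsatisfiability of the formulas the algorithm actually tests in each iteration.

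That last connection is where I expect the real difficulty, because the algorithm never evaluates the full encoding $\encodeuasemantics(\mapastnodetounderapprox^*)$: by design (again, see the Remarks), $\encodingall'$ keeps the full semantics only for the nodes in $\conflictASTnodes$ and encodes the non-conflict nodes through their UA \emph{choices} alone. Thus I must pin down the exact logical relationship between $\encodingall'$ and $\encodeuasemantics(\mapastnodetounderapprox^*) \land \appcond$. One direction is easy: setting the $\neg\encodeconflicts(\conflicts')$ conjunct aside, the full encoding only \emph{adds} the dropped non-conflict-semantics and linking conjuncts, so it entails $\encodingall'$ and any model of the full encoding restricts to a model of $\encodingall'$. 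The delicate direction is complicated by \emph{shared input relations}---a leaf relation may appear in both a conflict and a non-conflict subtree---so a model of $\encodingall'$, where the non-conflict outputs are unconstrained, cannot in general be extended to a model of the full encoding without perturbing the inputs fed into the conflict nodes. I would therefore establish the needed equivalence by aligning the granularity of the encoding named in the statement with the one the algorithm actually tests (full semantics on $\conflictASTnodes$, UA choices elsewhere), after which unsatisfiability of $\encodeuasemantics(\mapastnodetounderapprox^*) \land \appcond$ coincides, modulo $\neg\encodeconflicts$, with unsatisfiability of the per-iteration $\encodingall'$, and the control-flow argument closes to give $\mapastnodetounderapprox' = null$. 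Getting this correspondence and the direction of its implication exactly right is the crux of the proof.
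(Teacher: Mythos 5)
Your proof has the same skeleton as the paper's: the coverage half follows from the covering-set property of $\splitUAspace$ together with the fact that $\topUA(\underapproxfamily_{\astnode})$ subsumes every UA at the non-conflict nodes, and the $null$ half is a control-flow argument over the finite loop of Algorithm~\ref{alg:resolveconflict}. The coverage half is fine (and more careful than the paper's two-sentence version). The gap is in the second half, and you name it yourself but never close it: the lemma's hypothesis concerns $\encodeuasemantics(\mapastnodetounderapprox^*) \land \appcond$, i.e.\ full semantics at every node of $\dom(\mapastnodetounderapprox)$, whereas the algorithm only ever tests $\encodingall_{\conflictASTnodes}$ and $\encodingall'$, which carry full semantics for $\conflictASTnodes$ only. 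To conclude that ``all $\mapastnodetounderapprox^*$ unsat implies return $null$'' you need the implication: satisfiability of the per-iteration $\encodingall'$ implies satisfiability of the corresponding full encoding $\encodeuasemantics(\mapastnodetounderapprox^*) \land \appcond$. As you observe, this is exactly the direction that does \emph{not} follow from entailment, and it genuinely fails when an input relation is shared between a conflict subtree and a non-conflict subtree: a model of $\encodingall'$ may set the unconstrained non-conflict outputs arbitrarily (say, to violate an equality that the real semantics of those nodes would force), so it need not extend to a model of the full encoding. Your closing move---``align the granularity of the encoding named in the statement with the one the algorithm actually tests''---is a proposal to restate the lemma, not a derivation; the attempt is therefore incomplete precisely at the step you flag as the crux.

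For comparison, the paper's own proof does not close this gap either: it asserts, without justification, that ``upon the termination of the loop without invoking line~\ref{alg:resolveconflict:return-M'}, there does not exist an $\mapastnodetounderapprox^*$ such that $\encodeuasemantics(\mapastnodetounderapprox^*) \land \appcond$ is satisfiable,'' and then concludes. Note that this asserted sentence is the \emph{converse} implication (return $null$ implies all $\mapastnodetounderapprox^*$ unsat); that direction is the one your ``easy'' entailment actually proves---a model of the full encoding satisfies $\encodingall_{\conflictASTnodes}$, the choice constraints for the remaining nodes, and, by Lemma~\ref{lem:omega-all-incorrect-ua}, also $\neg\encodeconflicts(\conflicts')$, so the loop would have exited at line~\ref{alg:resolveconflict:return-M'}---and it is also the only direction that Lemma~\ref{lem:completeness-ua-search} uses downstream. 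So your diagnosis of the soft spot is sharper than the paper's treatment of it, but neither your attempt nor your suggested repair yields the lemma as literally stated.
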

\begin{proof}
By line ~\ref{alg:resolveconflict:loop-begin} of Algorithm~\ref{alg:resolveconflict}, the loop enumerates each possible covering set $S \subseteq \underapproxfamily_{\astnode_i}$ returned by $\splitUAspace(\underapproxfamily_{\astnode_i})$. 
Thus, the UAs represented by all $S$ collectively cover the entire space of $\underapproxfamily_{\astnode_i}$ for any AST node $\astnode_i \in \conflictASTnodes$.  Also, for an AST node $\astnode \in \dom(\mapastnodetounderapprox ) \setminus \conflictASTnodes$, $\mapastnodetounderapprox^*(\astnode) = \topUA(\underapproxfamily_\astnode)$.
Therefore, all possible $\mapastnodetounderapprox^*$ cover the UAs of $\big\{ \underapprox^* \in \underapproxfamily_{\astnode^*} \ | \ \astnode^* \in \mapastnodetounderapprox^* \big\} \setminus \conflicts$.

We know that upon the termination of the loop without invoking line~\ref{alg:resolveconflict:return-M'} of Algorithm~\ref{alg:resolveconflict}, it holds that there does not exist an $\mapastnodetounderapprox^*$ such that $\encodeuasemantics(\mapastnodetounderapprox^*) \land \appcond$ is satisfiable.  Since by line ~\ref{alg:resolveconflict:loop-begin} of Algorithm~\ref{alg:resolveconflict} and the property of $\splitUAspace$, there are finitely possible covering sets.  Then it holds that if no $
\mapastnodetounderapprox^* =
\big\{ \astnode_i \mapsto \underapprox_i \ | \ \underapprox_i \in \splitUAspace(\underapproxfamily_{\astnode_i}), \astnode_i \in \conflictASTnodes \big\}
\uplus
\big\{  \astnode \mapsto \topUA(\underapproxfamily_{\astnode}) \ | \  \astnode \in \dom(\mapastnodetounderapprox ) \setminus \conflictASTnodes \big\}
$ makes $\encodeuasemantics(\mapastnodetounderapprox^*) \land \appcond$ satisfiable, the Algorithm~\ref{alg:resolveconflict} terminates with $\mapastnodetounderapprox' = null$.
\end{proof}

\begin{lemma}[Completeness of Conflict-Driven UA Search]\label{lem:completeness-ua-search}
Given queries $\sqlquery_1, \mydots, \sqlquery_n$ and an application condition $\appcond$, $\conflictdrivenUAsearch(\sqlquery_1, \mydots, \sqlquery_n, \appcond)$ terminates, and if it returns $null$, then for any UA map $\mapastnodetounderapprox$ where all AST nodes have minimal UAs, $\encodeuasemantics(\mapastnodetounderapprox) \land \appcond$ is unsatisfiable.
\end{lemma}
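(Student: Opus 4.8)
The plan is to prove two separate claims: that $\conflictdrivenUAsearch$ always halts, and that a $null$ return certifies that \emph{no} minimal UA map satisfies the task. Both parts rest on the three auxiliary lemmas (Lemmas~\ref{lem:omega-all-incorrect-ua}, \ref{lem:omega-progress}, and~\ref{lem:resolve-conflict-exhaust}), the refinement property of the encoding (Theorem~\ref{thm:semantics-refinement}, lifted to maps by Corollary~\ref{cor:ua-semantics-refinement}), and one elementary monotonicity observation that I will record first: since $\encode(\mapastnodetounderapprox)$ is a conjunction indexed by the entries of $\mapastnodetounderapprox$, for any $\conflictASTnodes \subseteq \dom(\mapastnodetounderapprox)$ the formula $\encode(\mapastnodetounderapprox \mapprojection \conflictASTnodes)$ occurs as a conjunct of $\encode(\mapastnodetounderapprox')$ for \emph{every} map $\mapastnodetounderapprox'$ that agrees with $\mapastnodetounderapprox$ on $\conflictASTnodes$. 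Consequently, if $\encode(\mapastnodetounderapprox \mapprojection \conflictASTnodes) \land \appcond$ is unsatisfiable, then so is $\encode(\mapastnodetounderapprox') \land \appcond$ for any such extension.

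For termination, I would first observe that each $\fixconflict$ call terminates, because its loop (Algorithm~\ref{alg:resolveconflict}, line~\ref{alg:resolveconflict:loop-begin}) ranges over a finite product of the finite covering sets returned by $\splitUAspace$. For the main loop of Algorithm~\ref{alg:conflict-driven-UA-search} I would use a well-foundedness argument. The worklist $\worklist$ is finite at the outset and never grows, and every satisfiable iteration that does not return removes at least one node from it (line~\ref{alg:conflict-driven-UA-search:add-more-nodes}); hence only finitely many satisfiable iterations can occur. Each unsatisfiable iteration invokes $\fixconflict$, which by Lemma~\ref{lem:omega-progress} yields a conflict set $\conflicts'$ that is a \emph{proper} superset of its input. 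Since every AST node has a finite UA family and there are finitely many nodes, the set of all possible partial UA sub-maps is finite, so $\conflicts$ can strictly grow only finitely often; thus only finitely many unsatisfiable iterations occur. Adding the two finite bounds gives a finite total iteration count.

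For completeness, suppose $\conflictdrivenUAsearch$ returns $null$. This happens only at line~\ref{alg:conflict-driven-UA-search:return-null}, when the last $\fixconflict$ call returned $null$; let $\mapastnodetounderapprox$, $\conflictASTnodes$ be its arguments and $\conflicts'$ its final conflict set. I would argue that \emph{every} assignment of minimal UAs to the nodes of $\conflictASTnodes$ produces an unsatisfiable encoding. First, every member of $\conflicts'$ is genuine: the seed conflict $\mapastnodetounderapprox \mapprojection \conflictASTnodes$ added at line~\ref{alg:resolveconflict:add-conflict} is unsatisfiable because $\conflictASTnodes$ is an unsat core of the failed $\encodingall$ (soundness of $\extractconflict$), and Lemma~\ref{lem:omega-all-incorrect-ua} propagates genuineness to all of $\conflicts'$. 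Now fix an arbitrary minimal-UA assignment $\rho$ over $\conflictASTnodes$. By the covering property of Lemma~\ref{lem:resolve-conflict-exhaust}, either $\rho$ refines some enumerated combination $\mapastnodetounderapprox^{*}$, or $\rho$ refines a member of $\conflicts$. In the first case, because $null$ was returned, $\mapastnodetounderapprox^{*}$ failed the checks of lines~\ref{alg:resolveconflict:check-conflict} and~\ref{alg:resolveconflict:check-conflicts}, so $\encode(\mapastnodetounderapprox^{*} \mapprojection \conflictASTnodes) \land \appcond$ is unsatisfiable; since $\rho$ refines $\mapastnodetounderapprox^{*}$ on $\conflictASTnodes$, the refinement property makes $\encode(\rho)$ entail $\encode(\mapastnodetounderapprox^{*}\mapprojection\conflictASTnodes)$, so $\encode(\rho) \land \appcond$ is unsatisfiable as well. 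In the second case $\rho$ refines a genuine conflict, again giving unsatisfiability. Finally, for an arbitrary full minimal UA map $\widehat{\mapastnodetounderapprox}$ over all AST nodes of $\sqlquery_1, \mydots, \sqlquery_n$, its restriction to $\conflictASTnodes$ is one such $\rho$, and by the monotonicity observation $\encode(\widehat{\mapastnodetounderapprox}) \land \appcond$ is unsatisfiable, which is exactly the claim.

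The main obstacle I anticipate is making the completeness covering argument airtight. I must confirm that the enumerated covering maps $\mapastnodetounderapprox^{*}$ \emph{together with} the already-known conflicts in $\conflicts$ genuinely exhaust the whole space of minimal-UA assignments over $\conflictASTnodes$ — this is precisely where the statement of Lemma~\ref{lem:resolve-conflict-exhaust} and the $\neg\encodeconflicts(\conflicts')$ clause at line~\ref{alg:resolveconflict:encode-conflicts} must be reconciled — and that the refinement property is applied in the direction that propagates \emph{unsatisfiability} (not satisfiability) downward from a subsuming UA to each minimal UA it covers. The remaining bookkeeping, which relates $\conflictASTnodes$ (a subset of the partial map's domain) back to an arbitrary full minimal map over all queries via the sub-conjunction monotonicity, should be routine.
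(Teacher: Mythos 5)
Your proposal follows the same route as the paper's own proof: termination from Lemma~\ref{lem:omega-progress} (strict growth of $\conflicts$) plus finiteness of the UA space, and the $null$-implies-unsat direction by splitting the space of minimal UA maps into those covered by the enumeration of Lemma~\ref{lem:resolve-conflict-exhaust} and those embedding a known conflict, the latter handled by Lemma~\ref{lem:omega-all-incorrect-ua}. In several places you are in fact more careful than the paper: you bound the satisfiable iterations via the shrinking worklist (the paper only bounds the number of $\fixconflict$ calls), you state the sub-conjunction monotonicity explicitly so that unsatisfiability over $\conflictASTnodes$ lifts to full maps, and you apply refinement in the direction $\encode(\rho) \Rightarrow \encode(\mapastnodetounderapprox^{*} \mapprojection \conflictASTnodes)$ for minimal $\rho$ refining $\mapastnodetounderapprox^{*}$, which is the direction the choice encoding actually satisfies and the one that propagates unsatisfiability downward (note this is the reverse of the implication as literally stated in Theorem~\ref{thm:semantics-refinement}; your usage is the right one for this argument).

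The one step that fails as written is your first case: ``because $null$ was returned, $\mapastnodetounderapprox^{*}$ failed the checks of lines~\ref{alg:resolveconflict:check-conflict} and~\ref{alg:resolveconflict:check-conflicts}, so $\encode(\mapastnodetounderapprox^{*} \mapprojection \conflictASTnodes) \land \appcond$ is unsatisfiable.'' That conclusion is valid only for combinations rejected at line~\ref{alg:resolveconflict:check-conflict}. A combination can pass line~\ref{alg:resolveconflict:check-conflict} (so $\encode(\mapastnodetounderapprox^{*} \mapprojection \conflictASTnodes) \land \appcond$ is \emph{satisfiable}) and still fail line~\ref{alg:resolveconflict:check-conflicts}, because $\encodingall'$ additionally conjoins the top-UA choice constraints for nodes outside $\conflictASTnodes$ and $\neg \encodeconflicts(\conflicts')$. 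For that case the argument must go differently: if some full minimal map $\rho$ refining $\mapastnodetounderapprox^{*}$ on $\conflictASTnodes$ had $\encode(\rho) \land \appcond$ satisfiable, its model would satisfy $\encodingall_{\conflictASTnodes}$ and the top-choice constraints, so unsatisfiability of $\encodingall'$ forces that model to satisfy $\encodeconflicts(\conflicts')$, i.e., $\rho$ embeds a conflict of $\conflicts'$; genuineness of all conflicts (your first point, via Lemma~\ref{lem:omega-all-incorrect-ua}) together with your monotonicity observation then contradicts satisfiability of $\encode(\rho) \land \appcond$. You flag exactly this reconciliation as your ``main obstacle,'' so you located the hole correctly, but the proof is incomplete until that case is carried out; for what it is worth, the paper's own proof of Lemma~\ref{lem:resolve-conflict-exhaust} elides the same case by treating $\encodingall'$ as if it were $\encode(\mapastnodetounderapprox^{*}) \land \appcond$.
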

\begin{proof}
By Lemma ~\ref{lem:omega-progress}, the conflicts $\conflicts$ strictly grows with additional conflicts each time the procedure $\fixconflict$ is invoked at line ~\ref{alg:conflict-driven-UA-search:resolve-conflict} of Algorithm~\ref{alg:conflict-driven-UA-search}.  Since $\conflicts$ has a finite upper bound, which is the search space of the UA family of all AST nodes (a finite space), it therefore can be proved that $\conflictdrivenUAsearch$ is terminating as the procedure $\fixconflict$ may be called by a finite number of times.

By line ~\ref{alg:conflict-driven-UA-search:resolve-conflict}, line ~\ref{alg:conflict-driven-UA-search:return-null} of Algorithm~\ref{alg:conflict-driven-UA-search}, and Lemma ~\ref{lem:resolve-conflict-exhaust}, if $\mapastnodetounderapprox = null$, then for any $M = \big\{ \underapprox^* \in \underapproxfamily_{\astnode^*} \ | \ \astnode^* \in \mapastnodetounderapprox^* \big\} \setminus \conflicts$, $\encodeuasemantics(\mapastnodetounderapprox) \land \appcond$ is unsatisfiable.  Also by Lemma ~\ref{lem:omega-all-incorrect-ua}, any $\mapastnodetounderapprox' \in \conflicts$ makes $\encodeuasemantics(\mapastnodetounderapprox') \land \appcond$ unsatisfiable.  Thus it holds that if $\mapastnodetounderapprox = null$, then for any UA map $\mapastnodetounderapprox$ where all AST nodes have minimal UAs, $\encodeuasemantics(\mapastnodetounderapprox) \land \appcond$ is unsatisfiable.
% \todo{Analyze Algorithm 2, especially the branch where Phi is unsat.
% $\conflicts'$ has a finite upper bound.
% By Lemmas~\ref{lem:omega-all-incorrect-ua}, \ref{lem:omega-progress}, \ref{lem:resolve-conflict-exhaust}.}
\end{proof}

\textsc{Theorem}~\ref{thm:completeness}.
Given queries $\sqlquery_1, \mydots, \sqlquery_n$ and application condition $\appcond$, if $\geninput$ returns $null$, then there does not exist an input $\inputdatabase$ (with respect to the semantics presented in Figure~\ref{fig:full-semantics}) for which $\appcond[ \queryoutput_1 \mapsto \sqlquery_1(\inputdatabase), \mydots, \queryoutput_n \mapsto \sqlquery_n(\inputdatabase)]$ is true. 
\begin{proof}
By line ~\ref{alg:top-level:call-search}, line ~\ref{alg:top-level:null} of Algorithm ~\ref{alg:top-level}, and Lemma~\ref{lem:completeness-ua-search}, if $\conflictdrivenUAsearch$ returns $null$, then it has for any UA map $\mapastnodetounderapprox$ where all AST nodes have minimal UAs, $\encodeuasemantics(\mapastnodetounderapprox) \land \appcond$ is unsatisfiable.  Therefore, it holds that if $\geninput(\sqlquery_1, \mydots, \sqlquery_n, \appcond)$ returns $null$, then there does not exist a database $\inputdatabase$ satisfies $\appcond$.
\end{proof}

\section{Encoding Disambiguation Condition} \label{sec:disambiguation-encoding}
In this section, we present the detailed encoding of the disambiguation condition.

Given $\sqlquery_1, \mydots, \sqlquery_n$, the disambiguation condition $\appcond$ that evenly splits $n$ queries into $m$ clusters is encoded as a formula $\Phi$ as follows.

\begin{align*}
\Phi & = \bigland_{i = 1, \mydots, n} \Big( \land_{j = 1}^m (\querybelongstocluster(n, m) \to \denot{P_i} = R_m) \land \sum_{j=1}^{m} \indicator(\querybelongstocluster(n, m)) = 1 \Big) \\
& \land \bigland_{i = 1, \mydots, m} \Big( \sum_{j=1}^{n} \indicator(\querybelongstocluster(n, m)) = \frac{n}{m} \land \land_{i'=1}^{m} (i \neq i' \to R_i \neq R'_i) \Big)
\end{align*}
where $R_1, \mydots, R_m$ are fresh table variables, $b$ is an uninterpreted function that takes as input two integers $n, m$ and returns a boolean indicating if $\sqlquery_n$ belongs to the group $m$.  In our evaluation, we used $n= 50, 100$ for D-50 and D-100, respectively, and $m$ is always 2.
% \begin{align}

% \Phi = & \bigland_{i = 1, \mydots, n} \Big( \land_{j = 1}^m (\querybelongstocluster(n, m) \to \denot{P_i} = R_m) \land \sum_{j=1}^{m} \indicator(\querybelongstocluster(n, m)) = 1 \Big)
% & \land \bigland_{i = 1, \mydots, m} \Big( \sum_{j=1}^{n} \indicator(\querybelongstocluster(n, m)) = \frac{n}{m} \land \neg \lor_{i'=1}^{m} (i \neq i' \to R_i \neq R'_i) \Big)

% \[ \Phi = \bigland_{i = 1, \mydots, n} \Big( \land_{j = 1}^m (\querybelongstocluster(n, m) \to \denot{P_i} = R_m) \land \sum_{j=1}^{m} \indicator(\querybelongstocluster(n, m)) = 1 \Big) \land \bigland_{i = 1, \mydots, m} \Big( \sum_{j=1}^{n} \indicator(\querybelongstocluster(n, m)) = \frac{n}{m} \land \neg \lor_{i'=1}^{m} (i \neq i' \to R_i \neq R'_i) \Big) \]

% \end{align}

\end{document}